\newtheorem{theorem}{Theorem}[section]
\newtheorem{corollary}[theorem]{Corollary}
\newtheorem{lemma}[theorem]{Lemma}
\theoremstyle{definition}
\newtheorem{definition}[theorem]{Definition}
\theoremstyle{remark}
\newtheorem{remark}[theorem]{Remark}
\numberwithin{equation}{section}
\newcommand{\abs}[1]{\lvert#1\rvert}
\begin{document}

\title[Quantum ergodicity on large regular graphs]{Quantum ergodicity on large regular graphs}

%    Information for first author
\author{Nalini Anantharaman}
\address{Universit\'{e} Paris-Sud 11, Math\'{e}matiques, B\^{a}t. 425, 91405 ORSAY
CEDEX, FRANCE}
\email{Nalini.Anantharaman@math.u-psud.fr}
\thanks{This material is based upon work supported by the National Science Foundation under agreement no. DMS 1128155, by the Fernholz foundation and by Agence Nationale de la Recherche under
the grant ANR-09-JCJC-0099-01. Any opinions, findings and conclusions or recommendations expressed in this material are those of the authors and do not necessarily reflect the views of the NSF}
%    Information for second author
\author{Etienne Le~Masson}
\address{Universit\'{e} Paris-Sud 11, Math\'{e}matiques, B\^{a}t. 425, 91405 ORSAY
CEDEX, FRANCE}
\email{Etienne.Lemasson@math.u-psud.fr}

\newcommand{\nwc}{\newcommand}
\nwc{\nwt}{\newtheorem}
\nwt{coro}{Corollary}
\nwt{ex}{Example}
\nwt{prop}{Proposition}
\nwt{defin}{Definition}

%font change

\nwc{\mf}{\mathbf} %Latex (as in \bf not tilted math letters)
\nwc{\blds}{\boldsymbol} %Latex 
\nwc{\ml}{\mathcal} %Latex

%greek letters

\nwc{\lam}{\lambda}
\nwc{\del}{\delta}
\nwc{\Del}{\Delta}
\nwc{\Lam}{\Lambda}
\nwc{\elll}{\ell}
%blackboard bold math

\nwc{\IA}{\mathbb{A}} %algebraic
\nwc{\IB}{\mathbb{B}} %ball
\nwc{\IC}{\mathbb{C}} %complex
\nwc{\ID}{\mathbb{D}} %Dedekind
\nwc{\IE}{\mathbb{E}} %Euklides
\nwc{\IF}{\mathbb{F}} %finite field
\nwc{\IG}{\mathbb{G}} %Gauss
\nwc{\IH}{\mathbb{H}} %Hilbert\N-subgroup
\nwc{\IN}{\mathbb{N}} %natural
\nwc{\IP}{\mathbb{P}} %prime
\nwc{\IQ}{\mathbb{Q}} %rational
\nwc{\IR}{\mathbb{R}} %real
\nwc{\IS}{\mathbb{S}} %sphere
\nwc{\IT}{\mathbb{T}} %torus
\nwc{\IZ}{\mathbb{Z}} %integers
\def\bbbone{{\mathchoice {1\mskip-4mu {\rm{l}}} {1\mskip-4mu {\rm{l}}}
{ 1\mskip-4.5mu {\rm{l}}} { 1\mskip-5mu {\rm{l}}}}}
\def\bbleft{{\mathchoice {[\mskip-3mu {[}} {[\mskip-3mu {[}}{[\mskip-4mu {[}}{[\mskip-5mu {[}}}}
\def\bbright{{\mathchoice {]\mskip-3mu {]}} {]\mskip-3mu {]}}{]\mskip-4mu {]}}{]\mskip-5mu {]}}}}
\nwc{\setK}{\bbleft 1,K \bbright}
\nwc{\setN}{\bbleft 1,\cN \bbright}
 \newcommand{\Lim}{\mathop{\longrightarrow}\limits}
%Straight (vector) bold letters

%lowercase

\nwc{\va}{{\bf a}}
\nwc{\vb}{{\bf b}}
\nwc{\vc}{{\bf c}}
\nwc{\vd}{{\bf d}}
\nwc{\ve}{{\bf e}}
\nwc{\vf}{{\bf f}}
\nwc{\vg}{{\bf g}}
\nwc{\vh}{{\bf h}}
\nwc{\vi}{{\bf i}}
\nwc{\vI}{{\bf I}}
\nwc{\vj}{{\bf j}}
\nwc{\vk}{{\bf k}}
\nwc{\vl}{{\bf l}}
\nwc{\vm}{{\bf m}}
\nwc{\vM}{{\bf M}}
\nwc{\vn}{{\bf n}}
\nwc{\vo}{{\it o}}
\nwc{\vp}{{\bf p}}
\nwc{\vq}{{\bf q}}
\nwc{\vr}{{\bf r}}
\nwc{\vs}{{\bf s}}
\nwc{\vt}{{\bf t}}
\nwc{\vu}{{\bf u}}
\nwc{\vv}{{\bf v}}
\nwc{\vw}{{\bf w}}
\nwc{\vx}{{\bf x}}
\nwc{\vy}{{\bf y}}
\nwc{\vz}{{\bf z}}
\nwc{\bal}{\blds{\alpha}}
\nwc{\bep}{\blds{\epsilon}}
\nwc{\barbep}{\overline{\blds{\epsilon}}}
\nwc{\bnu}{\blds{\nu}}
\nwc{\bmu}{\blds{\mu}}
\nwc{\bet}{\blds{\eta}}

%bold letters
%\b* letters are tilted in math mode and scale in equations. 
%but cannot be used in plain text format.

%I. lowercase

\nwc{\bk}{\blds{k}}
\nwc{\bm}{\blds{m}}
\nwc{\bM}{\blds{M}}
\nwc{\bp}{\blds{p}}
\nwc{\bq}{\blds{q}}
\nwc{\bn}{\blds{n}}
\nwc{\bv}{\blds{v}}
\nwc{\bw}{\blds{w}}
\nwc{\bx}{\blds{x}}
\nwc{\bxi}{\blds{\xi}}
\nwc{\by}{\blds{y}}
\nwc{\bz}{\blds{z}}

%caligraphic

\nwc{\cA}{\ml{A}}
\nwc{\cB}{\ml{B}}
\nwc{\cC}{\ml{C}}
\nwc{\cD}{\ml{D}}
\nwc{\cE}{\ml{E}}
\nwc{\cF}{\ml{F}}
\nwc{\cG}{\ml{G}}
\nwc{\cH}{\ml{H}}
\nwc{\cI}{\ml{I}}
\nwc{\cJ}{\ml{J}}
\nwc{\cK}{\ml{K}}
\nwc{\cL}{\ml{L}}
\nwc{\cM}{\ml{M}}
\nwc{\cN}{\ml{N}}
\nwc{\cO}{\ml{O}}
\nwc{\cP}{\ml{P}}
\nwc{\cQ}{\ml{Q}}
\nwc{\cR}{\ml{R}}
\nwc{\cS}{\ml{S}}
\nwc{\cT}{\ml{T}}
\nwc{\cU}{\ml{U}}
\nwc{\cV}{\ml{V}}
\nwc{\cW}{\ml{W}}
\nwc{\cX}{\ml{X}}
\nwc{\cY}{\ml{Y}}
\nwc{\cZ}{\ml{Z}}

\nwc{\fA}{\mathfrak{a}}
\nwc{\fB}{\mathfrak{b}}
\nwc{\fC}{\mathfrak{c}}
\nwc{\fD}{\mathfrak{d}}
\nwc{\fE}{\mathfrak{e}}
\nwc{\fF}{\mathfrak{f}}
\nwc{\fG}{\mathfrak{g}}
\nwc{\fH}{\mathfrak{h}}
\nwc{\fI}{\mathfrak{i}}
\nwc{\fJ}{\mathfrak{j}}
\nwc{\fK}{\mathfrak{k}}
\nwc{\fL}{\mathfrak{l}}
\nwc{\fM}{\mathfrak{m}}
\nwc{\fN}{\mathfrak{n}}
\nwc{\fO}{\mathfrak{o}}
\nwc{\fP}{\mathfrak{p}}
\nwc{\fQ}{\mathfrak{q}}
\nwc{\fR}{\mathfrak{r}}
\nwc{\fS}{\mathfrak{s}}
\nwc{\fT}{\mathfrak{t}}
\nwc{\fU}{\mathfrak{u}}
\nwc{\fV}{\mathfrak{v}}
\nwc{\fW}{\mathfrak{w}}
\nwc{\fX}{\mathfrak{x}}
\nwc{\fY}{\mathfrak{y}}
\nwc{\fZ}{\mathfrak{z}}

%% (wide)tilde letters

\nwc{\tA}{\widetilde{A}}
\nwc{\tB}{\widetilde{B}}
\nwc{\tE}{E^{\vareps}}
%\nwc{\tcO}{\widetilde{\mathcal{O}}}
\nwc{\tk}{\tilde k}
\nwc{\tN}{\tilde N}
\nwc{\tP}{\widetilde{P}}
\nwc{\tQ}{\widetilde{Q}}
\nwc{\tR}{\widetilde{R}}
\nwc{\tV}{\widetilde{V}}
\nwc{\tW}{\widetilde{W}}
\nwc{\ty}{\tilde y}
\nwc{\teta}{\tilde \eta}
\nwc{\tdelta}{\tilde \delta}
\nwc{\tlambda}{\tilde \lambda}
%\nwc{\tchi}{\tilde \chi}
\nwc{\ttheta}{\tilde \theta}
\nwc{\tvartheta}{\tilde \vartheta}
\nwc{\tPhi}{\widetilde \Phi}
\nwc{\tpsi}{\tilde \psi}
\nwc{\tmu}{\tilde \mu}

%miscellany
\nwc{\To}{\longrightarrow} %limits

\nwc{\ad}{\rm ad}
\nwc{\eps}{\epsilon}
\nwc{\ep}{\epsilon}
\nwc{\vareps}{\varepsilon}

\def\bom{\mathbf{\omega}}
\def\om{{\omega}}
\def\ep{\epsilon}
\def\tr{{\rm tr}}
\def\diag{{\rm diag}}
\def\Tr{{\rm Tr}}
\def\i{{\rm i}}
\def\mi{{\rm i}}
\def\e{{\rm e}}
\def\sq2{\sqrt{2}}
\def\sqn{\sqrt{N}}
\def\vol{\mathrm{vol}}
\def\defi{\stackrel{\rm def}{=}}
\def\t2{{\mathbb T}^2}
%\def\tt2{{\mathbb T}^2}
%\nwc{\t1}{{\mathbb T}^1}
\def\s2{{\mathbb S}^2}
\def\hn{\mathcal{H}_{N}}
\def\shbar{\sqrt{\hbar}}
\def\A{\mathcal{A}}
\def\N{\mathbb{N}}
\def\T{\mathbb{T}}
\def\R{\mathbb{R}}
\def\RR{\mathbb{R}}
\def\Z{\mathbb{Z}}
\def\C{\mathbb{C}}
\def\O{\mathcal{O}}
\def\Sp{\mathcal{S}_+}
\def\Lap{\triangle}
\nwc{\lap}{\bigtriangleup}
\nwc{\rest}{\restriction}
\nwc{\Diff}{\operatorname{Diff}}
\nwc{\diam}{\operatorname{diam}}
\nwc{\Res}{\operatorname{Res}}
\nwc{\Spec}{\operatorname{Spec}}
\nwc{\Vol}{\operatorname{Vol}}
\nwc{\Op}{\operatorname{Op}}
\nwc{\supp}{\operatorname{supp}}
\nwc{\Span}{\operatorname{span}}

\nwc{\dia}{\varepsilon}
\nwc{\cut}{f}
\nwc{\qm}{u_\hbar}

\def\hto0{\xrightarrow{\hbar\to 0}}
\def\htoo{\stackrel{h\to 0}{\longrightarrow}}
\def\rto0{\xrightarrow{r\to 0}}
\def\rtoo{\stackrel{r\to 0}{\longrightarrow}}
\def\ntoinf{\xrightarrow{n\to +\infty}}

\providecommand{\abs}[1]{\lvert#1\rvert}
\providecommand{\norm}[1]{\lVert#1\rVert}
\providecommand{\set}[1]{\left\{#1\right\}}

\nwc{\la}{\langle}
\nwc{\ra}{\rangle}
\nwc{\lp}{\left(}
\nwc{\rp}{\right)}

%\nwc{\bal}{\begin{align}}
\nwc{\bequ}{\begin{equation}}
\nwc{\be}{\begin{equation}}
\nwc{\ben}{\begin{equation*}}
\nwc{\bea}{\begin{eqnarray}}
\nwc{\bean}{\begin{eqnarray*}}
\nwc{\bit}{\begin{itemize}}
\nwc{\bver}{\begin{verbatim}}

%\nwc{\eal}{\end{align}}
\nwc{\eequ}{\end{equation}}
\nwc{\ee}{\end{equation}}
\nwc{\een}{\end{equation*}}
\nwc{\eea}{\end{eqnarray}}
\nwc{\eean}{\end{eqnarray*}}
\nwc{\eit}{\end{itemize}}
\nwc{\ever}{\end{verbatim}}

\newcommand{\defeq}{\stackrel{\rm{def}}{=}}

%    General info
\subjclass{58J51, 60B20}
\date{}

\keywords{large random graphs, laplacian eigenfunctions, quantum ergodicity, semiclassical measures}

\begin{abstract}We propose a version of the Quantum Ergodicity theorem on large regular graphs of fixed valency. This is a property of delocalization of ``most'' eigenfunctions. We consider expander graphs with few short cycles (for instance random large regular graphs). Our method mimics the proof of Quantum Ergodicity on manifolds~: it uses microlocal analysis on regular trees, as introduced in \cite{LM}.
 \end{abstract}

\maketitle

\section{Introduction and main results}
It has been suggested by Kottos and Smilansky that graphs are a good ground of exploration of the ideas of ``quantum chaos'' \cite{KotSmi97,KotSmi99}. This means that the spectrum of the laplacian, as well as its eigenfunctions, should exhibit universal features that depend only on qualitative geometric properties of the graph. Whereas spectral statistics have been extensively studied, both numerically and analytically, the localization of eigenfunctions have (to our knowledge) only been investigated in a few models~: the {\em star} graphs (both metric and discrete) \cite{BKW04, KMW03}, the large {\em regular} discrete graphs \cite{Smi10, BL, Dumitriu}, and a family of metric graphs arising from measure preserving $1$-dimensional dynamical systems \cite{BKS07}. For the latter, a version of the ``Quantum Ergodicity theorem'' (also known as Shnirelman theorem) has been established. For star graphs, the paper
\cite{BKW04} shows on the opposite that ``Quantum Ergodicity'' holds neither in the high frequency limit nor in the large graph limit. Furthermore it shows there are eigenfunctions that localise on two bonds of the graph. Spectral properties of large regular \emph{discrete} graphs have been studied in \cite{MF91, LR96, JakMilRivR, Ter99, Smi07} but eigenfunctions have attracted attention only recently. A statistical study of the auto-correlations and the level sets of eigenvectors appeared in the papers \cite{Elon, ElonSmi} that introduce a random wave model (see also \cite{Gnu10} for a random wave model on metric graphs). The paper \cite{BL} has pioneered the study of quantum ergodicity on large regular graphs -- that is to say, the study of the spatial distribution of eigenfunctions of the laplacian. The result of \cite{BL} shows some form of delocalization of eigenfunctions~:

\begin{theorem}\cite{BL} \label{t:BL}Let $(G_n)$ be a sequence of $(q+1)$-regular graphs (with $q$ fixed), $G_n=(V_n, E_n)$ with $V_n =\{1,\ldots, n\}$.
Assume that\footnote{This assumption holds in particular if the injectivity radius is $\geq c \ln n$. The interest of the weaker assumption is that it holds for typical random regular graphs \cite{MKWW}.} there exists $c>0, \delta>0$ such that, for any $k\leq c\ln n$, for any pair of vertices $x, y\in V_n$,
$$|\{\mbox{paths of length } k \mbox{ in } G_n \mbox{ from } x \mbox{ to } y\}|\leq q^{k\left(\frac{1-\delta}2\right)}.$$

Fix $\eps>0$. Then, if $\phi$ is an eigenfunction of the discrete laplacian on $G_n$ and if $A\subset V_n$ is a set such that
$$\sum_{x\in A}|\phi(x)|^2\geq \eps \sum_{x\in V_n}|\phi(x)|^2,$$
then $|A|\geq n^\alpha$ --- where $\alpha>0$ is given as an explicit function of $\eps, \delta$ and $c$.
\end{theorem}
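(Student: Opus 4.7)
The natural strategy, which I believe is the one used by Brooks and Lindenstrauss, is an \emph{amplification} argument: one builds a polynomial $p$ of controlled degree in the adjacency operator $A$ (equivalently the laplacian, since $\Delta = (q+1)I-A$) such that $p(\lambda)$ is exponentially large while the matrix elements of $p(A)$ remain under control thanks to the path-counting hypothesis. Then, since $\phi$ is an eigenfunction, $p(A)\phi=p(\lambda)\phi$, and comparing norms on $A\subset V_n$ by Cauchy--Schwarz yields a lower bound on $|A|$.

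Concretely I would use the Chebyshev-like polynomials adapted to the $(q+1)$-regular tree, defined by the three-term recurrence
\begin{equation*}
X_0=I,\qquad X_1=A,\qquad X_{k+1}=AX_k-qX_{k-1}.
\end{equation*}
On the infinite tree these are designed so that $X_k(A_{\mathrm{tree}})(x,y)=\mathbf{1}[d(x,y)=k]$. A direct computation with $\lambda=2\sqrt{q}\cos\theta$ gives $X_k(\lambda)=q^{k/2}\sin((k+1)\theta)/\sin\theta$, which is of order $q^{k/2}$ for $\lambda$ in the bulk of the spectrum $[-2\sqrt{q},2\sqrt{q}]$. The plan is to fix $K=\lfloor \frac{c}{2}\ln n\rfloor$ and work with $p=X_K$.

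Next I would bound the matrix elements $X_K(A)(x,y)$ on the \emph{graph}. Expanding $X_K$ as a polynomial $\sum_{j\le K}c_{K,j}A^j$, the entry $A^j(x,y)$ counts walks of length $j$ from $x$ to $y$. A walk decomposes into a non-backtracking path plus backtracking excursions, and the number of such decompositions of a walk of length $j$ is polynomial in $j$; the hypothesis then gives $A^j(x,y)\lesssim q^{j(1-\delta)/2}$ up to polynomial factors, so that $|X_K(A)(x,y)|\lesssim q^{K(1-\delta)/2}$, again up to polynomial loss. Applying the eigenvalue relation and Cauchy--Schwarz,
\begin{equation*}
|X_K(\lambda)|^2\sum_{x\in A}|\phi(x)|^2=\sum_{x\in A}|X_K(A)\phi(x)|^2\le |A|\,\max_{x}\|X_K(A)(x,\cdot)\|_2^2\,\|\phi\|_2^2,
\end{equation*}
and the hypothesis $\sum_{x\in A}|\phi|^2\ge\eps\|\phi\|_2^2$ together with the bound $\|X_K(A)(x,\cdot)\|_2^2\le n\cdot q^{K(1-\delta)}$ (times polynomial factors) gives
\begin{equation*}
|A|\;\gtrsim\;\frac{\eps\,|X_K(\lambda)|^2}{n\,q^{K(1-\delta)}}\;\gtrsim\;\frac{\eps\,q^{K\delta}}{n}\;=\;n^{\alpha(\eps,\delta,c)}
\end{equation*}
for an explicit $\alpha>0$, provided $c\delta\log q$ is large enough relative to $1$.

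The main obstacle I anticipate is that the lower bound $X_K(\lambda)\sim q^{K/2}$ degenerates at the edges of the spectrum, where $\sin\theta$ is small and $X_K$ has many zeros. One has to either modify $p$ (e.g.\ take a short sum $\sum_{k=K}^{K+L}X_k$ or a linear combination chosen to be large in a chosen spectral window containing $\lambda$), or to rule out a small exceptional set of $\lambda$'s; controlling this construction uniformly in $\lambda\in[-(q+1),q+1]$ while keeping the degree $O(\ln n)$ is the delicate point. A secondary nuisance is to keep track of the coefficients $c_{K,j}$ of $X_K$ so that expanding into powers of $A$ does not lose an exponential factor; the recursion $X_{k+1}=AX_k-qX_{k-1}$ makes these coefficients grow only like $q^{k/2}$, which turns out to be compatible with the rest of the estimate.
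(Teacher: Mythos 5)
First, a point of order: the paper does not prove Theorem \ref{t:BL} at all --- it is quoted from \cite{BL} as motivation --- so there is no in-paper argument to compare yours with; I can only measure your proposal against the statement itself and against the Brooks--Lindenstrauss proof it cites.

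Your write-up contains one repairable error and one structural gap. The repairable error is the bound on $A^j(x,y)$: the number of walks of length $j$ between two vertices is \emph{not} within polynomial factors of the number of non-backtracking paths. Already on the tree, $A^j(x,x)$ is of order $(2\sqrt q)^j$ up to polynomial corrections, which exceeds your claimed $q^{j(1-\delta)/2}$ by an exponential factor; a given walk has a unique core path, but exponentially many walks share the same core. The detour through powers of $A$ is unnecessary anyway: with the corrected initial step $X_2=AX_1-(q+1)X_0$, the polynomial $X_K(A_{G_n})(x,y)$ is exactly the number of non-backtracking paths of length $K$ from $x$ to $y$, which is precisely the quantity the hypothesis controls.

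The structural gap is quantitative and is the reason this cannot be the Brooks--Lindenstrauss argument. Your conclusion $|A|\gtrsim \eps\, q^{K\delta}/n$ with $K\leq c\ln n$ is vacuous unless $c\delta\ln q$ exceeds an absolute constant; you concede this proviso, but the theorem asserts $\alpha>0$ for \emph{every} $c>0$, $\delta>0$. Nor can the row-norm bookkeeping be improved to rescue the one-shot amplification: the sharper bound $\norm{X_K(A)(x,\cdot)}_2^2\leq \max_y|X_K(A)(x,y)|\cdot\sum_y|X_K(A)(x,y)|\leq q^{K(1-\delta)/2}\cdot(q+1)q^{K-1}$ makes the ratio $|X_K(\lambda)|^2/\norm{X_K(A)(x,\cdot)}_2^2$ of order $q^{-K(1-\delta)/2}$, which tends to zero. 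In other words, Cauchy--Schwarz against the full row gives nothing, and Cauchy--Schwarz against all $n$ entries only wins when $q^{K\delta}\gg n$. The actual proof in \cite{BL} is organized differently: rather than amplifying a single $X_K$, one uses a positive (Fej\'er-type) combination of the sphere operators $X_k$, $k\leq K$, whose spherical transform is bounded below at $\lambda$ uniformly over the spectrum (this also disposes of your worry about zeros of $X_K(\lambda)$ near the spectral edge). Expanding the resulting quadratic form, the diagonal contribution grows linearly in $K$ while the off-diagonal contribution is controlled by the path-counting hypothesis together with $|A|$, and balancing the two yields $|A|\geq n^{\alpha}$ with $\alpha>0$ proportional to $c\delta$ --- no threshold on $c\delta\ln q$ is needed. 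Without some such restructuring, your approach proves a strictly weaker statement than the theorem.
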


A similar form of delocalization (but on weaker scales) is established when the degree $q=q_n$ goes to infinity in \cite{Dumitriu, TranVu}.
We also refer to the papers \cite{ESY09, ESY09-2, EK11, EK11-2} where various forms of delocalization  
have been established for eigenvectors of random Wigner matrices and random band matrices.

In this paper, our aim is to establish for large regular graphs a result which reads like an analogue of the ``quantum ergodicity theorem'' on manifolds. Compared to Theorem \ref{t:BL} it pertains to a different definition of delocalization~: delocalization is now tested by averaging an observable and comparing with the average along the uniform measure. As a motivation, let us recall the Quantum Ergodicity theorem in its original form.

\bigskip

{\bf Quantum Ergodicity Theorem (Shnirelman theorem, \cite{Sni, CdV85, Zel87}).} Let $(M, g)$ be a compact Riemannian manifold.\footnote{We assume that $\Vol(M) = 1$.} Call $\Delta$ the Laplace-Beltrami operator on $M$. Assume that the geodesic flow of $M$ is {\em ergodic} with respect to the Liouville measure. Let $(\psi_n)_{n\in \IN}$ be an orthonormal basis of $L^2(M, g)$ made of eigenfunctions of the laplacian
$$\Delta\psi_n=-\lambda_n \psi_n,\qquad \lambda_n\leq \lambda_{n+1}\To +\infty.$$
Let $a$ be a continuous function on $M$ such that $\int_M a(x)d\Vol(x)=0$. Then
\begin{equation}\label{e:variance}\frac{1}{N(\lambda)}\sum_{n, \lambda_n\leq \lambda}|\la \psi_n, a\psi_n\ra_{L^2(M)}|^2 \Lim_{\lambda\To +\infty}0\end{equation}
where the normalizing factor is $N(\lambda)=|\{n, \lambda_n\leq \lambda\} |.$ Here $\la \psi_n, a\psi_n\ra_{L^2(M)}=\int_M a(x)|\psi_n(x)|^2d\Vol(x).$

\begin{remark}Equation \eqref{e:variance} implies that there exists a subset $S\subset \IN$ of density $1$ such that
\begin{equation}\label{e:subsequence}\la \psi_n, a\psi_n\ra \Lim_{n\To +\infty, n\in S} \int_M a(x)d\Vol(x).\end{equation}
(Note that \eqref{e:subsequence} is true even for a function $a$ with non-zero mean.)

In addition, since the space of continuous functions is separable, one can actually find $S\subset \IN$ of density $1$ such that
\eqref{e:subsequence} holds for {\em all} $a\in C^0(M)$. In other words, the sequence of measures $(|\psi_n(x)|^2d\Vol(x))_{n\in S}$ converges weakly to the uniform measure $d\Vol(x)$.

Actually, the full statement of the theorem says that
there exists a subset $S\subset \IN$ of density $1$ such that
\begin{equation}\label{e:subsequenceOp}\la \psi_n, A\psi_n\ra \Lim_{n\To +\infty, n\in S} \int_{S^*M} \sigma^0(A) dL\end{equation} 
for every pseudodifferential operator $A$ of order $0$ on $M$. On the right-hand side, $\sigma^0(A)$ is the principal symbol of $A$, that is a function on the unit cotangent bundle $S^*M$, and $L$ is the normalized Liouville measure (uniform measure), arising naturally from the symplectic structure of the cotangent bundle.
\end{remark}

\bigskip

Here we consider a sequence of $(q+1)$-regular connected graphs $(G_n)_{n\in\IN}$, $G_n=(V_n, E_n)$ with vertices $V_n =\{1,\ldots, n\}$ and edge set $E_n$. The valency $(q+1)$ is fixed. We denote by ${\mathfrak{X}}$ the $(q+1)$-regular tree and identify it with its set of vertices, equipped with the geodesic distance $d_{\mathfrak{X}}$, that we will simply denote by $d$ most of the time. We will denote by $d_{G_n}$ the geodesic distance on the graph. Each $G_n$ is seen as a quotient of ${\mathfrak{X}}$ by a group of automorphisms $\Gamma_n$~: $G_n= \Gamma_n\backslash {\mathfrak{X}}$, where we assume that the elements of $\Gamma_n$ act without fixed points. Accordingly, a function $f: V_n\To \C$ may be seen as a function $f: {\mathfrak{X}}\To \C$ satisfying $f(\gamma\cdot x)=f(x)$ for each $x\in {\mathfrak{X}}, \gamma\in \Gamma_n$. We will denote by $D_n$ a subtree of ${\mathfrak{X}}$ which is a fundamental domain for the action of $\Gamma_
n$ on vertices.

We consider the stochastic operator acting on $\Gamma_n$-invariant functions,
\begin{equation}\label{e:laplacian}
 A f(x)=\frac1{q+1}\sum_{{x\sim y} } f(y)
\end{equation}
where $x\sim y$ means that $x$ and $y$ are neighbours in the tree\footnote{This is also the (normalized) adjacency matric of the graph $G_n$, but note that this definition allows $G_n$ to have loops and multiple edges.}. It is related to the discrete laplacian by $A-I=\Delta.$

Whereas the Shnirelman theorem deals with the {\em high frequency} asymptotics ($\lambda_n\To +\infty$), there
is no such asymptotic r\'egime for discrete graphs since the laplacian is a bounded operator. We will instead work (like in \cite{BL}) in the {\em large spatial scale} r\'egime $n\To +\infty$.

We will assume the following conditions on our sequence of graphs~:

\bigskip

(EXP) The sequence of graphs is a family of expanders. More precisely, there exists $\beta>0$ such that the spectrum of $A$ on $L^2(G_n)$ is contained in $\{1\}\cup [-1+\beta, 1-\beta]$ for all $n$.

(EIIR) For all $R$, $\frac{|\{x\in V_n, \rho(x)<R\}|}{n}\To 0$
where $\rho(x)$ is the injectivity radius at $x$ (meaning the largest $\rho$ such that the ball $B(x, \rho)$ in $G_n$ is a tree).

(EIIR) is equivalent to saying that there exists $R_n\To +\infty$ and $\alpha_n\To 0$ such that
$$\frac{|\{x\in V_n, \rho(x)<R_n\}|}{n}\leq \alpha_n.$$
In particular, it is satisfied if the injectivity radius goes to infinity (with $R_n$ taken to be the minimal injectivity radius and $\alpha_n=0$).

Condition (EXP) replaces the ergodicity assumption in the usual quantum ergodicity theorem.

\bigskip

\begin{ex}\label{ex:random} 
The graph $G_n$ can be chosen uniformly at random among the $(q+1)$-regular graphs with $n$ vertices (see \cite{Bol01} section 2.4 for an introduction to this model). We can then take $R_n= k$ and
$n\alpha_n=40 A kq^k$ for any $k=k(n)$ such that $kq^k n^{-1} \Lim_{n\to +\infty} 0$, and $A=A(n)$ such that $A\geq c>1$ (see \cite{MKWW}, Theorem 4).
For instance, we can take $k=(1-\delta)\log_q (n)$, with $0 < \delta < 1$, and $A=2$. In this case we have $R_n = (1-\delta)\log_q (n)$ and $\alpha_n=80 (1-\delta) \log_q(n) n^{-\delta}$. For this choice of parameters, (EIIR) is satisfied with a probability tending to $1$ when $n \To +\infty$. More precisely, this probability is greater than $1 - e^{-C n^{1-\delta}}$, for some constant $C > 0$ independent of $n$.

Condition (EXP) is also satisfied by these sequences of random graphs~: \cite{Alon} proves an equivalence between having a uniform spectral gap and having a uniform Cheeger constant. The latter condition was shown to hold generically in \cite{Pinsker}. In \cite{Fri08}, a spectral gap estimate that is close to optimal is established.
\end{ex}

\begin{ex}
 An explicit example of sequence of $(q+1)$-regular graphs to which our results apply is given by the construction of Ramanujan graphs of \cite{LPS88} for prime $q$. The sequence obtained satisfies conditions (EXP) and (EIIR) even more strongly than the sequences of random graphs of Example \ref{ex:random}. A method for obtaining bi-partite Ramanujan graphs of arbitrary degrees has appeared recently in \cite{MSS}.
\end{ex}

\bigskip

Eigenvalues of $A$ on a $(q+1)$-regular graph may be parameterized by their ``spectral parameter''  $s$ thanks to the relation
\begin{equation}\label{e:parametrization}
 \lambda=\frac{2\sqrt q}{q+1}\cos (s\ln q).
\end{equation}
The case $s\in \IR$ corresponds to $\lambda\in \left[-\frac{2\sqrt q}{q+1}, \frac{2\sqrt q}{q+1}\right]$, which is the tempered spectrum. In that case we will usually choose $s\in [0, \tau]$ ($\tau=\frac{\pi}{\ln (q)}$). The case $s\in i(-1/2, 1/2)+ik\frac{\pi}{\ln (q)}$ ($k\in \IZ$) corresponds to $\lambda\in [-1, 1]\setminus\left(-\frac{2\sqrt q}{q+1}, \frac{2\sqrt q}{q+1}\right)$, which is the untempered spectrum. The result of this paper will only be of interest in the tempered part of the spectrum.

In what follows, $(r_n)$ will be a sequence satisfying $r_n+2\leq R_n$ and $q^{r_n}\alpha_n \To 0$. The sequence $(\delta_n)$
will be assumed to satisfy $\delta_n^K r_n^{K-4}\To +\infty$, for some integer $K$. We also assume that $\delta_n \To 0$, although it is not necessary for the general proof of section \ref{s:general}.

Our aim is to prove the following~:

\begin{theorem} \label{t:main} Let $(G_n)$ be a sequence of $(q+1)$-regular graphs, $G_n=(V_n, E_n)$ with $V_n =\{1,\ldots, n\}$. Assume that $(G_n)$ satisfies (EIIR) and (EXP). Fix $s_0\in (0, \tau)$ and let $I_n=[s_0-\delta_n, s_0+\delta_n]$. Call $(s^{(n)}_1,\ldots, s^{(n)}_n)$ the spectrum of $A$ on $G_n$, and $(\psi^{(n)}_1,\ldots, \psi^{(n)}_n)$ a corresponding orthonormal eigenbasis.

Let $N(I_n, G_n)=\left|\{j\in \{1,\ldots, n\}, s^{(n)}_j\in I_n\}\right|$ be the number of eigenvalues in $I_n$.\footnote{Note that with the assumptions on $\delta_n$, $N(I_n,G_n) \To +\infty$ (see Corollary \ref{c:weyl}).} Finally, let $a_n : V_n \To \C$ be a sequence of functions such that 

$$\sum_{x\in V_n} a_n(x)=0, \qquad   \sup_x |a_n(x)|\leq 1.$$

Then $$\frac1{N(I_n, G_n)} \sum_{s^{(n)}_j\in I_n} \left| \la \psi^{(n)}_j, a_n\psi^{(n)}_j\ra\right|^2\Lim_{n\To +\infty} 0.$$
\end{theorem}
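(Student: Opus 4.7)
The plan is to follow the Shnirelman--Zelditch--Colin~de~Verdière strategy, transposed to the graph setting using the microlocal calculus on the $(q+1)$-regular tree $\mathfrak{X}$ developed in \cite{LM}. Ergodicity of the geodesic flow will be replaced by the spectral gap (EXP), while (EIIR) will allow us to identify operators on $G_n$ with operators on $\mathfrak{X}$ whenever their integral kernels are supported at distance $\leq r_n$, since $q^{r_n}\alpha_n\to 0$ controls the contribution of points of small injectivity radius.

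\textbf{Step 1 (Weyl law).} A local trace formula on $\mathfrak{X}$, combined with (EIIR), shows that the empirical spectral measure of $A$ converges to the Kesten--McKay measure. Consequently $N(I_n,G_n)\sim c(s_0)\, n\, \delta_n$ with $c(s_0)>0$, so in particular $N(I_n,G_n)\to +\infty$.

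\textbf{Step 2 (Time averaging).} For $0\leq k\leq r_n$, let $p_k$ be the Chebyshev-like polynomial such that $p_k(A)$ realizes, on $\mathfrak{X}$, the (normalized) non-backtracking propagator of length $k$. On the tempered spectrum, $p_k\left(\tfrac{2\sqrt q}{q+1}\cos(s\ln q)\right)$ is essentially $\cos(ks\ln q)$, so
\begin{equation*}
\frac{1}{T}\sum_{k=0}^{T-1} \lvert p_k(\lambda_j)\rvert^2
\;\approx\; \mu(s_0),\qquad s_j\in I_n,
\end{equation*}
as long as $T\delta_n\to 0$. Defining the time-averaged operator
\begin{equation*}
\langle M_{a_n}\rangle_T
\;=\;\frac{1}{T\,\mu(s_0)}\sum_{k=0}^{T-1} p_k(A)\,M_{a_n}\,p_k(A),
\end{equation*}
one then has $\langle \psi_j,M_{a_n}\psi_j\rangle = \langle \psi_j,\langle M_{a_n}\rangle_T\psi_j\rangle + o(1)$, uniformly for $s_j\in I_n$. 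This is the graph analogue of replacing an observable by its Birkhoff time average.

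\textbf{Step 3 (Hilbert--Schmidt bound and Egorov).} By Cauchy--Schwarz in the spectral basis,
\begin{equation*}
\sum_{s_j\in I_n}\bigl\lvert\langle\psi_j,\langle M_{a_n}\rangle_T\,\psi_j\rangle\bigr\rvert^2
\;\leq\;\operatorname{Tr}\!\bigl(\Pi_{I_n}\,\langle M_{a_n}\rangle_T^{\,*}\,\langle M_{a_n}\rangle_T\,\Pi_{I_n}\bigr),
\end{equation*}
where $\Pi_{I_n}$ is the spectral projector of $A$ onto $I_n$. The microlocal calculus of \cite{LM} identifies $p_k(A)M_{a_n}p_k(A)$, to leading order, with the pseudodifferential operator whose symbol is the spherical average of $a_n$ at distance $k$ (an Egorov-type identification). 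Using (EIIR) to compute traces on the tree up to an error $O(\alpha_n q^{r_n})$, the right-hand side is bounded by $N(I_n,G_n)\cdot \tfrac{1}{n}\sum_{x\in V_n} \lvert(\overline{a_n})_T(x)\rvert^2 + o(N(I_n,G_n))$, where $(\overline{a_n})_T$ is the $T$-step ball average of $a_n$ with respect to the random walk $A$ on $G_n$.

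\textbf{Step 4 (Spectral gap).} Since $(\overline{a_n})_T$ is a polynomial in $A$ applied to $a_n$ and $\sum_x a_n(x)=0$, condition (EXP) gives $\|(\overline{a_n})_T\|_{\ell^2(G_n)}^2 \leq (1-\beta)^{cT}\|a_n\|_{\ell^2(G_n)}^2 \leq (1-\beta)^{cT} n$. Dividing by $N(I_n,G_n)$ and letting first $n\to +\infty$ (so $\alpha_n q^{r_n}\to 0$) and then $T\to +\infty$ yields the claimed convergence to $0$.

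\textbf{Main obstacle.} The delicate point is balancing three asymptotic parameters: the spectral window $\delta_n$, the spatial scale $r_n$, and the averaging time $T$. The Egorov-type identification in Step~3 is accurate only for $T\lesssim r_n$ (so that propagation remains within the injectivity radius), Step~2 requires $T\delta_n\to 0$, Step~4 wants $T\to +\infty$ (to exploit the spectral gap), and Step~3 wants $\alpha_n q^{r_n}\to 0$. The technical assumption $\delta_n^K r_n^{K-4}\to +\infty$, together with $q^{r_n}\alpha_n\to 0$, is precisely what makes it possible to choose $T=T(n)$ meeting all four requirements simultaneously.
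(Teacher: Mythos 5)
Your route is genuinely different from the paper's. The paper never conjugates by polynomials of $A$: it time-averages at the \emph{symbol} level, replacing $a$ by $a^T=\frac1T\sum_k a\circ\sigma^k$ where $\sigma$ is the shift on ${\mathfrak{X}}\times\Omega$, and the price is an Egorov-type argument (Proposition \ref{p:egorov} and Corollaries \ref{c:mainegorov}--\ref{c:finalegorov}) showing that $\langle\psi_j,\Op(a)\psi_j\rangle$ is almost invariant under $a\mapsto a\circ\sigma$; for general $s_0$ this requires inverting $(q^{2is}U-I)$ and already consumes (EXP) to control the remainder $a^{(s,N)}$. Your averaging $\langle M_{a_n}\rangle_T=\frac{1}{T\mu(s_0)}\sum_k p_k(A)M_{a_n}p_k(A)$ is \emph{exact} on eigenfunctions, since $p_k(A)\psi_j=p_k(\lambda_j)\psi_j$; Step 2 therefore needs no Egorov theorem at all, only the lower bound $\frac1T\sum_k p_k(\lambda_j)^2\geq c>0$ for $s_j$ near a fixed $s_0\in(0,\tau)$. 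That is a real structural simplification (it is essentially the later Brooks--Le~Masson--Lindenstrauss argument), and both approaches ultimately reduce to the same quantity: an $\ell^2$ norm of sphere/ball averages of $a_n$, which (EXP) makes small.

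However, Step 3 as written has a genuine gap. The sharp projector $\Pi_{I_n}$ has no finite-propagation kernel, so (EIIR) does not let you ``compute the trace on the tree''; the asserted inequality $\operatorname{Tr}\bigl(\Pi_{I_n}\langle M_{a_n}\rangle_T^{*}\langle M_{a_n}\rangle_T\Pi_{I_n}\bigr)\leq N(I_n,G_n)\cdot\frac1n\sum_x\lvert(\overline{a_n})_T(x)\rvert^2+o(N(I_n,G_n))$ is a local Weyl law with observable that you have not proved, and it is the technical heart of the matter. Dropping $\Pi_{I_n}$ altogether gives $\norm{\langle M_{a_n}\rangle_T}_{HS}^2\sim n\cdot(\cdots)$, and dividing by $N(I_n,G_n)\sim 2n\delta_n\lvert c(s_0)\rvert^{-2}$ then loses a factor $1/\delta_n\to+\infty$. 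The paper's fix is to replace $\Pi_{I_n}$ by the smooth spectral cutoff $\Op_{G_n}(\chi_n)$, show it acts on tempered eigenfunctions as $\chi_n(s_j)+r^3O((r\delta_n)^{-\infty})$ (Lemma \ref{l:chin}), and observe that the factor $\int\chi_n^2\,d\mu$ produced by the Hilbert--Schmidt computation is comparable to $N(I_n,G_n)/n$ by the Kesten--McKay law (Corollary \ref{c:weyl}); some such smoothing is indispensable for your Step 3. Two smaller corrections: the bound $(1-\beta)^{cT}$ in Step 4 is false for a Ces\`aro/ball average --- (EXP) gives $\lvert\phi_{s_j}(k)\rvert\leq Cq^{-\beta k}$ and hence only $O(1/(T\beta))$, which still suffices --- and the hypothesis $\delta_n^Kr_n^{K-4}\to+\infty$ serves to control the smoothing error $r_n^3(\delta_nr_n)^{-K}/\delta_n$, not to couple $T$ to $n$: one keeps $T$ fixed as $n\to+\infty$ and lets $T\to+\infty$ only at the end.
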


\begin{remark} If $a_n$ does not have zero mean, then by applying the theorem to $a_n- \overline{a_n}$ (where $\overline{a_n}=\frac1n
\sum_{x\in V_n} a_n(x)$) we obtain
$$\frac1{N(I_n, G_n)} \sum_{s^{(n)}_j\in I_n} \left| \la \psi^{(n)}_j, a_n\psi^{(n)}_j\ra - \overline{a_n}\right|^2\Lim_{n\To +\infty} 0.$$
\end{remark}

\begin{remark}\label{r:weakexp}
  If we exclude the case $s_0 = \tau/2$ in Theorem \ref{t:main}, we can assume, instead of (EXP), the following weaker condition~: there exists $\beta>0$ such that the spectrum of $A$ on $L^2(G_n)$ is contained in $\{1\}\cup [-1, 1-\beta]$ for all $n$. In particular, the theorem applies for bipartite regular graphs in this case.
  
  We can also say something in the case $s_0 = \tau/2$ for bipartite expander graphs, that is if there exists $\beta>0$ such that the spectrum of $A$ on $L^2(G_n)$ is contained in $\{-1,1\}\cup [-1+\beta, 1-\beta]$ for all $n$. We need to strengthen the condition on the functions $a_n(x)$ in the theorem for the conclusion to apply~: if $V_n=V_n^1\sqcup V_n^2$ is the bi-partition of $V_n$, then we need that \[ \sum_{x\in V_n^1} a_n(x)= \sum_{x\in V_n^2} a_n(x) = 0. \]
  The theorem then tells us that we have equidistribution of most eigenfunctions with eigenvalue near $\tau/2$ on each set $V_n^1$ and $V_n^2$, without providing information on the relative weight of these two sets. 
\end{remark}

\begin{remark} The proof will show that we can weaken condition (EXP), by allowing the spectral gap $\beta$ to decay with $n$ ``not too fast'' ($\beta \gg r_n^{-2/9}$ is enough). See Section \ref{s:quantitative}.
\end{remark}

Since random $(q+1)$-regular graphs satisfy both (EXP) \cite{Pinsker, Alon,Fri08} and (EIIR) \cite{MKWW}, our theorem applies to them with the values of $R_n, \alpha_n$ given in Example \ref{ex:random}.
\begin{coro} Let $a_n : V_n=\{1, \ldots, n\} \To \C$ be a sequence of functions such that 

$$\sum_{x\in V_n} a_n(x)=0, \qquad  \sup_x |a_n(x)|\leq 1.$$

Choose $(G_n)$ uniformly at random amongst  the $(q+1)$-regular graphs $G_n=(V_n, E_n)$ such that $V_n = \{1,\ldots, n\}$. Choose $j$ uniformly at random in $N(I_n, G_n)$.

Then for any fixed $\eps>0$, \[P\left( \left| \la\psi_j^{(n)}, a_n\psi_j^{(n)}\ra \right|\geq \eps\right) \To  0\]
when $n\To +\infty$.
\end{coro}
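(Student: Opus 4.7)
The plan is to deduce the corollary from Theorem~\ref{t:main} in two steps: first a conditional Markov inequality reducing the pointwise deviation event to an averaged quantum variance, and then conditioning on the high-probability event that $G_n$ satisfies the hypotheses of Theorem~\ref{t:main} with the explicit parameters recorded in Example~\ref{ex:random}. Define the empirical quantum variance
\[
X_n(G) \defeq \frac{1}{N(I_n, G)}\sum_{s^{(n)}_j \in I_n}\bigl|\la \psi^{(n)}_j, a_n \psi^{(n)}_j\ra\bigr|^2.
\]
Since, conditional on $G_n$, the index $j$ is uniform on the $N(I_n,G_n)$ indices with $s^{(n)}_j\in I_n$, Markov's inequality gives
\[
P\bigl(\bigl|\la\psi^{(n)}_j, a_n\psi^{(n)}_j\ra\bigr|\geq \eps \;\big|\; G_n\bigr) \leq \eps^{-2} X_n(G_n),
\]
so after taking expectation over $G_n$ it suffices to show $\IE[X_n(G_n)]\To 0$. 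The crude bound $X_n\leq \|a_n\|_\infty^2 \leq 1$ will be used repeatedly.

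Let $B_n$ be the event that $G_n$ simultaneously satisfies (EXP), with a uniform spectral gap $\beta>0$ (e.g.\ the bound of~\cite{Fri08} together with~\cite{Alon}), and (EIIR) with the explicit parameters $R_n=(1-\delta)\log_q n$, $\alpha_n=80(1-\delta)\log_q(n)\, n^{-\delta}$ from Example~\ref{ex:random}. The cited results give $P(B_n^c)\To 0$ (in fact exponentially). Combined with $X_n\leq 1$, this yields $\IE[X_n\bbbone_{B_n^c}]\leq P(B_n^c)\To 0$, so the task reduces to showing $\IE[X_n\bbbone_{B_n}]\To 0$.

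Fix once and for all auxiliary sequences $(r_n), (\delta_n)$ meeting the constraints stated just before Theorem~\ref{t:main} and compatible with the $(R_n,\alpha_n)$ above: for instance $r_n=(1-2\delta)\log_q n$ (so that $r_n+2\leq R_n$ and $q^{r_n}\alpha_n\To 0$), and any $\delta_n\To 0$ decaying slowly enough that $\delta_n^K r_n^{K-4}\To +\infty$ for some fixed integer $K$. On the event $B_n$, \emph{every} realisation of $G_n$ then satisfies all the hypotheses of Theorem~\ref{t:main} with these common parameters. Suppose for contradiction that $\IE[X_n\bbbone_{B_n}]\not\To 0$; then along some subsequence $(n_k)$ we could select graphs $G^*_{n_k}\in B_{n_k}$ with $X_{n_k}(G^*_{n_k})\geq c>0$. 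The sequence $(G^*_{n_k})$ satisfies (EXP) and (EIIR) uniformly with the parameters just fixed, so Theorem~\ref{t:main} applies to it and forces $X_{n_k}(G^*_{n_k})\To 0$, a contradiction. Hence $\IE[X_n]\To 0$ and the corollary follows.

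The main subtle point is not analytic but organisational: one must check that the parameter windows of Example~\ref{ex:random} admit a simultaneous choice of $(r_n,\delta_n)$ satisfying the technical constraints preceding Theorem~\ref{t:main}, and one must use Theorem~\ref{t:main} as a black box applied \emph{uniformly} over all realisations lying in $B_n$. This uniformity is legitimate because the conclusion of Theorem~\ref{t:main} depends on the sequence $(G_n)$ only through the parameters $(R_n,\alpha_n,\beta)$, which are identical for every graph in $B_n$.
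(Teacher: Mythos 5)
Your argument is correct and is exactly the (unwritten) argument the paper intends: conditional Chebyshev reduces the deviation probability to the quantum variance $X_n(G_n)$ controlled by Theorem~\ref{t:main}, and one conditions on the high-probability event that the random graph satisfies (EXP) and (EIIR) with the parameters of Example~\ref{ex:random}, using the trivial bound $X_n\leq 1$ off that event. The only nitpick is that your sample choice $r_n=(1-2\delta)\log_q n$ gives $q^{r_n}\alpha_n\to 0$ only for $\delta\in(1/3,1/2)$, so one should either fix $\delta$ in that window or take $r_n=\min\{R_n-2,\,-(1-\epsilon')\log_q\alpha_n\}$ as in Section~\ref{s:quantitative}.
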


The statement of Theorem \ref{t:main} is the exact analogue of the Shnirelman theorem in its form \eqref{e:variance}. 
However, we do not have a statement analogous to the convergence of measures \eqref{e:subsequence}, because our sequence of measures does not live on a single space; instead, it is defined on the sequence of graphs $G_n$. We do not know of a notion that would be adapted to describe the limit of the family $(G_n)$ endowed with the probability measure $(|\psi^{(n)}_j(x)|^2)_{x\in V_n}.$

We can generalize Theorem \ref{t:main} by replacing the function $a$ with any finite range operator~:
\begin{theorem} \label{t:gen} Let $(G_n)$ be a sequence of $(q+1)$-regular graphs, $G_n=(V_n, E_n)$ with $V_n =\{1,\ldots, n\}$. Assume that $(G_n)$ satisfies (EIIR) and (EXP). Fix $s_0\in (0, \tau)$ and let $I_n=[s_0-\delta_n, s_0+\delta_n]$. Call $(s^{(n)}_1,\ldots, s^{(n)}_n)$ the spectrum of the laplacian on $G_n$, and $(\psi^{(n)}_1,\ldots, \psi^{(n)}_n)$ a corresponding orthonormal eigenbasis.

Let $N(I_n, G_n)=\left|\{j\in \{1,\ldots, n\}, s^{(n)}_j\in I_n\}\right|$. 

Fix $D\in \N$. Let $A_n$ be a sequence of operators on $L^2(G_n)$ whose kernels $K_n : V_n \times V_n\To \C$ are such that $K_n(x, y)=0$ for $d(x, y)>D$. 

Assume that
$    \sup_{x, y\in V_n}|K_n(x, y)| \leq 1.$

Then there exists a number $\overline{A_n}(s_0)$ such that
$$\frac1{N(I_n, G_n)} \sum_{s^{(n)}_j\in I_n} \left| \la \psi^{(n)}_j, A_n\psi^{(n)}_j\ra -\overline{A_n}(s_0)
\right|^2\Lim_{n\To +\infty} 0.$$

With the notation of \S \ref{s:class}, we can write $A_n=\Op(a_n)$, $a_n\in S_o^D$; and we have the expression $\overline{A_n}(s_0)
=\frac1n\sum_{x\in D_n}\int_{\Omega} a_n(x, \omega, s_0) d\nu_x(\omega)$.
\end{theorem}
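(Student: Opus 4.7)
The strategy is to re-run the proof of Theorem \ref{t:main} inside the pseudodifferential calculus on the tree developed in \cite{LM}. The hypothesis that $K_n(x,y)=0$ for $d(x,y)>D$ lets us realize $A_n=\Op(a_n)$ for a symbol $a_n\in S_o^D$ depending on $(x,\omega,s)$ with $\omega\in\Omega$ a boundary point of $\mathfrak{X}$; the scalar $\overline{A_n}(s_0)=\frac1n\sum_{x\in D_n}\int_\Omega a_n(x,\omega,s_0)\,d\nu_x(\omega)$ is the boundary average of the symbol at spectral parameter $s_0$ and plays the role that the spatial mean $\overline{a_n}$ played for multiplication operators.

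The first step is to subtract the mean, replacing $A_n$ by $A_n-\overline{A_n}(s_0)P_{I_n}$, where $P_{I_n}$ is the spectral projector of $A$ on $I_n$; the discrepancy between evaluating the symbol at a general $s\in I_n$ versus exactly at $s_0$ is controlled by $\delta_n$. Next, I time-average with an evolution $U_t$ built from Chebyshev polynomials of $A$, whose restriction to the eigenspace of $\psi_j$ is a bounded oscillating function of $s_j$. Setting
\[
\la A_n\ra_T=\frac1T\sum_{t=0}^{T-1}U_t^* A_n U_t,
\]
the diagonal matrix elements $\la \psi_j,A_n\psi_j\ra$ and $\la \psi_j,\la A_n\ra_T\psi_j\ra$ agree up to a remainder controlled by $\delta_n$. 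An Egorov-type statement from \cite{LM} identifies $U_t^*\Op(a_n)U_t=\Op(a_n\circ\sigma^t)+\mathrm{err}_t$ for $t\leq r_n-D-2$, with $\sigma$ the shift on $\Omega$ induced by the geodesic flow; iterating yields $\la A_n\ra_T=\Op(\la a_n\ra_T)+\mathrm{err}$.

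The third step is a Hilbert--Schmidt bound together with mixing:
\[
\sum_{s_j^{(n)}\in I_n}\bigl|\la\psi_j,\la A_n\ra_T\psi_j\ra\bigr|^2\leq \|P_{I_n}\la A_n\ra_T P_{I_n}\|_{\mathrm{HS}}^2\leq C\,n\,\|\la a_n\ra_T\|_{L^2}^2.
\]
Condition (EXP) supplies the mean-ergodic input: for a symbol with vanishing boundary mean at $s_0$, the spectral gap forces $\|\la a_n\ra_T\|_{L^2}\to 0$ as $T\to\infty$, uniformly in large $n$. Dividing by $N(I_n,G_n)\to+\infty$ (Corollary \ref{c:weyl}) completes the proof.

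The main obstacle is the coordination of $T$, $r_n$, $\delta_n$, $R_n$: Egorov requires $T\lesssim r_n$, (EXP) requires $T$ large for mixing to kick in, the mean-subtraction uses $\delta_n\to 0$, and Weyl's law needs $\delta_n$ large enough that $N(I_n,G_n)\to+\infty$. The hypotheses $\delta_n^K r_n^{K-4}\to+\infty$ and $q^{r_n}\alpha_n\to 0$ are arranged precisely to make such a joint limit feasible; the bookkeeping of the error terms coming from Egorov, from the width of $I_n$, and from the injectivity-radius defect $\alpha_n$ is what needs the most care.
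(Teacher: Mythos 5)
Your overall architecture is the right one, and it matches the paper's: quantize $A_n$ as $\Op(a_n)$ with $a_n\in S_o^D$, subtract the boundary mean at $s_0$, time-average via an Egorov-type argument, bound the variance by a Hilbert--Schmidt norm, and invoke the Kesten--McKay law. The paper itself says that ``most steps carry over'' from Theorem \ref{t:main}. But there is one genuine gap, and it sits exactly at the step you dispatch in a single sentence: ``Condition (EXP) supplies the mean-ergodic input: for a symbol with vanishing boundary mean at $s_0$, the spectral gap forces $\norm{\la a_n\ra_T}_{L^2}\to 0$.'' This is the \emph{only} point where the proof of Theorem \ref{t:gen} differs from that of Theorem \ref{t:main}, and it does not follow from (EXP) without a substantial additional argument, because (EXP) is a statement about the vertex operator $A$ on $L^2(V_n)$, whereas the time average $\frac1N\sum_{k}q^{2isk}a\circ\sigma^k$ of an $\omega$-dependent symbol is governed by a different operator.

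Concretely, for a symbol depending only on $x$ the correlations $\sum_{x\in D_n}\int a\circ\sigma^k\cdot a\,d\nu_x$ reduce to $\sum_x S_k a(x)a(x)$ with $S_k$ the sphere average, whose spectrum on $G_n$ is $\{\phi_{s_j}(k)\}$, and (EXP) gives $|\phi_{s_j}(k)|\leq Cq^{-\beta k}$ directly. For $a\in S_o^D$ with $D=2$, i.e.\ $a(x,\omega)=a(x,x_1)$ a function on directed bonds, the relevant operator is instead the non-backtracking matrix $M^\sharp$ on the $n(q+1)$ directed edges. One must then (a) relate $\operatorname{Spec}(M^\sharp)$ to $\operatorname{Spec}(A)$ (each $\lambda\neq\pm1$ contributes two eigenvalues $2/\bigl((q+1)(\lambda\pm\sqrt{\lambda^2-4q/(q+1)^2})\bigr)$, plus extra eigenvalues $\pm1/q$ with multiplicity of order $n$ coming from the cycle space); and (b), crucially, deal with the fact that $M^\sharp$ is \emph{not normal}, so a spectral-radius bound does not yield an $n$-uniform bound on the norm of the Ces\`aro-type averages $\frac{1}{N^2}\sum_{k<N}\sum_{j\leq k}q^{2isj}M^{\sharp j}$. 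The paper resolves this by exhibiting an orthonormal basis in which $M^\sharp$ is block upper-triangular with $2\times2$ blocks and uniformly bounded off-diagonal entries (with special care at $\lambda=\pm\frac{2\sqrt q}{q+1}$, where the two roots coincide), and only then concludes the $O(1/N)$ bound on the orthocomplement of the constants. Finally, general $D$ is reduced to $D=2$ by passing to the $(q+1)q^{D-1}$-regular graph with adjacency operator $(q+1)q^{D-1}S_D$. None of this is present in your proposal; as written, the mixing step is asserted rather than proved, and it is precisely the content of the theorem beyond Theorem \ref{t:main}.
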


Quantitative statements (i.e. rates of convergence) will be given in Section \ref{s:quantitative}.

\section{Theorem \ref{t:main}~: outline of the proof in the case $s_0=\tau/2$}\label{s:altern}
We first give a proof in the special case $s_0=\tau/2$. The reason for treating this case separately is that one can give a proof which is exactly parallel to that of the Shnirelman theorem on manifolds \cite{Sni, CdV85, Zel87, ZelC}. The case of arbitrary $s_0 $ requires additional arguments and will be treated in Section \ref{s:general}.

\subsection{Upper bound on the variance\label{s:qvar}}  

Fix an integer $T>0$. Let $\chi$ be a smooth cut-off function supported in $[-1, 1]$ and taking the constant value $1$ on $[-1/2, 1/2]$. We write
\begin{equation}\chi_n(s)=\chi\left(\frac{s-s_0}{2\delta_n}\right)\label{e:chin}\end{equation}
so that $\chi_n\equiv 1$ on $I_n$. We use the pseudodifferential calculus and the notation defined in Section \ref{s:PDC}, taking the cut-off parameter $r$ equal to $r_n$ (from condition (EIIR), as explained before the statement of Theorem \ref{t:main}).

To simplify the notation, we will write $\psi_j = \psi_j^{(n)}$, $s_j = s_j^{(n)}$, and $a = a_n$. The observable $a$ is a function on $G_n$, in other words a $\Gamma_n$-invariant function on ${\mathfrak{X}}$. Let $\Omega$ be the boundary of ${\mathfrak{X}}$ (see section \ref{s:PDC}), then $a$ extends to a function on ${\mathfrak{X}}\times \Omega\times [0, \tau]$ that does not depend on the last two coordinates. The notation $\Op(a)$ is then defined in section \ref{s:PDC}.

Thanks to Lemma \ref{l:chin}
and to the ``Egorov property'' Corollary \ref{c:egorovtau2}, we have\footnote{To prove the extended Theorem \ref{t:gen}, we also need Lemma \ref{l:produit}.}
\begin{eqnarray*}
\sum_{j=0}^n \chi_n(s_j)^2 \left| \la \psi_j, a\psi_j\ra \right|^2 
  &=&  \sum_{j=0}^n \left| \la \psi_j, \Op_{G_n}(a\chi_n)\psi_j\ra\right|^2 +n r^3 O\left(\frac{1}{(r\delta_n)^\infty} \right)\\
  &=& \sum_{j=0}^n \left| \la \psi_j, \Op_{G_n}(a^T\chi_n)\psi_j\ra\right|^2+n r^3 O\left(\frac{1}{(r\delta_n)^\infty} \right)\\
  &&+ O\lp \frac{T^2}{r^2} \rp n \int \chi_n^2(s)d\mu(s) \\&&+ O(T^2 q^{r+2}) |\{x\in D_n, \rho(x)\leq r + 2\}|\int \chi_n^2(s)d\mu(s) \\
  &&+ O(T^2) n \int O(|s-s_0|^2) \chi_n^2(s)d\mu(s) 
\end{eqnarray*}
where 
$a^T:=\frac1T\sum_{k=0}^{T-1} a\circ \sigma^{2k}$ and $\sigma: {\mathfrak{X}}\times\Omega\To {\mathfrak{X}}\times\Omega$ is the shift (see \S \ref{s:Egorov}).

Next, we use Lemma \ref{l:HSnorm} to write
\begin{eqnarray*} \sum_{j=0}^n \left| \la \psi_j, \Op_{G_n}(a^T\chi_n)\psi_j\ra\right|^2&\leq &
\norm{\Op_{G_n}(a^T\chi_n)}^2_{HS}\\
&\leq &\sum_{x\in D_n}\int |a^T(x, \omega)|^2 d\nu_x(\omega)\chi_n(s)^2 d\mu(s)\\&& + q^r \sum_{x\in D_n, \rho(x)\leq r}\int |a^T(x, \omega)|^2 d\nu_x(\omega) \chi_n(s)^2d\mu(s)\\
&\leq &\sum_{x\in D_n}\int |a^T(x, \omega)|^2 d\nu_x(\omega) \int \chi_n(s)^2 d\mu(s) \\
&&+ q^r |\{x\in D_n, \rho(x)\leq r\}| \int \chi_n(s)^2 d\mu(s).
\end{eqnarray*}
 We also know from the Kesten-McKay law (Section \ref{s:weyl}, Corollary \ref{c:weyl}) that
 $$ \frac{n}{N(I_n, G_n)}\int\chi_n(s)^2 d\mu(s)=O(1).$$
 We thus have
 \begin{align*}
  \frac1{N(I_n, G_n)} \sum_{s_j\in I_n} \left| \la \psi_j, a\psi_j\ra\right|^2
  &= O(1) \frac{1}{n}\sum_{x\in D_n}\int |a^T(x, \omega)|^2 d\nu_x(\omega) \\
  &\quad + O(T^2 q^{r+2}) \frac{|\{x\in D_n, \rho(x)\leq r + 2\}|}{n} \\
  &\quad + \frac{n}{N(I_n, G_n)} r^3 O\left(\frac{1}{(r\delta_n)^\infty} \right) + O\lp\frac{T^2}{r^2}\rp + O(T^2 \delta_n^2)
  \\ &= O(1) \frac{1}{n}\sum_{x\in D_n}\int |a^T(x, \omega)|^2 d\nu_x(\omega) \\
  &\quad + O(T^2 q^r) \alpha_n + O(1) \frac{r^3}{\delta_n} O\left(\frac{1}{(r\delta_n)^\infty} \right)\\
  &\quad  + O\left(\frac{T^2}{r^2}\right) + O(T^2 \delta_n^2) .
 \end{align*}
 Our choices of $r=r_n$ and $\delta_n$ imply that the last four terms vanish as $n$ goes to infinity while $T$ is fixed.
 
\subsection{Expansion and ergodicity}\label{s:ergodicity}

We write, using \eqref{e:adjoint}
 \begin{align*}
 \frac1n \sum_{x\in D_n}\int |a^T(x, \omega)|^2 d\nu_x(\omega)
 &= \frac1n \frac{1}{T^2}\sum_{k=0}^{T-1}\sum_{j=0}^{T-1} \sum_{x\in D_n}\int a\circ \sigma^{2|k-j|}(x, \omega) a(x) d\nu_x(\omega)
% \\&=\frac1n \frac{1}{T^2}\sum_{k=0}^{T-1}\sum_{j=0}^{T-1} \sum_{x\in D_n}S_{2|k-j|}a (x)a(x)
 \end{align*}
 For every $x \in \mathfrak{X}$ and $k \in \N$, define the partition of $\Omega$ in $(q+1)q^{k-1}$ sets\footnote{See section \ref{s:PDC} for a definition of the boundary $\Omega$ and of the notation $[x,\omega)$.}
 \[\Omega(x,y) := \left\{ \omega \in \Omega \: | \: [x,\omega) = (x,x_1,x_2,\ldots) \text{ and } x_k = y \right\}, \]
  where $y \in \mathfrak{X}$ and $d(x,y) = k$. 
 Then $\omega \mapsto a\circ\sigma^k(x,\omega)$ is constant on $\Omega(x,y)$ for every $y \in \mathfrak{X}$ such that $d(x,y) = k$, and $\nu_x(\Omega(x,y)) = \frac1{(q+1)q^{k-1}}$. We have
 \begin{align*}
  \sum_{x\in D_n}\int_\Omega a\circ \sigma^{k}(x, \omega) a(x) d\nu_x(\omega) &= \sum_{x\in D_n}\sum_{\substack{y\in\mathfrak{X}\\d_{\mathfrak{X}}(x,y)=k}} \int_{\Omega(x,y)} a\circ \sigma^{k}(x, \omega) a(x) d\nu_x(\omega)\\
  &=\sum_{x\in D_n}S_{k}a(x)a(x),
 \end{align*}
 where $S_0 = \text{Id}$ and for all $k \geq 1$, $S_k$ is the stochastic operator defined as follows by its kernel on the tree ${\mathfrak{X}}$~:
 \begin{equation}\label{e:sk}S_k f (x)= \frac{1}{(q+1)q^{k-1}}\sum_{d_{\mathfrak{X}}(x, y)=k} f(y).\end{equation}
 We thus have
 \begin{align*}
 \frac1n \sum_{x\in D_n}\int |a^T(x, \omega)|^2 d\nu_x(\omega)
 &=\frac1n \frac{1}{T^2}\sum_{k=0}^{T-1}\sum_{j=0}^{T-1} \sum_{x\in D_n}S_{2|k-j|}a (x)a(x).
 \end{align*}
 
 On the quotient $G_n$, the spectrum of $S_k$ is the set $\left\{  \phi_{s_j}(k), j=1, \ldots, n \right\}  $  
 where $\phi_s$ is the spherical function,
 \begin{equation}\label{e:spherical}
   \phi_s(k)=q^{-k/2}\left(\frac2{q+1}\cos (ks\ln(q)) + \frac{q-1}{q+1}\frac{\sin (k+1)s \ln(q)}{\sin s \ln(q)} \right)
 \end{equation}
 and $\{ s_j, j=1, \ldots, n \}$ are the spectral parameters for the operator $A$ defined by \eqref{e:laplacian}.
 
 Using the parameterization \eqref{e:parametrization} of the spectrum, the eigenvalue $\lambda=1$ corresponds to
 $$is  = \frac12.$$
 %$$is   + i \frac{\pi}{\ln q}=   \frac12$$
 Because of the (EXP) condition, the other untempered eigenvalues satisfy $is\in (0, \frac12-\beta)$ or $is   + i \frac{\pi}{\ln q}\in (0, \frac12-\beta)$, for some $\beta>0$ independent of $n$. It follows that $|\phi_{s_j}(k)| \leq C q^{-\beta k}$ with $C,\beta$ independent of $n$. The eigenvalues of the self-adjoint stochastic operator
 $\frac{1}{T^2}\sum_{k=0}^{T-1}\sum_{j=0}^{T-1} S_{2|k-j|}$ are therefore bounded by 
 \[ \frac{C}{T^2}\sum_{k=0}^{T-1}\sum_{j=0}^{T-1} q^{-2\beta|k-j|} \leq \frac{C}{2T(1-q^{-2\beta})}  \]
 in modulus. They are contained in $\{1\}\cup \left[-\frac{C}{T\beta}, \frac{C}{T\beta}\right]$ for some $C$, independent of $n, T, \beta$ (the eigenvalue $1$ has multiplicity $1$, corresponding to the constant function).
 
 Thus, if $a$ satisfies $\sum_{x\in V_n}a(x)=0$ and $\sup_x |a(x)|\leq 1$, we have
  $$\frac1n \frac{1}{T^2}\sum_{k=0}^{T-1}\sum_{j=0}^{T-1} \sum_{x\in D_n}S_{2|k-j|}a (x)a(x)=O\left(\frac1{T\beta}\right).$$

 \subsection{Conclusion}
  We obtain, using the results of the previous sections and the Kesten-McKay law (Corollary \ref{c:weyl}),
  \begin{align*}
   \frac1{N(I_n, G_n)} \sum_{s_j\in I_n} \left| \la \psi_j, a\psi_j\ra\right|^2 
   &\leq O\left(\frac{r^3}{\delta_n(r\delta_n)^\infty} \right)+ O\left(\frac{T^2}{r^2}\right) + O(T^2\delta_n^2) \\
   &\quad+ O(T^2q^r\alpha_n) + O\left(\frac1{T\beta}\right).
  \end{align*}
 If we choose the sequences $r = r_n$ and $\delta_n$ satisfying $q^r\alpha_n \To 0$ and  $\frac{r^3}{\delta_n(r\delta_n)^K}\To 0$ for some integer $K $, we finally have
 $$ \limsup_{n \to \infty} \frac1{N(I_n, G_n)} \sum_{s_j\in I_n} \left| \la \psi_j, a\psi_j\ra\right|^2= O\left(\frac1{T\beta}\right).$$
 As the left-hand side of the equality does not depend on $T$, we take the limit $T \To \infty$ to obtain
 $$ \lim_{n \to \infty} \frac1{N(I_n, G_n)} \sum_{s_j\in I_n} \left| \la \psi_j, a\psi_j\ra\right|^2=0. $$

\section{Elements of pseudodifferential calculus}\label{s:PDC}
In \S \ref{s:tree}--\ref{s:class} we recall some of the tools of pseudodifferential calculus that were introduced in \cite{LM}. However, the following important remark has to be made~: in order for Theorems \ref{t:main} and \ref{t:gen} to have full strength, we should not impose on the symbols $a$ too strong regularity conditions, that would have the effect of making the theorems trivial consequences of the (EXP) condition. Thus, we pay attention to only use from \cite{LM} the properties that do not require regularity of $a$ with respect to the $x$-variable.

In the following sections we try to construct a pseudodifferential calculus on the quotient.

\subsection{Definition of $\Op(a)$ on the infinite $(q+1)$-regular tree.\label{s:tree}}
Let $\Omega$ be the boundary of the tree. It is the set of equivalence classes of infinite half-geodesics of ${\mathfrak{X}}$ for the relation~: two half-geodesics $(x_1,x_2,x_3,\ldots)$ and $(y_1,y_2,y_3,\ldots)$ are equivalent iff there exist $k, N \in \N$ such that for all $n \geq N$, $x_{n+k} = y_n$. For every $\omega \in \Omega$, we will denote by $[x,\omega)$ the unique half-geodesic starting at $x$ and equivalent to $\omega$.

Let $a: {\mathfrak{X}}\times\Omega\times [0, \tau]\To \C$ be a bounded measurable function.
In \cite{LM}, the operator $\Op_{\mathfrak{X}}(a)$ was defined by
$$\Op_{\mathfrak{X}}(a) u(x)=\sum_{y\in {\mathfrak{X}}}\int_{\Omega}\int_0^\tau q^{\left(\frac12+is\right)(h_\omega(y)-h_\omega(x))}a(x, \omega, s)u(y) d\nu_x(\omega) d\mu(s)$$
for every $u:{\mathfrak{X}}\to \C$ with finite support. Here $h_\omega(x)$ is the height function (or Busemann function), $d\mu(s)$ is the Plancherel measure associated to the $(q+1)$-regular tree\footnote{$d\mu(s)=|c(s)|^{-2}ds$ where $c$ is the Harish-Chandra function.}, and $\nu_x$ is the harmonic measure on $\Omega$, seen from the point $x$. We refer to \cite{CS99} for more background.
We will denote by
$$K_{\mathfrak{X}}(x, y;a)=\int_{\Omega}\int_0^\tau q^{\left(\frac12+is\right)(h_\omega(y)-h_\omega(x))}a(x, \omega, s)  d\nu_x(\omega) d\mu(s)$$
the kernel of $\Op_{\mathfrak{X}}(a)$.

\subsection{Class of symbols\label{s:class}}
From \cite{CS99}, we know that the fact that $K_{\mathfrak{X}}(x, y;a)=0$ for $d_{\mathfrak{X}}(x, y)>D$ is equivalent to the four following conditions on $a$~:

\begin{itemize}
\item $a$ is a continuous function;
\item $a$ extends to a $2\tau$-periodic entire function of exponential type $D$ uniformly in $\omega$; i.e. for all $x$ there exists $C(x)>0$ such that
$$|a(x, \omega, z)|\leq C(x) q^{D|\Im m(z)|}\qquad \forall \omega\in\Omega, \forall x\in {\mathfrak{X}};$$
\item $a$ satisfies the symmetry condition
$$\int_{\Omega}q^{(\frac12-is)(h_\omega(y)-h_\omega(x))}a(x, \omega, s)d\nu_x(\omega)=\int_{\Omega}q^{(\frac12+is)(h_\omega(y)-h_\omega(x))}a(x, \omega, -s)d\nu_x(\omega)$$
for all $s\in\C$, $x\in {\mathfrak{X}}$.
\item $a$ is a $D$-cylindrical function, that is~: if the two half-geodesics $[x, \omega)=(x_0, x_1, x_2,\ldots)$ and  $[x', \omega')=(x'_0, x'_1, x'_2,\ldots)$ satisfy $x_j=x'_j$ for $0\leq j\leq D$, then $a(x, \omega, s)=a(x', \omega', s)$.
\end{itemize}

We shall denote by $S_o^D({\mathfrak{X}})$ the class of such functions. In \cite{LM}, another class of symbols was considered~:

\begin{definition} $S({\mathfrak{X}}) $ is the class of functions $a: {\mathfrak{X}}\times\Omega\times [0, \tau]\To \C$ such that
\begin{itemize}
\item for every $(x, \omega, s)$, for every $k \in \N$, $\partial_s^k a(x, \omega, s)$ is continuous on $\mathfrak{X}\times\Omega\times [0, \tau]$, and for every $l \in \N$, there exists $C_l>0$ such that, for all $n \in \N$, for every $(x, \omega, s)$, $$\left| (a-\mathcal{E}_n^x a)(x, \omega, s)\right|\leq\frac{C_l}{(1+n)^l}.$$
\item for every $k \in \N$, and every $(x, \omega)$, $\partial_s^k a(x, \omega, 0)=\partial_s^k a(x, \omega, \tau)=0.$ 
\end{itemize}
\end{definition}
Here $\mathcal{E}_n^x a(x,\omega,s)$ is the projection of $a(x,\omega,s)$ on functions depending only on the first $n$ vertices of the half-geodesic $[x,\omega)$ (see \cite{LM} for a formula). In particular, $\mathcal{E}_n^x a(x,s) = a(x,s)$ if $a$ does not depend on $\omega$.

It is proven in \cite{LM} that $S({\mathfrak{X}})$ endowed with usual addition and multiplication is an algebra. This makes it more suitable for semiclassical analysis than the class $S_o^D({\mathfrak{X}})$. It also has the property, crucial for us, that
$$a\in S({\mathfrak{X}})\Longrightarrow a\circ\sigma\in S({\mathfrak{X}})$$
where $\sigma$ is the shift, $\sigma(x, \omega)=(x_1, \omega)$ if $[x, \omega)=(x_0, x_1, x_2,\ldots)$.
It is proven in \cite{LM} that
 \begin{equation}\label{e:decay}
  |K_{{\mathfrak{X}}}(x,y;a)| \leq C \left(\norm{a}_{\Omega,M} + \sum_{k=0}^{M+1} \norm{\partial_s^k a}_\infty \right) \frac{q^{-\frac{d(x,y)}{2}}}{(1+d(x,y))^M}
 \end{equation}
 for any $M$, where $\norm{a}_{\Omega,M} = \sup_{(x,\omega,s)} \sup_n (1+n)^M |a-\mathcal{E}_n^x a|(x,\omega,s) $, and that as a consequence, $\Op_{\mathfrak{X}}(a)$ extends to a bounded operator on $L^2({\mathfrak{X}})$ if $a\in S({\mathfrak{X}})$.
 
If $a(x, \omega, s)=a(x)$ depends only on $x$, then $\Op_{\mathfrak{X}}(a)$ is the operator of multiplication by $a$. At several places we will use the fact that
 \begin{equation}\Op_{\mathfrak{X}}(a)\Op_{\mathfrak{X}}(\varphi)=\Op_{\mathfrak{X}}(a\varphi)\label{e:obvproduct}\end{equation}
 if $\varphi=\varphi(s)$ only depends on the last variable and $a(x, \omega, s)\in S(\mathfrak{X})$, say.

 In most of what follows, we will actually need very few conditions on the symbols $a(x,\omega,s)$. Essentially it will be required that $a \in L^\infty$ and that for every $x \in {\mathfrak{X}}$, $a(x,\cdot,\cdot) \in L^2(\Omega \times [0,\tau])$. For convenience, we can assume that $a \in S_o^D({\mathfrak{X}})$ for some $D \in \N$, or that $a \in S({\mathfrak{X}})$. In Lemma \ref{l:produit} we use the condition  $a \in S_o^D({\mathfrak{X}})$.
 
\subsection{Definition of $\Op_{G_n}(a)$ on a finite graph.} Recall that $G_n$ is written as a quotient $\Gamma_n\backslash {\mathfrak{X}}$, where $\Gamma_n$ is a group of automorphisms of ${\mathfrak{X}}$, whose elements act without fixed points.

Let us now assume that $a$ is $\Gamma_n$-invariant, meaning that $a(\gamma\cdot x, \gamma\cdot\omega, s)=a(x, \omega, s)$ for all $(x, \omega, s)$ and all $\gamma\in\Gamma_n$ (where the action of $\Gamma_n$ on the boundary $\Omega$ is obtained by extending its action on ${\mathfrak{X}}$). For a $\Gamma_n$-invariant symbol, we have
$$K_{\mathfrak{X}}(\gamma\cdot x,\gamma\cdot y;a)= K_{{\mathfrak{X}}}(x,y;a)$$
for all $x, y\in {\mathfrak{X}}$ and $\gamma\in\Gamma_n$. The proof of this fact is identical to the proof of Proposition 1.1 in \cite{Zel86}.

We now define $\Op(a)$ on the quotient.

\begin{definition} \label{d:maindef}Assume the sequence $(G_n)$ satisfies (EIIR). Let $r=r_n$ be a positive number.

If $a$ is $\Gamma_n$-invariant, we define $\Op_{G_n}(a)$ to be the operator with $\Gamma_n$-bi-invariant kernel $$K_{G_n}(x, y;a)=\sum_{\gamma\in \Gamma_n}K_{\mathfrak{X}}(x,\gamma\cdot y;a) \chi\left(\frac{d(x,\gamma\cdot y)}{r}\right).$$
\end{definition}

Here $\chi$ is a cut-off function that satisfies the conditions of \S \ref{s:qvar} (although it need not be the same cut-off as in \S \ref{s:qvar}, we use the same notation).

Compared to the case of manifolds, a difficulty we meet is that we are not able to prove that  $\Op_{G_n}(a)$ is bounded on $L^2(V_n)$ independently of $n$ (actually, inspection of simple examples show that our conditions on $a$ are not sufficient to ensure this). Note however that we are only interested in $\Op_{G_n}(a)\psi^{(n)}_j$ for $\lambda^{(n)}_j$ in the tempered spectrum; more precisely, we shall only need to estimate quantities such as
$\frac1{N(I_n, G_n)} \sum_{s^{(n)}_j\in I_n} \left| \la \psi^{(n)}_j, \Op_{G_n}(a)\psi^{(n)}_j\ra\right|^2.$
For that purpose it will be sufficient to know that the Hilbert-Schmidt norm of $\Op_{G_n}(a)$ does not grow too fast~:

\begin{lemma} \label{l:HSnorm}We have for every $r \geq 0$
\begin{multline*}
\norm{\Op_{G_n}(a)}^2_{HS} 
\leq \sum_{\substack{x\in D_n \\ \rho(x)\geq r}}\sum_{\substack{y\in {\mathfrak{X}} \\ d(y, x)\leq r}}|K_{\mathfrak{X}}(x, y; a)|^2 + q^r \sum_{\substack{x\in D_n \\ \rho(x)\leq r}}\sum_{\substack{y\in {\mathfrak{X}} \\ d(y, x)\leq r}}|K_{\mathfrak{X}}(x, y; a)|^2 \\
\leq \sum_{x\in D_n}\int |a(x, \omega, s)|^2 d\nu_x(\omega)d\mu(s) + q^r \sum_{\substack{x\in D_n \\ \rho(x)\leq r}}\int |a(x, \omega, s)|^2 d\nu_x(\omega)d\mu(s) 
\end{multline*}
\end{lemma}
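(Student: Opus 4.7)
The plan is to expand the Hilbert--Schmidt norm, unfold the kernel on the tree, and conclude with the Plancherel identity for the Helgason--Fourier transform on ${\mathfrak{X}}$. Concretely, identifying $V_n$ with the fundamental domain $D_n$,
\[
\norm{\Op_{G_n}(a)}^2_{HS}=\sum_{x,y\in D_n}|K_{G_n}(x,y;a)|^2=\sum_{x\in D_n}\sum_{y\in D_n}\Bigl|\sum_{\gamma\in\Gamma_n}K_{\mathfrak{X}}(x,\gamma\cdot y;a)\,\chi(d(x,\gamma\cdot y)/r)\Bigr|^2,
\]
and since $D_n$ is a fundamental domain and $\Gamma_n$ acts freely, the map $(y,\gamma)\mapsto\gamma\cdot y$ from $D_n\times\Gamma_n$ to ${\mathfrak{X}}$ is a bijection; the cut-off restricts the contributing $\gamma\cdot y$ to $B(x,r)$.

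First I would fix $x\in D_n$ and handle the two injectivity regimes separately. When $\rho(x)\geq r$, the ball $B(x,r)\subset{\mathfrak{X}}$ projects injectively to $G_n$, hence for each $y\in D_n$ there is at most one $\gamma\in\Gamma_n$ with $\gamma\cdot y\in B(x,r)$. Using $|\chi|\leq 1$ and summing over $y\in D_n$, the bijection above yields
\[
\sum_{y\in D_n}|K_{G_n}(x,y;a)|^2\;\leq\;\sum_{y'\in{\mathfrak{X}},\,d(x,y')\leq r}|K_{\mathfrak{X}}(x,y';a)|^2.
\]
When $\rho(x)<r$, several $\gamma$'s may contribute. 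I would apply Cauchy--Schwarz to the inner sum, noting that the number of $\gamma\in\Gamma_n$ with $\gamma\cdot y\in B(x,r)$ is bounded by $|B(x,r)|=O(q^r)$. After summing over $y\in D_n$ and invoking the same bijection, this gives the factor $q^r$ in front of the second term, establishing the first inequality of the lemma.

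For the second inequality, I would recall that for fixed $x$ the map $a(x,\cdot,\cdot)\mapsto K_{\mathfrak{X}}(x,\cdot\,;a)$ is (up to the harmonic measure / Plancherel measure normalization chosen in \cite{LM,CS99}) the inverse Helgason--Fourier transform on the $(q+1)$-regular tree, so Plancherel's identity reads
\[
\sum_{y'\in{\mathfrak{X}}}|K_{\mathfrak{X}}(x,y';a)|^2=\int_{\Omega}\int_0^{\tau}|a(x,\omega,s)|^2\,d\nu_x(\omega)\,d\mu(s).
\]
Extending each of the two sums over $y'\in B(x,r)$ obtained above to the full sum over $y'\in{\mathfrak{X}}$ and then summing over $x\in D_n$ gives exactly the claimed second line.

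The only genuinely delicate point is the step for $\rho(x)\geq r$, where one must verify that the cut-off truly decouples the $\gamma$-sum into at most one term per $y\in D_n$; this uses that $B(x,r)$ embeds isometrically into $G_n$, so that two distinct $\gamma_1,\gamma_2$ with $\gamma_1 y,\gamma_2 y\in B(x,r)$ would force $\gamma_1 y=\gamma_2 y$ in $G_n$ and hence $\gamma_1=\gamma_2$ by freeness. Everything else is a direct combination of a Cauchy--Schwarz estimate with the Plancherel identity on the tree; no regularity of $a$ in $x$ is used.
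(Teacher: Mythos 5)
Your proof is correct and follows essentially the same route as the paper's: expand the Hilbert--Schmidt norm, split according to whether $\rho(x)\geq r$ (one term in the $\gamma$-sum) or $\rho(x)<r$ (Cauchy--Schwarz with at most $q^r$ terms), extend the sum over $y$ to all of ${\mathfrak{X}}$, and apply Plancherel for the Fourier--Helgason transform. The only cosmetic difference is that you spell out the unfolding bijection $(y,\gamma)\mapsto\gamma\cdot y$ and the injectivity argument for $\rho(x)\geq r$, which the paper leaves implicit.
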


\begin{proof}
By definition we have
\begin{align*}
\norm{\Op_{G_n}(a)}^2_{HS} &= \sum_{x,y \in D_n}  |K_{G_n}(x, y; a)|^2 \\
  &= \sum_{x,y \in D_n} \left| \sum_{\gamma\in \Gamma_n}K_{\mathfrak{X}}(x,\gamma\cdot y;a) \chi\left(\frac{d(x,\gamma\cdot y)}{r}\right) \right|^2.
\end{align*}
We split the sum into two parts, whether $\rho(x) \geq r$ or not. If $\rho(x) \geq r$, then the sum over $\gamma \in \Gamma_n$ is reduced to only one term, thanks to the cut-off function. If $\rho(x) \leq r$, then there are at most $q^r$ terms in the sum over $\gamma \in \Gamma_n$, and we can use Cauchy-Schwarz inequality to bound it as follows 
\begin{align*}
\norm{\Op_{G_n}(a)}^2_{HS} 
  &\leq \sum_{\substack{x\in D_n \\ \rho(x)\geq r}}\sum_{\substack{y\in {\mathfrak{X}} \\ d(y, x)\leq r}}|K_{\mathfrak{X}}(x, y; a)|^2 + q^r \sum_{\substack{x\in D_n \\ \rho(x)\leq r}}\sum_{\substack{y\in {\mathfrak{X}} \\ d(y, x)\leq r}}|K_{\mathfrak{X}}(x, y; a)|^2\\
  &\leq \sum_{\substack{x\in D_n \\ \rho(x)\geq r}}\sum_{y\in {\mathfrak{X}} }|K_{\mathfrak{X}}(x, y; a)|^2 
  + q^r \sum_{\substack{x\in D_n \\ \rho(x)\leq r}}\sum_{y\in {\mathfrak{X}} }|K_{\mathfrak{X}}(x, y; a)|^2.
\end{align*}
Plancherel formula for the Fourier-Helgason transform, applied to $y \mapsto K_{\mathfrak{X}}(x, y; a)$ with $x$ fixed, converts this last expression to
\begin{align*}
\norm{\Op_{G_n}(a)}^2_{HS} 
  &\leq \sum_{x\in D_n}\int |a(x, \omega, s)|^2 d\nu_x(\omega)d\mu(s)\\
  &\quad+ q^r \sum_{\substack{x\in D_n \\ \rho(x)\leq r}}\int |a(x, \omega, s)|^2 d\nu_x(\omega)d\mu(s).
\end{align*}
\end{proof}

\subsection{``Egorov''-type properties\label{s:Egorov}}
For Quantum Ergodicity on manifolds, the Egorov theorem is a statement saying that the matrix elements $\la \psi_j, \Op(a)\psi_j\ra$ remain almost invariant when transporting $a$ along the geodesic flow, when $\psi_j$ are the eigenfunctions of the Laplace-Beltrami operator $\Delta$. This is proven by showing that taking the bracket $[\Delta, \Op(a)]$ amounts to differentiating $a$ along the geodesic flow (up to some ``negligible'' error term). Here we try to perform a similar calculation.

We define a map $\sigma~: {\mathfrak{X}}\times\Omega\To {\mathfrak{X}}\times \Omega$ by $\sigma(x, \omega)=(x', \omega)$ where $x'\in {\mathfrak{X}}$ is the unique point such that $d_{\mathfrak{X}}(x, x')=1$ and $x'$ belongs to the half-geodesic $[x, \omega)$. If $[x, \omega)=(x, x_1, x_2, x_3, \ldots)$, then $\sigma(x, \omega)=(x_1, \omega)$, corresponding to the half-geodesic $[x_1, \omega)=(x_1, x_2, x_3, \ldots)$. In symbolic dynamics, $\sigma$ is the shift, and if we compare with Quantum Ergodicity on manifolds, $\sigma$ plays the role of the ``geodesic flow'' on phase space ${\mathfrak{X}}\times\Omega$. This map is not invertible, actually each point has exactly $q$ pre-images. We shall denote by $U$ the operator $a\mapsto a\circ\sigma$. For $a:{\mathfrak{X}}\times\Omega\To \C$, we define $La:{\mathfrak{X}}\times\Omega\To \C$ by
\[ La(x, \omega)=\frac1q\sum_{y\in {\mathfrak{X}}, \sigma(y, \omega)=(x, \omega)} a(y, \omega). \]
If $a$ and $b$ are compactly supported functions, we have
$$\sum_{x\in {\mathfrak{X}}}\int_{\Omega} a\circ\sigma(x, \omega) b(x, \omega)d\nu_x(\omega)=
\sum_{x\in {\mathfrak{X}}}\int_{\Omega} a(x, \omega) Lb(x, \omega)d\nu_x(\omega),$$
in other words $L$ is the adjoint of $U$ on the Hilbert space $L^2({\mathfrak{X}}\times\Omega, \sum_{x} \delta_x d\nu_x(\omega) )$. In addition, we also have $LU=I$, reflecting the fact that $U$ is an isometry of $L^2({\mathfrak{X}}\times\Omega, \sum_{x} \delta_x d\nu_x(\omega) )$.
The operators $U $ and $L $ preserve the $\Gamma_n$-invariant functions.
If $a$ and $b$ are a $\Gamma_n$-invariant functions, we still have 
\begin{equation}\label{e:adjoint}
 \sum_{x\in D_n}\int_{\Omega} a\circ\sigma(x, \omega) b(x, \omega)d\nu_x(\omega)=\sum_{x\in D_n}\int_{\Omega} a(x, \omega) Lb(x,\omega)d\nu_x(\omega).
\end{equation}
Recall that $D_n$ is a fundamental domain for the action of $\Gamma_n$ on ${\mathfrak{X}}$.

In what follows, we extend the definition of $U$ and $L$ to functions on ${\mathfrak{X}}\times \Omega\times \IR$, by a trivial action on the last component. The crucial bracket calculation is the following~:

\begin{prop}\label{p:egorov} If $a\in S({\mathfrak{X}})$ is $\Gamma_n$-invariant, then
$$[\Delta, \Op_{G_n}(a)]= \Op_{G_n}(c) + R$$
where $c \in S({\mathfrak{X}})$ is a symbol given by
\[c(x,\omega,s) = \frac{q^{1/2}}{q+1} \left( q^{is} (a\circ\sigma -a)(x,\omega,s) + q^{-is} (La-a)(x,\omega,s) \right), \]
and $R$ is a remainder such that
\begin{multline*}
 \norm{R}^2_{HS}\leq O\left(\frac{1}{r^2}\right) \Bigg( \sum_{x\in D_n}\int |a(x, \omega, s)|^2 d\nu_x(\omega)d\mu(s) \\
 + q^{r+2} \sum_{\substack{x\in D_n \\ \rho(x)\leq r+2}}\int |a(x, \omega, s)|^2 d\nu_x(\omega)d\mu(s) \Bigg).
\end{multline*}
Since $\la \psi_j, [\Delta, \Op_{G_n}(a)] \psi_j\ra=0$ for every laplacian eigenfunction $\psi_j$, this implies
\begin{multline}\label{e:egorov}
 \sum_j |\la \psi_j, \Op_{G_n}(c)\psi_j\ra|^2 
  \leq O\left(\frac{1}{r^2}\right) \Bigg( \sum_{x\in D_n}\int |a(x, \omega, s)|^2 d\nu_x(\omega)d\mu(s) \\
  + q^{r+2}  \sum_{\substack{x\in D_n\\ \rho(x)\leq r+2}}\int |a(x, \omega, s)|^2 d\nu_x(\omega)d\mu(s) \Bigg).
\end{multline}
\end{prop}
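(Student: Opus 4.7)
My plan is to deduce the identity from a tree‐level commutator formula, then transfer to the quotient $G_n = \Gamma_n\backslash{\mathfrak{X}}$ and estimate the error produced by the cut‑off $\chi(d/r)$ in Definition~\ref{d:maindef}. Note first that $[\Delta,\Op_{G_n}(a)] = [A,\Op_{G_n}(a)]$ because $\Delta = A - I$.

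On the tree, the key observation is that for every fixed $\omega$ and $s$ the ``plane wave'' $y \mapsto q^{(1/2+is)(h_\omega(y)-h_\omega(x))}$ is a joint eigenfunction of $A$ in $y$ (and of $A$ in $x$, with opposite exponent) with eigenvalue $\tfrac{q^{1/2}}{q+1}(q^{is}+q^{-is})$. Applying $A$ to $K_{\mathfrak{X}}(\cdot,y;a)$ and splitting the neighbor sum into the unique neighbor lying on the half‑geodesic $[x,\omega)$ (which shifts $h_\omega$ by $+1$ and $x$ to $\sigma(x,\omega)_1$, producing the factor $q^{is}$ and the replacement $a \leadsto a\circ\sigma$) and the $q$ remaining neighbors (shifting $h_\omega$ by $-1$, producing $q^{-is}$ and the replacement $a \leadsto La$), and comparing with $\Op_{\mathfrak{X}}(a)A$, one gets the identity $[A,\Op_{\mathfrak{X}}(a)] = \Op_{\mathfrak{X}}(c)$ on the tree, with $c$ as in the statement. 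This is essentially the algebraic manipulation behind the Egorov identity of~\cite{LM}.

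To pass to $G_n$, write the kernel of $[A,\Op_{G_n}(a)]$ at $(x,y)\in D_n\times D_n$ as
\[
\tfrac{1}{q+1}\sum_{\gamma\in\Gamma_n}\!\!\Bigl(\sum_{z\sim x}K_{\mathfrak{X}}(z,\gamma y;a)\chi\!\left(\tfrac{d(z,\gamma y)}{r}\right)
-\sum_{z\sim y}K_{\mathfrak{X}}(x,\gamma z;a)\chi\!\left(\tfrac{d(x,\gamma z)}{r}\right)\Bigr).
\]
Adding and subtracting the $\chi$‑factor evaluated at $d(x,\gamma y)/r$ splits the kernel into (i) the sum where $A$ hits only $K_{\mathfrak{X}}$, which by the tree identity and $\Gamma_n$-invariance of $c$ recombines into $K_{G_n}(x,y;c)$, and (ii) the remainder $R(x,y)$ where $A$ hits only $\chi$. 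Since $\chi$ is smooth on scale $r$ and $A$ moves by one step, for each $\gamma$ the $\chi$-differences are $O(1/r)$ and supported on the annulus $d(x,\gamma y)\in[r/2-1,\,r+1]$.

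For the Hilbert–Schmidt bound I would mimic the proof of Lemma~\ref{l:HSnorm}: split $x\in D_n$ according to whether $\rho(x)\ge r+2$, in which case only one $\gamma$ contributes, or $\rho(x)\le r+2$, in which case Cauchy–Schwarz introduces the combinatorial factor $q^{r+2}$. Using $|R(x,y)|^2 \le \tfrac{C}{r^2}\sum_{\gamma}|K_{\mathfrak{X}}(x,\gamma y;a)|^2$ and the Plancherel formula for the Fourier–Helgason transform in the $y$-variable then yields the stated estimate on $\|R\|_{HS}^2$. Finally, $\langle \psi_j, [\Delta,\Op_{G_n}(a)]\psi_j\rangle = 0$ gives $\langle\psi_j,\Op_{G_n}(c)\psi_j\rangle = -\langle\psi_j,R\psi_j\rangle$, and Bessel's inequality $\sum_j |\langle\psi_j, R\psi_j\rangle|^2 \le \|R\|_{HS}^2$ furnishes~\eqref{e:egorov}. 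The main obstacle is the careful bookkeeping of the cut‑off error: one must make sure that the combinatorial factor $q^{r+2}$ is activated only on the small set $\{\rho(x)\le r+2\}$ (which is why the assumption $r_n+2\le R_n$ was imposed before Theorem~\ref{t:main}), and that no additional loss appears from the $\Gamma_n$-averaging on the good set $\{\rho(x)\ge r+2\}$.
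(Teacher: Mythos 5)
Your proposal is correct and follows essentially the same route as the paper: the tree-level identity $[\Delta,\Op_{\mathfrak{X}}(a)]=\Op_{\mathfrak{X}}(c)$ (which the paper simply cites from \cite{LM}, while you sketch its derivation via the plane-wave eigenfunction property), the decomposition of the quotient kernel into the main term plus the $O(1/r)$ cut-off error, the Hilbert--Schmidt estimate via the $\rho(x)\gtrless r+2$ dichotomy, Cauchy--Schwarz and Plancherel, and the final passage from $\la\psi_j,[\Delta,\Op_{G_n}(a)]\psi_j\ra=0$ to \eqref{e:egorov} via $\sum_j|\la\psi_j,R\psi_j\ra|^2\le\norm{R}_{HS}^2$. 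The bookkeeping points you flag (the combinatorial factor $q^{r+2}$ only on the short-injectivity-radius set, no loss on the good set) are exactly the ones the paper's proof addresses.
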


\begin{remark}
  An exact analogue of the usual Egorov theorem on manifolds would require an estimate of $\norm{\Op(c)}_{L^2}$ and show that it is $0$ up to a vanishing remainder term. Here, due to our use of the Hilbert-Schmidt norm, we will only show this for the average
  \[ \frac{1}{N(I_n,G_n)} \sum_{s^{(n)}_j\in I_n} |\la \psi^{(n)}_j, \Op_{G_n}(c)\psi^{(n)}_j\ra|^2, \]
  which is sufficient to prove our theorem.
 \end{remark}

\begin{proof}
Let us denote by $K_{G_n}(x,y; [.])$ the kernel of $[\Delta, \Op_{G_n}(a)]$. We know from \cite{LM} that $K_{\mathfrak{X}}(x,y;c)$ is the kernel of $[\Delta, \Op_{\mathfrak{X}}(a)]$. We are interested in the difference $K_{G_n}(x,y; [.]) -K_{G_n}(x,y;c)$, which is the kernel of the operator $R$. 

We have
\begin{align}
 K_{G_n}(x,y; [.]) &= \frac{1}{q+1} \left( \sum_{d(x,z)=1} K_{G_n}(z,y;a) - \sum_{d(z,y)=1} K_{G_n}(x,z;a) \right)\nonumber \\
 &= \frac{1}{q+1} \sum_{\gamma \in \Gamma_n} K_{\mathfrak{X}}(x,\gamma\cdot y),\label{e:commutatorG}
\end{align}
where
\begin{equation}\label{e:commutatorX}
  K_{\mathfrak{X}}(x,y) :=  \sum_{d(x,z)=1} K_{\mathfrak{X}}(z,y;a) \chi\left( \frac{d(z, y)}{r} \right) - \sum_{d(z, y)=1} K_{\mathfrak{X}}(x,z;a) \chi\left( \frac{d(x,z)}{r} \right)
\end{equation}
Because of the cut-off functions, the sum \eqref{e:commutatorG} only runs on those $\gamma \in \Gamma_n$ for which $d(x,\gamma\cdot y) \leq r+1$; and in \eqref{e:commutatorX} we have $K_{\mathfrak{X}}(x,y) = K_{\mathfrak{X}}(x,y) \bbbone_{ \{d(x,y) \leq r+1\} }$.

In the first sum of the right-hand side of equality \eqref{e:commutatorX}, $d(z,y) = d(x,y) \pm 1$, because $x$ and $z$ are neighbours. In the second sum $d(x,z) = d(x,y) \pm 1$, because $z$ and $y$ are neighbours. Since $\chi$ is a smooth function, both $\chi\left( \frac{d(z,y)}{r} \right)$ and $ \chi\left( \frac{d(x,z)}{r} \right)$ are equal to $\chi\left( \frac{d(x,y)}{r} \right) + O\left(\frac{1}{r}\right)$, and we have

\begin{align*}
 K_{{\mathfrak{X}}}(x,y) &= \left( \sum_{d(x,z)=1} K_{\mathfrak{X}}(z,y;a) - \sum_{d(z,y)=1} K_{\mathfrak{X}}(x,z;a) \right) \chi\left( \frac{d(x,y)}{r} \right) \\
 & \quad + O\left(\frac{1}{r}\right) \left( \sum_{d(x,z)=1} |K_{\mathfrak{X}}(z,y;a) |+ \sum_{d(z,y)=1}| K_{\mathfrak{X}}(x,z;a)| \right) \bbbone_{ \{d(x,y) \leq r+1\} } \\
 &= (q+1)K_{\mathfrak{X}}(x,y;c) \chi\left( \frac{d(x,y)}{r} \right) \\
 & \quad +O\left(\frac{1}{r}\right) \left( \sum_{d(x,z)=1} |K_{\mathfrak{X}}(z,y;a) |+ \sum_{d(z,y)=1}| K_{\mathfrak{X}}(x,z;a)| \right) \bbbone_{ \{d(x,y) \leq r+1\} }.
\end{align*}
Now if we go back to \eqref{e:commutatorG} we get $K_{G_n}(x,y; [.]) =K_{G_n}(x,y;c) + K_{G_n}(x, y; R)$, where $K_{G_n}(x, y; R)$ is the kernel of the operator $R$, given by
\[ K_{G_n}(x, y; R)= O\left(\frac{1}{r}\right) \sum_{\gamma \in \Gamma_n}  
 \left( \sum_{\substack{z\in {\mathfrak{X}} \\ d(x,z)=1}} |K_{\mathfrak{X}}(z,\gamma\cdot y;a) |+ \sum_{\substack{z\in {\mathfrak{X}} \\ d(z,\gamma\cdot y)=1}}| K_{\mathfrak{X}}(x,z;a)| \right) \bbbone_{ \{d(x,\gamma\cdot y) \leq r+1\}}.\]
 We estimate the Hilbert-Schmidt norm of $R$ by first writing
\begin{multline*}
 \sum_{x \in D_n} \sum_{y \in D_n} \left|\sum_{\gamma\in\Gamma_n} \left( \sum_{\substack{z\in {\mathfrak{X}}\\ d_{\mathfrak{X}}(x,z)=1}} |K_{\mathfrak{X}}(z,\gamma\cdot y;a) |+ \sum_{\substack{z\in {\mathfrak{X}}\\ d_{\mathfrak{X}}(z,\gamma\cdot y)=1}}| K_{\mathfrak{X}}(x,z;a)| \right) \bbbone_{ \{d_{\mathfrak{X}}(x,\gamma\cdot y) \leq r+1\}} \right|^2 \\
 =
 \sum_{x \in D_n} \sum_{y \in D_n} \left|\sum_{\gamma\in\Gamma_n} \sum_{\substack{z\in {\mathfrak{X}}\\ d_{\mathfrak{X}}(x,z)=1}} |K_{\mathfrak{X}}(z,\gamma\cdot y;a) |\bbbone_{ \{d_{\mathfrak{X}}(z,\gamma\cdot y) \leq r+2\}}+ \sum_{\substack{z\in {\mathfrak{X}}\\ d_{\mathfrak{X}}(z,\gamma\cdot y)=1}}| K_{\mathfrak{X}}(x,z;a)|  \bbbone_{ \{d_{\mathfrak{X}}(x,\gamma\cdot y) \leq r+1\}} \right|^2.
 \end{multline*}
 We then use Cauchy-Schwarz and reason along the same lines as in lemma \ref{l:HSnorm}, to bound the former expression
 by
 \begin{align*}
&\leq{2(q+1)} \Bigg( \sum_{\substack{z\in D_n \\ \rho(z)\geq r+2}} \sum_{y \in {\mathfrak{X}}}\sum_{\substack{x \in D_n \\ d_{G_n}(z, x)=1}}| K_{\mathfrak{X}}(z,y;a)| ^2  + \sum_{\substack{x \in D_n \\ \rho(x)\geq r+1}} \sum_{y \in {\mathfrak{X}}}\sum_{\substack{z \in {\mathfrak{X}}\\ d_{\mathfrak{X}}(z, y)=1}}| K_{\mathfrak{X}}(x,z;a)| ^2 \\
&+q^{r+2}\sum_{\substack{z\in D_n\\ \rho(z)\leq r+2}} \sum_{y \in {\mathfrak{X}}}\sum_{\substack{x\in D_n\\ d_{G_n}(z, x)=1}}| K_{\mathfrak{X}}(z,y;a)| ^2   +  q^{r+1} \sum_{\substack{x \in D_n\\ \rho(x)\leq r+1}} \sum_{y \in {\mathfrak{X}}}\sum_{\substack{z\in {\mathfrak{X}}\\ d_{\mathfrak{X}}(z, y)=1}}| K_{\mathfrak{X}}(x,z;a)| ^2 \Bigg)\\
&\leq {2(q+1)^2} \Bigg( \sum_{\substack{z \in D_n\\ \rho(z)\geq r+2}} \sum_{y \in {\mathfrak{X}}} |K_{\mathfrak{X}}(z,y;a)|^2  + \sum_{\substack{x \in D_n\\ \rho(x)\geq r+1}}  \sum_{z \in {\mathfrak{X}}}| K_{\mathfrak{X}}(x,z;a)| ^2 \\
&\quad + q^{r+2}\sum_{\substack{z \in D_n\\ \rho(z)\leq r+2}} \sum_{y \in {\mathfrak{X}}} |K_{\mathfrak{X}}(z,y;a)|^2  +  q^{r+1} \sum_{\substack{x \in D_n\\ \rho(x)\leq r+1}} \sum_{z \in {\mathfrak{X}}} |K_{\mathfrak{X}}(x,z;a)|^2 \Bigg)
\end{align*}
\begin{align*}
&\leq 4(q+1)^2\sum_{x\in D_n}\int |a(x, \omega, s)|^2 d\nu_x(\omega)d\mu(s) + q^{r+2} \sum_{\substack{x\in D_n \\ \rho(x)\leq r+2}}\int |a(x, \omega, s)|^2 d\nu_x(\omega)d\mu(s).
\end{align*}
Finally we have
\begin{multline*}
 \norm{R}^2_{HS}\leq O\left(\frac{1}{r^2}\right) \Bigg( \sum_{x\in D_n}\int |a(x, \omega, s)|^2 d\nu_x(\omega)d\mu(s) \\
 + q^{r+2} \sum_{\substack{x\in D_n\\ \rho(x)\leq r+2}}\int |a(x, \omega, s)|^2 d\nu_x(\omega)d\mu(s) \Bigg).
\end{multline*}
\end{proof}

In what follows, Proposition \ref{p:egorov} will be translated into an invariance property of the type 
 \[ \frac{1}{N(I_n,G_n)} \sum_{s_j\in I_n} |\la \psi_j, \Op_{G_n}(a)\psi_j\ra|^2 \sim \frac{1}{N(I_n,G_n)} \sum_{s_j\in I_n} |\la \psi_j, \Op_{G_n}(Ta)\psi_j\ra|^2 \]
 for some operator $T$. A key idea is then to take advantage of the spectral properties of $T$ and its iterates $T^k$ for $k \in \N$. In the special case where $s_0=\tau/2$ we can take $T=U^2$ (Corollary \ref{c:egorovtau2}), which makes this spectral value special. For general values of $s_0$, $T$ is a linear combination with complex coefficients of the non-commuting operators $L$ and $U$, and its spectral properties are not so nice. The aim of the successive operations done in Corollaries \ref{c:mainegorov} to \ref{c:finalegorov} is to replace $Ta$ with $Ua$ up to some error term. We first replace $a$ with $q^{is}a\circ\sigma$ in \eqref{e:egorov} to obtain

\begin{corollary}  \label{c:mainegorov}
\begin{align*}\sum_j |\la \psi_j, &\Op\left( (U-I)(q^{2is}U-I)a \right)\psi_j\ra|^2 \\
  &\leq O\left(\frac1{r^2}\right) \sum_{x\in D_n}\int |a (x, \omega, s)|^2 d\nu_x(\omega)d\mu(s)\\
  &\quad + O({q^r})|\{x\in D_n, \rho(x)\leq r+2\}| \int \norm{a(\cdot,\cdot,s)}^2_\infty d\mu(s)
\end{align*}
where $U a=a\circ\sigma$
\end{corollary}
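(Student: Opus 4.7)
The plan is to apply Proposition \ref{p:egorov} not to $a$ itself but to the substituted symbol $\tilde a(x,\omega,s):= q^{is}(Ua)(x,\omega,s)=q^{is}a\circ\sigma(x,\omega,s)$. This $\tilde a$ is $\Gamma_n$-invariant (because $\sigma$ commutes with the $\Gamma_n$-action and $q^{is}$ depends only on $s$), and it lies in $S({\mathfrak{X}})$ provided $a$ does, thanks to the stability of $S({\mathfrak{X}})$ under $U$ and under multiplication by a smooth function of $s$ that vanishes at $s=0,\tau$ if $a$ does.

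The core of the argument is the algebraic computation of the symbol $c_{\tilde a}$ produced by Proposition \ref{p:egorov} for the commutator $[\Delta,\Op_{G_n}(\tilde a)]$. Because multiplication by $q^{is}$ commutes with both $U$ and $L$ (each of them acts only on the $(x,\omega)$-variables), and because $LU=I$, we get
\[
U\tilde a = q^{is}U^2 a, \qquad L\tilde a = q^{is}LUa = q^{is}a.
\]
Substituting in the formula for $c$ given in Proposition \ref{p:egorov},
\begin{align*}
c_{\tilde a} &= \frac{q^{1/2}}{q+1}\left(q^{is}(U\tilde a-\tilde a)+q^{-is}(L\tilde a-\tilde a)\right)\\
&= \frac{q^{1/2}}{q+1}\left(q^{2is}U^2 a-q^{2is}Ua+a-Ua\right)\\
&= \frac{q^{1/2}}{q+1}(U-I)(q^{2is}U-I)a,
\end{align*}
where in the last step I used that $U$ commutes with multiplication by $q^{2is}$. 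Thus $\Op_{G_n}(c_{\tilde a})$ is, up to the harmless multiplicative constant $\frac{q^{1/2}}{q+1}$, precisely the operator appearing on the left-hand side of the Corollary.

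It then remains to bound the right-hand side of \eqref{e:egorov} with $a$ replaced by $\tilde a$. For real $s\in[0,\tau]$ we have $|q^{is}|=1$, so $|\tilde a|=|Ua|=|a\circ\sigma|$ pointwise. For the unrestricted sum, the isometry identity $\sum_{x\in D_n}\int|a\circ\sigma|^2\,d\nu_x\,d\mu=\sum_{x\in D_n}\int|a|^2\,d\nu_x\,d\mu$ (which follows from $LU=I$ via \eqref{e:adjoint}) reproduces the first term on the right-hand side of the Corollary. For the sum restricted to $\{\rho(x)\leq r+2\}$, I do not want to pay the cost of transporting the integration domain by $\sigma$, so I simply bound $|a\circ\sigma(x,\omega,s)|^2\leq\|a(\cdot,\cdot,s)\|_\infty^2$ and integrate over $\omega$ trivially; this yields $q^{r+2}|\{x\in D_n,\rho(x)\leq r+2\}|\int\|a(\cdot,\cdot,s)\|_\infty^2\,d\mu(s)$, which absorbs into $O(q^r)|\{x\in D_n,\rho(x)\leq r+2\}|\int\|a(\cdot,\cdot,s)\|_\infty^2\,d\mu$.

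The only real subtlety is the algebraic identity for $c_{\tilde a}$, which hinges on the fact that $q^{is}$ commutes with the (non-commuting) operators $U$ and $L$, together with $LU=I$. The estimation step is essentially routine once one notices that the substitution $a\rightsquigarrow\tilde a$ does not change pointwise moduli and that the $L^\infty$-bound suffices on the small ``bad'' set of vertices of small injectivity radius.
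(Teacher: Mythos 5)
Your proposal is correct and follows essentially the same route as the paper: apply Proposition \ref{p:egorov} to the substituted symbol $q^{is}Ua$, use $LU=I$ and the commutation of multiplication by $q^{is}$ with $U$ and $L$ to identify the resulting symbol as $\frac{q^{1/2}}{q+1}(U-I)(q^{2is}U-I)a$, and then control the remainder using that $U$ preserves the $L^2$ and $L^\infty$ norms (with $|q^{is}|=1$ for real $s$). The paper's proof is just a terser version of the same argument, so no further comment is needed.
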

\begin{proof}
 Recall that symbol $c$ of proposition \ref{p:egorov} is given by
 $$ c = \frac{q^{1/2}}{q+1} (q^{is} (Ua -a) + q^{-is} (La-a)). $$
 If we replace $a$ with $q^{is}Ua$ we have
 $$ c = \frac{q^{1/2}}{q+1} (q^{2is} (U^2 a - Ua) + (a - Ua)) = \frac{q^{1/2}}{q+1}(U-I)(q^{2is}U - I)a.$$
 It follows from \eqref{e:egorov} that 
 \begin{align*}
  \sum_j |\la \psi_j, &\Op\left( (U-I)(q^{2is}U-I)a \right)\psi_j\ra|^2 \\
  &\leq O\left(\frac1{r^2}\right) \sum_{x\in D_n}\int |U a (x, \omega, s)|^2 d\nu_x(\omega)d\mu(s)\\
  &\quad + O\left( \frac{q^r}{r^2}\right) \sum_{x\in D_n, \rho(x)\leq r+2}\int |U a (x, \omega, s)|^2 d\nu_x(\omega)d\mu(s) \\
  &\leq O\left(\frac1{r^2}\right) \sum_{x\in D_n}\int |a (x, \omega, s)|^2 d\nu_x(\omega)d\mu(s)\\
  &\quad +O\left( \frac{q^r}{r^2}\right) |\{x\in D_n, \rho(x)\leq r+2\}| \int \norm{a(\cdot,\cdot,s)}^2_\infty d\mu(s),
 \end{align*}
where we used the fact that $U$ preserves the $L^2$ and $L^\infty$ norms.
\end{proof}

The idea is then to invert $(q^{2is}U-I)$. As the series $\sum_k q^{2iks}U^k$ is a formal inverse to $(q^{2is}U-I)$, we apply Corollary \ref{c:mainegorov} to $a=\sum_{k=0}^{N-1} q^{2iks}U^k b:= b_{N-1}$, where $b \in S(\mathfrak{X})$ and $N$ is an arbitrary integer. We obtain
\begin{corollary}  
\begin{align*}
 &\sum_j |\la \psi_j, \Op\left( (U-I)(I-q^{2iNs}U^N  ) b\right)\psi_j\ra|^2 \\
 & \quad \leq O\left( \frac{N^2}{r^2}\right) \sum_{x\in D_n}\int |b(x, \omega, s)|^2 d\nu_x(\omega)d\mu(s)\\
 & \qquad + O\left(\frac{N^2 q^r}{r^2} \right)|\{x\in D_n, \rho(x)\leq r+2\}| \int \norm{b(\cdot,\cdot,s)}^2_\infty d\mu(s)
\end{align*}
 \end{corollary}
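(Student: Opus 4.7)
The plan is to apply Corollary \ref{c:mainegorov} directly with the symbol $a := b_{N-1} = \sum_{k=0}^{N-1} q^{2iks} U^k b$, and to extract the bound from a telescoping identity on the left-hand side together with triangle/Cauchy--Schwarz estimates for the $L^2$ and $L^\infty$ norms of $b_{N-1}$ on the right-hand side.

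First I would verify the algebraic identity
\[ (q^{2is} U - I)\, b_{N-1} \;=\; q^{2iNs} U^N b - b \;=\; -(I - q^{2iNs} U^N) b, \]
which is immediate from telescoping since $U$ commutes with multiplication by functions of $s$. Applying $(U - I)$ on the left and using linearity of $\Op_{G_n}$, the left-hand side of Corollary \ref{c:mainegorov} (with $a$ replaced by $b_{N-1}$) becomes exactly
\[ \sum_j \bigl| \la \psi_j, \Op\bigl((U-I)(I - q^{2iNs} U^N) b\bigr)\psi_j \ra \bigr|^2, \]
which is the expression we want to bound. I would also remark that $b_{N-1} \in S(\mathfrak{X})$ whenever $b \in S(\mathfrak{X})$, by the closure properties recalled in Section \ref{s:class} (the class $S(\mathfrak{X})$ is stable under $U$ and under multiplication by a bounded function of $s$ alone), so Corollary \ref{c:mainegorov} does apply.

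Next I would control the two norms of $b_{N-1}$ that appear on the right-hand side of Corollary \ref{c:mainegorov}, turning them into the corresponding norms of $b$ with an extra factor $N^2$. For the weighted $L^2$ norm, Cauchy--Schwarz applied to a sum of $N$ terms gives $|b_{N-1}(x,\omega,s)|^2 \leq N \sum_{k=0}^{N-1} |U^k b(x,\omega,s)|^2$ pointwise, using $|q^{2iks}| = 1$ for $s \in [0,\tau]$. Since $U$ is an isometry of $L^2(\mathfrak{X}\times\Omega, \sum_x \delta_x d\nu_x(\omega))$ and preserves $\Gamma_n$-invariance, summing over the fundamental domain $D_n$ and integrating in $s$ yields
\[ \sum_{x \in D_n} \int |b_{N-1}|^2\, d\nu_x(\omega)\, d\mu(s) \;\leq\; N^2 \sum_{x \in D_n} \int |b|^2\, d\nu_x(\omega)\, d\mu(s). \]
For the $L^\infty$ norm, the triangle inequality gives $\norm{b_{N-1}(\cdot,\cdot,s)}_\infty \leq N\, \norm{b(\cdot,\cdot,s)}_\infty$, hence a factor $N^2$ after squaring.

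The ``hard part'' here is really only bookkeeping: one cannot simplify the second term by replacing the $L^\infty$ norm with the $L^2$ norm, because the factor $q^r$ comes from the overlap of at most $q^r$ translates in $\Gamma_n$ at vertices of small injectivity radius and was already packaged that way in Corollary \ref{c:mainegorov}. Combining the two norm estimates with Corollary \ref{c:mainegorov} applied to $a = b_{N-1}$ then yields the announced bound with the prefactors $O(N^2/r^2)$ and $O(N^2 q^r/r^2)$, completing the proof.
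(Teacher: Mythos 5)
Your proof is correct and is essentially the paper's own argument: apply Corollary \ref{c:mainegorov} to $a=b_{N-1}=\sum_{k=0}^{N-1}q^{2iks}U^kb$, use the telescoping identity $(q^{2is}U-I)b_{N-1}=(q^{2iNs}U^N-I)b$, and absorb the factor $N^2$ via Cauchy--Schwarz and the fact that $U$ preserves the $L^2$ and $L^\infty$ norms. (One cosmetic point: the displayed statement of Corollary \ref{c:mainegorov} writes $O(q^r)$ in its second term while its proof retains the factor $1/r^2$; you need the latter to land exactly on the claimed $O(N^2q^r/r^2)$, which is what the paper implicitly does as well.)
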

\begin{proof}
We apply Corollary \ref{c:mainegorov} to $a=\sum_{k=0}^{N-1} q^{2iks}U^k b:= b_{N-1}$ and use the identity
 $$(q^{2is}U - I)b_{N-1} = (q^{2iNs}U^N - I)b$$
 combined with the fact that $U$ preserves the $L^2$ and $L^\infty$ norms.
\end{proof}

If we apply Corollary \ref{c:mainegorov} to $a=\frac{1}N\sum_{k=0}^{N-1}b_k $, we obtain
\begin{corollary} \label{c:secondegorov}
\begin{multline*}\sum_j \left|\la \psi_j, \Op\left( Ub-b-q^{2is}U(U-I)b^{(s, N)}\right)\psi_j\ra\right|^2\\
\leq
  O\left( \frac{N^2}{r^2}\right) \sum_{x\in D_n}\int |b(x, \omega, s)|^2 d\nu_x(\omega)d\mu(s) \\
  +O\left(\frac{N^2 q^r}{r^2}\right) |\{x\in D_n, \rho(x)\leq r+2\}|\int \norm{b(\cdot,\cdot,s)}^2_\infty d\mu(s)
\end{multline*}
where $b^{(s, N)}=\frac1N\sum_{k=0}^{N-1} q^{2is k}U^k b$.
\end{corollary}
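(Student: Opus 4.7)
The plan is to apply Corollary \ref{c:mainegorov} directly, but to a carefully chosen \emph{Cesàro-type average} of the geometric sums $b_k := \sum_{j=0}^{k} q^{2ijs}U^j b$. More precisely, I would set
\[ a := \frac{1}{N}\sum_{k=0}^{N-1} b_k, \]
and then rely on a double telescoping to identify $\Op\big((U-I)(q^{2is}U - I)a\big)$ with the operator appearing in the statement.

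The algebraic heart of the argument is the following. First, for each fixed $k$ one checks by a one-step telescoping that
\[ (q^{2is}U - I)\, b_k = q^{2i(k+1)s}\, U^{k+1} b - b. \]
Averaging over $k=0,\ldots,N-1$ gives
\[ (q^{2is}U - I)\, a = \frac{1}{N}\sum_{k=0}^{N-1} q^{2i(k+1)s}\, U^{k+1} b - b = q^{2is}\, U\, b^{(s,N)} - b, \]
using that $U$ commutes with multiplication by functions of $s$ alone. Applying $(U-I)$ to this identity (and again using the commutation with $q^{2is}$) yields
\[ (U-I)(q^{2is}U - I)\, a = q^{2is}\, U(U-I)\, b^{(s,N)} - (Ub - b) = -\bigl(Ub - b - q^{2is}U(U-I)b^{(s,N)}\bigr). \]
Since the quantity controlled by Corollary \ref{c:mainegorov} involves the absolute value squared of matrix elements, the sign is harmless.

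It then remains to estimate the two right-hand-side norms of Corollary \ref{c:mainegorov} applied to this $a$. The crucial point is that $U$ is an isometry of $L^2(D_n\times\Omega)$ (from $LU=I$ and the adjoint relation \eqref{e:adjoint}), preserves $L^\infty$, and $|q^{2ijs}|=1$ for real $s$. Hence, by the triangle inequality,
\[ \norm{a}_{L^2(D_n\times\Omega\times[0,\tau])} \;\leq\; \frac{1}{N}\sum_{k=0}^{N-1}\sum_{j=0}^{k}\norm{U^j b}_{L^2} \;\leq\; \frac{N+1}{2}\,\norm{b}_{L^2}, \]
and similarly $\norm{a(\cdot,\cdot,s)}_\infty \leq N\,\norm{b(\cdot,\cdot,s)}_\infty$ for each $s$. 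Squaring and substituting into the bound of Corollary \ref{c:mainegorov} produces the factor $O(N^2)$ in front of each of the two terms, yielding exactly the stated inequality.

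There is no real obstacle here: the only delicate point is to set up the double sum indexing $b_k$ so that the telescoping collapses cleanly to $q^{2is}Ub^{(s,N)} - b$, and to verify that the $L^2$ bound on $a$ uses isometry of $U$ on the quotient (inherited from $LU=I$) rather than on the tree. Both are routine once the right definition of $a$ is made.
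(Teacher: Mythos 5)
Your proposal is correct and is essentially identical to the paper's proof: the paper also applies Corollary \ref{c:mainegorov} to $a=\frac1N\sum_{k=0}^{N-1}b_k$, uses the same telescoping identity $(I-q^{2is}U)b_k=(I-q^{2i(k+1)s}U^{k+1})b$ to get $(U-I)(I-q^{2is}U)a=Ub-b-q^{2is}U(U-I)b^{(s,N)}$, and absorbs the $O(N^2)$ from the triangle inequality and the $U$-isometry on the quotient. The only cosmetic point is that to recover the factor $r^{-2}$ in the second term as stated you must cite the bound actually established in the proof of Corollary \ref{c:mainegorov} (which has $O(q^r/r^2)$) rather than its weakened statement $O(q^r)$; this is immaterial since Corollary \ref{c:finalegorov} only uses $O(N^2q^r)$ anyway.
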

Note that the ``remainder term'' $q^{2is}U(U-I)b^{(s, N)}$ is not small in the symbol norm~: in Section \ref{s:general}, the (EXP) assumption will be used to show that it is small in the $L^2$-norm. This is a major difference with the Egorov theorem on manifolds, where no ergodicity assumption is needed.
\begin{proof}
 We know from the proof of the previous corollary that
 $$(I - q^{2is}U)b_{k} = (I - q^{2i(k+1)s}U^{k+1})b.$$
 It follows that
$(I - q^{2is}U)\frac{1}{N}\sum_{k=0}^{N-1} b_{k}  =  b - q^{2is} U b^{(s,N)},$
 and $$ (U-I)(I - q^{2is}U)\frac{1}{N}\sum_{k=0}^{N-1} b_{k} = Ub - b - q^{2is}U(U-I)b^{(s,N)}. $$
\end{proof}

Combining with the Hilbert-Schmidt estimate, Lemma \ref{l:HSnorm}, we get
\begin{corollary}  \label{c:finalegorov}
\begin{align*}\sum_j \left|\la \psi_j, \Op\left( Ub-b \right)\psi_j\ra\right|^2 
 &\leq O\left( \frac{N}{r^2} \right)\sum_{x\in D_n}\int |b(x, \omega, s)|^2 d\nu_x(\omega)d\mu(s)\\
 &\quad + 2 \sum_{x\in D_n}\int |b^{(s, N)}(x, \omega, s)|^2 d\nu_x(\omega)d\mu(s) \\
 &\quad + O(N^2 q^r) |\{x\in D_n, \rho(x)\leq r+2\}| \int \norm{b(\cdot,\cdot,s)}^2_\infty d\mu(s)
\end{align*}
\end{corollary}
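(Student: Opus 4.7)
The plan is to deduce Corollary \ref{c:finalegorov} by combining Corollary \ref{c:secondegorov} with the Hilbert--Schmidt estimate from Lemma \ref{l:HSnorm}. The idea is to decompose the symbol as
\[
Ub-b = \underbrace{\bigl(Ub-b-q^{2is}U(U-I)b^{(s,N)}\bigr)}_{=:A} + \underbrace{q^{2is}U(U-I)b^{(s,N)}}_{=:B},
\]
so that, by the elementary inequality $|x+y|^2\leq 2|x|^2+2|y|^2$ applied termwise and summed over $j$,
\[
\sum_j \bigl|\la \psi_j, \Op_{G_n}(Ub-b)\psi_j\ra\bigr|^2 \leq 2\sum_j \bigl|\la \psi_j, \Op_{G_n}(A)\psi_j\ra\bigr|^2 + 2\sum_j \bigl|\la \psi_j, \Op_{G_n}(B)\psi_j\ra\bigr|^2.
\]

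The first piece is exactly what Corollary \ref{c:secondegorov} bounds: it produces an $O(N^2/r^2)\sum_{x\in D_n}\int|b|^2\, d\nu_x d\mu$ contribution and an $O(N^2 q^r/r^2)\,|\{x\in D_n : \rho(x)\leq r+2\}|\int\|b(\cdot,\cdot,s)\|_\infty^2\,d\mu$ contribution, matching (up to constants absorbed in the $O(\cdot)$) the first and third terms on the right-hand side of the claimed inequality.

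For the second piece, I will use the standard bound $\sum_j|\la\psi_j, T\psi_j\ra|^2\leq \|T\|_{HS}^2$ (valid because $(\psi_j)$ is an orthonormal basis) followed by Lemma \ref{l:HSnorm} applied to $T=\Op_{G_n}(B)$. This reduces the task to estimating $\sum_{x\in D_n}\int|B(x,\omega,s)|^2\,d\nu_x(\omega)d\mu(s)$ and the analogous sum restricted to $\rho(x)\le r$ with a factor $q^r$. Since $s$ is real on the support of the Plancherel measure one has $|q^{2is}|=1$; and since $U$ is an $L^2$-isometry on $\Gamma_n$-invariant functions (this follows from the adjoint identity \eqref{e:adjoint} together with $LU=I$), the triangle inequality gives $\|(U-I)b^{(s,N)}\|_{L^2}\le 2\|b^{(s,N)}\|_{L^2}$. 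Hence the main part of $\|\Op_{G_n}(B)\|_{HS}^2$ is bounded by a constant times $\sum_{x\in D_n}\int|b^{(s,N)}|^2\,d\nu_x d\mu$, producing the middle term of the statement. For the small-injectivity-radius piece, one uses in addition that $U$ preserves $L^\infty$ and that $\|b^{(s,N)}(\cdot,\cdot,s)\|_\infty\le\|b(\cdot,\cdot,s)\|_\infty$ (by the triangle inequality on $b^{(s,N)}=\frac{1}{N}\sum_{k=0}^{N-1}q^{2iks}U^k b$), so this contribution is absorbed into the third term $O(N^2 q^r)|\{\rho(x)\le r+2\}|\int\|b\|_\infty^2\,d\mu$.

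The proof is essentially bookkeeping: no new ingredient beyond Corollary \ref{c:secondegorov} and Lemma \ref{l:HSnorm} is needed. The only point that requires care is verifying that $U$ acts isometrically on the $\Gamma_n$-invariant $L^2$-space built from $\sum_{x\in D_n}\delta_x\,d\nu_x(\omega)$, which is what allows one to freely move $U$ inside $L^2$-norms and convert estimates on $(U-I)b^{(s,N)}$ into estimates on $b^{(s,N)}$ itself. Once this is granted, combining the two bounds term by term yields the stated inequality.
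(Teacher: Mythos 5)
Your proposal is correct and follows essentially the same route as the paper: the same splitting of $Ub-b$ into the piece controlled by Corollary \ref{c:secondegorov} and the remainder $q^{2is}U(U-I)b^{(s,N)}$, with the latter bounded via $\sum_j|\la\psi_j,T\psi_j\ra|^2\le\norm{T}_{HS}^2$, Lemma \ref{l:HSnorm}, and the fact that $U$ is an $L^2$- and $L^\infty$-contraction. The only (harmless) discrepancies are in explicit constants: both your argument and the paper's actually yield $O(N^2/r^2)$ for the first term (the $O(N/r^2)$ in the corollary's statement appears to be a typo, since Section \ref{s:general} uses $O(N^2T^2/r^2)$), and your factor in front of the $\int|b^{(s,N)}|^2$ term is $8$ rather than $2$, which is immaterial for the application.
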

\begin{proof}
We write
\begin{align*}
 \sum_j \left|\la \psi_j, \Op\left(Ub-b\right)\psi_j\ra \right|^2
 & \leq 2\sum_j \left|\la \psi_j, \Op\left( Ub-b-q^{2is}U(U-I)b^{(s, N)}\right)\psi_j\ra\right|^2 \\
 &\quad + 2\sum_j  \left| \la \psi_j, \Op\left( q^{2is}U(I-U)b^{(s, N)}\right)\psi_j\ra \right|^2 
\end{align*}
The first term on the right-hand side is estimated by Corollary \ref{c:secondegorov}. We estimate the last term thanks to Lemma \ref{l:HSnorm}, and we use the fact that $U$ preserves the $L^2$ norm by $U$~:
\begin{align*}
&\sum_j  \left| \la \psi_j, \Op\left( q^{2is}U(I-U)b^{(s, N)}\right)\psi_j\ra \right|^2 \\
 &\quad \leq  \sum_{x\in D_n}\int |(I-U)b^{(s, N)}(x, \omega, s)|^2 d\nu_x(\omega)d\mu(s) \\
 &\qquad + q^r \sum_{x\in D_n, \rho(x)\leq r}\int |(I-U)b^{(s, N)}(x, \omega, s)|^2 d\nu_x(\omega)d\mu(s) \\
 &\quad \leq    \sum_{x\in D_n}\int |b^{(s, N)}(x, \omega, s)|^2 d\nu_x(\omega)d\mu(s) \\
 &\qquad +  4N^2 q^r |\{x\in D_n, \rho(x)\leq r\}| \int \norm{b(\cdot,\cdot,s)}^2_\infty d\mu(s). \\
\end{align*}
\end{proof}

As already mentioned, the value $s_0=\tau/2$ is special and the previous corollaries may be replaced by the following, simpler one. In case the support of $a$ shrinks around $s_0=\tau/2$, this is a closer analogue of the Egorov theorem on manifolds in the sense that no ergodicity or expanding assumption is needed to show that the remainder term goes to $0$.\begin{corollary} \label{c:egorovtau2}We have
\begin{align*}
 \sum_j |\la \psi_j, \Op(a)-  \Op(a\circ\sigma^2 )\psi_j\ra|^2
  &\leq O\left(\frac{1}{r^2}\right) \sum_{x\in D_n}\int |a(x, \omega, s)|^2 d\nu_x(\omega)d\mu(s) \\
  &\quad + O\left( q^{r+2} \right)|\{x\in D_n, \rho(x)\leq r+2 \}| \int \norm{a(\cdot,\cdot,s)}^2_\infty d\mu(s) \\
  &\quad + n \int O(|s-s_0|^2) \norm{a(\cdot,\cdot,s)}^2_\infty d\mu(s)
\end{align*}
\end{corollary}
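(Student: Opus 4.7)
The plan is to deduce the statement from Corollary \ref{c:mainegorov} by exploiting the algebraic miracle that occurs at the center of the tempered spectrum, namely $q^{2is_0}=q^{i\tau}=e^{i\pi}=-1$. With this value, the operator appearing in Corollary \ref{c:mainegorov} factors as
\[
(U-I)(q^{2is_0}U-I) \;=\; (U-I)(-U-I) \;=\; I - U^{2},
\]
so that evaluating the symbol of Corollary \ref{c:mainegorov} at $s=s_0$ produces exactly $a-a\circ\sigma^{2}$, which is the quantity we wish to control. Thus, heuristically, the desired estimate \emph{is} Corollary \ref{c:mainegorov} at $s=s_0$, plus an error accounting for the fact that the symbols in question depend on $s$.

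Concretely, I would first expand the product in the symbol and compare with its value at $s_0$. A direct computation gives the identity
\[
(U-I)(q^{2is}U-I)a \;=\; \bigl(a-U^{2}a\bigr) \,+\, \bigl(q^{2is}+1\bigr)\,U(U-I)a,
\]
since the difference of the two polynomials in $U$ equals $(q^{2is}-q^{2is_0})(U-I)U=(q^{2is}+1)(U-I)U$. Apply $\Op_{G_n}$ and then the triangle inequality (in the $\ell^{2}$-norm in $j$) to obtain
\[
\sum_{j}\bigl|\langle\psi_{j},\Op(a-a\circ\sigma^{2})\psi_{j}\rangle\bigr|^{2}
\;\leq\; 2\!\sum_{j}\bigl|\langle\psi_{j},\Op((U-I)(q^{2is}U-I)a)\psi_{j}\rangle\bigr|^{2}
\;+\; 2\!\sum_{j}\bigl|\langle\psi_{j},\Op((q^{2is}+1)U(U-I)a)\psi_{j}\rangle\bigr|^{2}.
\]

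The first sum is bounded directly by Corollary \ref{c:mainegorov}, which supplies the $O(r^{-2})$ term and the $O(q^{r+2})|\{\rho(x)\le r+2\}|$ term in the desired estimate. For the second sum, I would use the Hilbert-Schmidt bound of Lemma \ref{l:HSnorm} together with the crucial pointwise estimate $|q^{2is}+1|=|e^{2is\ln q}-e^{2is_0\ln q}|=O(|s-s_0|)$ and the fact that $U$ is an isometry on $L^{2}$ and $L^{\infty}$, so that $|U(U-I)a(x,\omega,s)|\le 2\|a(\cdot,\cdot,s)\|_{\infty}$. The bulk piece in Lemma \ref{l:HSnorm} then contributes $n\int O(|s-s_0|^{2})\,\|a(\cdot,\cdot,s)\|_{\infty}^{2}\,d\mu(s)$, which is exactly the new error term in the statement, and the boundary piece at vertices with small injectivity radius only improves the already-present $O(q^{r+2})|\{\rho(x)\le r+2\}|\int\|a\|_{\infty}^{2}d\mu$ term.

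The only subtle point, and the one that dictates why $s_0=\tau/2$ is singled out, is the factorization $(U-I)(q^{2is_0}U-I)=I-U^{2}$: away from this value, the quadratic in $U$ does not reduce to a pure power of $U$, and the residual symbol $(q^{2is}+1)U(U-I)a$ is replaced by a symbol that does not vanish at $s_0$, forcing the more elaborate treatment of Corollaries \ref{c:secondegorov}--\ref{c:finalegorov} together with an expansion/averaging argument and the (EXP) hypothesis. There is no real obstacle in the present case; the only care needed is to verify that $|q^{2is}+1|^{2}=O(|s-s_0|^{2})$ enters the Hilbert-Schmidt estimate with the correct weight so that the error cleanly integrates against $d\mu(s)$ to produce the last term in the statement.
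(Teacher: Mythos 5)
Your argument is correct and is essentially the paper's own proof: the paper likewise substitutes $a\mapsto q^{is}a\circ\sigma$ into Proposition \ref{p:egorov} (which is exactly what Corollary \ref{c:mainegorov} packages), isolates the symbol $a-a\circ\sigma^2$, and bounds the leftover term --- whose coefficient is $1+q^{2is}=O(|s-s_0|)$ since $q^{2is_0}=-1$ at $s_0=\tau/2$ --- via the Hilbert--Schmidt estimate of Lemma \ref{l:HSnorm}. Your identity $(U-I)(q^{2is}U-I)a=(a-U^2a)+(q^{2is}+1)U(U-I)a$ matches the paper's decomposition $Ta=a\circ\sigma^2-c$, so no gap.
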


\begin{proof}
 We replace the symbol $a$ in proposition \ref{p:egorov} with $q^{is}a\circ\sigma$. As $L(a\circ\sigma) = a$, the symbol $b$ becomes $b = a - T a$, where
 $$T a(x,\omega,s)=-q^{2is} a\circ \sigma^2(x,\omega,s) +2 q^{is} \cos (s\ln q) a\circ\sigma(x,\omega,s)$$
 and we have
 $\sum_j |\la \psi_j, \Op(a)- \Op(T a)\psi_j\ra|^2 \leq \norm{R}^2_{HS},$
 where
 \begin{align*}
 \norm{R}^2_{HS} &\leq O\left(\frac{1}{r^2}\right) \sum_{x\in D_n}\int |a\circ\sigma(x, \omega, s)|^2 d\nu_x(\omega)d\mu(s)\\
 &\quad + O\left( q^{r+1} \right) \sum_{x\in D_n, \rho(x)\leq r+1}\int |a\circ\sigma(x, \omega, s)|^2 d\nu_x(\omega)d\mu(s)\\
 &\leq  O\left(\frac{1}{r^2}\right)\sum_{x\in D_n}\int |a(x, \omega, s)|^2 d\nu_x(\omega)d\mu(s) \\
 &\quad + O\left( q^{r+1} \right) |\{x\in D_n, \rho(x)\leq r+1 \}|  \int \norm{a(\cdot,\cdot,s)}^2_\infty d\mu(s) .
 \end{align*}
 
 Now $T a = a\circ\sigma^2 - c$, with
 $$c(x,\omega,s) = (1+q^{2is}) a\circ \sigma^2(x,\omega,s) +2 q^{is} \cos (s \ln q) a\circ\sigma(x,\omega,s)$$
 and we can write
  \begin{align*}
  \sum_j |\la \psi_j, \Op(a)- \Op(a\circ\sigma^2)\psi_j\ra|^2 
  &\leq 2 \sum_j |\la \psi_j, \Op(c)\psi_j\ra|^2 + 2 \norm{R}^2_{HS} \\
  &\leq 2 \norm{\Op(c)}_{HS}^2 + 2 \norm{R}^2_{HS}.
 \end{align*}
 Recalling that $s_0 = \frac{\pi}{2 \ln q}$, we have
 \begin{align*}
  \norm{\Op(c)}_{HS}^2 &\leq \sum_{x\in D_n}\int |(1+q^{2is}) a\circ \sigma^2+2 q^{is} \cos (s \ln q) a\circ\sigma)|^2 d\nu_x(\omega)d\mu(s)\\ 
  & \quad + q^r \sum_{x\in D_n,  \rho(x)\leq r}\int |(1+q^{2is}) a\circ \sigma^2+2 q^{is} \cos (s \ln q) a\circ\sigma)|^2 d\nu_x(\omega)d\mu(s) \\
  &\leq \sum_{x\in D_n}\int O(|s-s_0|^2)(|a\circ \sigma^2|+ |a\circ\sigma|)^2(x,\omega,s) d\nu_x(\omega)d\mu(s)\\ 
  & \quad + q^r \sum_{x\in D_n,  \rho(x)\leq r}\int O(|s-s_0|^2) (|a\circ \sigma^2|+ |a\circ\sigma|)^2(x,\omega,s) d\nu_x(\omega)d\mu(s) \\
  &\leq n \int O(|s-s_0|^2) \sup_{x,\omega} |a(x,\omega,s)|^2 d\mu(s)\\ 
  & \quad + q^r |\{x\in D_n,  \rho(x)\leq r \}| \int  O(|s-s_0|^2) \sup_{x,\omega} |a(x,\omega,s)|^2 d\mu(s)
 \end{align*}
\end{proof}

\subsection{Two more formulas about $\Op_{G_n}(\chi_n)$}

\begin{lemma}\label{l:chin}$$\Op_{G_n}(\chi_n)\psi^{(n)}_j=\lambda^{(n)}_j \psi_j^{(n)}$$
with $$\lambda^{(n)}_j=\chi_n(s_j) + r^3 O\left(\frac{1}{(r\delta_n)^\infty}\right)$$
if $s_j\in [0, \tau]$ (tempered eigenfunctions).
\end{lemma}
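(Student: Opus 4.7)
Since the symbol $\chi_n$ depends only on the spectral variable $s$, the tree kernel $K_{\mathfrak{X}}(x,y;\chi_n)$ is radial; set $K(k):=K_{\mathfrak{X}}(x,y;\chi_n)$ whenever $d(x,y)=k$. Lifting $\psi_j^{(n)}$ to a $\Gamma_n$-invariant function on $\mathfrak{X}$ and unfolding the $\Gamma_n$-sum in Definition~\ref{d:maindef} gives
$$
\Op_{G_n}(\chi_n)\psi_j^{(n)}(x)=\sum_{y\in\mathfrak{X}}K(d(x,y))\,\chi(d(x,y)/r)\,\psi_j^{(n)}(y).
$$
I would group the sum by spheres of fixed radius $k$ and use that $\psi_j^{(n)}$ is a joint eigenfunction of every sphere-averaging operator $S_k$ with eigenvalue $\phi_{s_j^{(n)}}(k)$ (as recalled in \S\ref{s:ergodicity}); this shows $\Op_{G_n}(\chi_n)\psi_j^{(n)}=\lambda_j^{(n)}\psi_j^{(n)}$ with
$$
\lambda_j^{(n)}=K(0)+\sum_{k\geq 1}\chi(k/r)(q+1)q^{k-1}K(k)\phi_{s_j^{(n)}}(k).
$$

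To identify the target eigenvalue I would invoke the pseudodifferential formalism of \cite{LM}: because $\chi_n$ depends only on $s$, the algebra property $\Op(a)\Op(\varphi)=\Op(a\varphi)$ for $\varphi=\varphi(s)$, combined with $\Op(1)=I$, implies that $\Op_{\mathfrak{X}}(\chi_n)$ acts on any tempered eigenfunction as multiplication by the scalar $\chi_n(s_j^{(n)})$. Repeating the shell decomposition above without the cut-off then yields the (absolutely convergent, thanks to \eqref{e:decay}) inversion
$$
\chi_n(s_j^{(n)})=K(0)+\sum_{k\geq 1}(q+1)q^{k-1}K(k)\phi_{s_j^{(n)}}(k),
$$
so the difference reduces to the tail
$$
\lambda_j^{(n)}-\chi_n(s_j^{(n)})=-\sum_{k>r/2}(1-\chi(k/r))(q+1)q^{k-1}K(k)\phi_{s_j^{(n)}}(k).
$$

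It remains to control this tail, for which \eqref{e:decay} is the key tool. Since $\chi_n$ is independent of $\omega$, $\|\chi_n\|_{\Omega,M}=0$, and $\|\partial_s^j\chi_n\|_\infty\leq C_j\delta_n^{-j}$. Inserting these into \eqref{e:decay} gives, for every integer $M\geq 0$,
$$
|K(k)|\leq C_M\,\delta_n^{-(M+1)}\,\frac{q^{-k/2}}{(1+k)^{M}}.
$$
Combined with the tempered bound $|\phi_{s_j^{(n)}}(k)|\leq C(1+k)q^{-k/2}$, uniform for $s_j^{(n)}\in[0,\tau]$ and immediate from \eqref{e:spherical}, each term in the tail is $O_M((1+k)^{1-M}\delta_n^{-(M+1)})$, and the convergent sum $\sum_{k>r/2}(1+k)^{1-M}\leq C_M r^{2-M}$ (valid for $M\geq 3$) gives
$$
|\lambda_j^{(n)}-\chi_n(s_j^{(n)})|\leq C_M\,r^{2-M}\delta_n^{-(M+1)}=C_M\,r^{3}(r\delta_n)^{-(M+1)},
$$
which is exactly the asserted $r^{3}\,O\!\left((r\delta_n)^{-\infty}\right)$.

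The main obstacle I anticipate is the middle step, namely the functional-calculus identification of $\Op_{\mathfrak{X}}(\chi_n)$ with multiplication by $\chi_n(s_j^{(n)})$ on tempered eigenfunctions; all of the analysis rests on that formal inversion, and the rest of the argument is a clean application of \eqref{e:decay}.
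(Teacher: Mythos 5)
Your proposal is correct and follows essentially the same route as the paper: unfold to the tree, observe that the kernel of $\Op_{G_n}(\chi_n)$ is radial so that $\psi_j^{(n)}$ is an eigenfunction with eigenvalue given by the (cut-off) spherical transform of the kernel, identify the uncut transform with $\chi_n(s_j)$, and bound the tail $k>r/2$ using \eqref{e:decay} together with $\norm{\partial_s^k\chi_n}_\infty\leq C_k\delta_n^{-k}$ and the tempered bound on $\phi_{s_j}$, yielding $r^3 O((r\delta_n)^{-(M+1)})$ for every $M$. The only cosmetic difference is the justification of the middle identity: the clean argument is that $K(k)=\int_0^\tau\phi_s(k)\chi_n(s)\,d\mu(s)$ (integrate out $\omega$ in the definition of $K_{\mathfrak{X}}$) together with the inversion formula for the spherical transform, which is what the paper uses implicitly, rather than the product formula \eqref{e:obvproduct}.
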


\begin{proof}
 First note that $\psi_j^{(n)}$ is associated to a $\Gamma_n$-invariant eigenfunction of the laplacian on the tree ${\mathfrak{X}}$, that we will still denote by $\psi_j^{(n)}$. We have on the tree
 \begin{align*}
  \Op_{G_n}(\chi_n) \psi_j^{(n)}(x) 
  &= \sum_{y\in {\mathfrak{X}}} K_{{\mathfrak{X}}}(x,y;\chi_n) \chi\left(\frac{d(x,y)}{r}\right) \psi_j^{(n)}(y)
 \end{align*}
 and $K_{{\mathfrak{X}}}(x,y;\chi_n) \chi\left(\frac{d(x,y)}{r}\right)$ depends only on $d(x,y)$ because $\chi_n$ does not depend on $(x,\omega)$.
 We thus have 
 $$\Op_{G_n}(\chi_n) \psi_j^{(n)}(x)  = f(s_j) \psi_j^{(n)}(x)$$
 where $f(s_j)$ is given by the spherical transform of the kernel
 $$f(s_j) = \sum_{y\in {\mathfrak{X}}} K_{{\mathfrak{X}}}(x,y;\chi_n) \chi\left(\frac{d(x,y)}{r}\right) \phi_{s_j}(d(x,y))$$
 and $\phi_{s_j}$ is the spherical function associated to $s_j$ defined in \eqref{e:spherical}. Now
 \begin{align*}
  f(s_j) &= \sum_{y\in {\mathfrak{X}}} K_{{\mathfrak{X}}}(x,y;\chi_n) \phi_{s_j}(d(x,y)) \\
  &\quad -  \sum_{y\in {\mathfrak{X}}} K_{{\mathfrak{X}}}(x,y;\chi_n) \phi_{s_j}(d(x,y)) \left(1 - \chi\left(\frac{d(x,y)}{r}\right)\right) \\
  &= \chi_n(s_j) \\
  &\quad -  \sum_{y\in {\mathfrak{X}}} K_{{\mathfrak{X}}}(x,y;\chi_n) \phi_{s_j}(d(x,y)) \left(1 - \chi\left(\frac{d(x,y)}{r}\right)\right).
 \end{align*}

 Because $\chi_n \in S({\mathfrak{X}})$, according to the rapid decay property of the kernel of pseudodifferential operators \eqref{e:decay}, we have
 $$ |K_{{\mathfrak{X}}}(x,y;\chi_n)| \leq C \left(\sum_{k=0}^{M+1} \norm{\partial_s^k \chi_n}_\infty \right) \frac{q^{-\frac{d(x,y)}{2}}}{(1+d(x,y))^M} $$ 
 for every $M \in \N$. Moreover, if $s_j$ is a tempered eigenvalue, then 
 $$ |\phi_{s_j}(d(x,y))| \leq C q^{-\frac{d(x,y)}{2}}. $$
 So, using also the fact that $\norm{\partial_s^k \chi_n}_\infty \leq C_k \delta_n^{-k}$, we obtain that for every $M \in \N$
  \begin{align*}
  |f(s_j) - \chi_n(s_j)|
  &\leq \frac{C_{M+1}}{\delta_n^{M+1}}\sum_{y\in {\mathfrak{X}}} \frac{q^{-d(x,y)}}{(1+d(x,y))^M} \left(1 - \chi\left(\frac{d(x,y)}{r}\right)\right) \\
  &\leq \frac{C_{M+1}}{\delta_n^{M+1}(1+r)^{M-2}} \sum_{y\in {\mathfrak{X}}} \frac{q^{-d(x,y)}}{(1+d(x,y))^2}\\
  &= \frac{C_{M+1}}{\delta_n^{M+1}(1+r)^{M-2}} \sum_{k\in\N} \sum_{y:d(x,y)=k} \frac{q^{-k}}{(1+k)^2}\\
  &\leq \frac{C_{M+1}}{\delta_n^{M+1}(1+r)^{M-2}} \sum_{k\in\N} \frac{1}{(1+k)^2}.
 \end{align*}

 We thus have $$|f(s_j) - \chi_n(s_j)| = r^3 O\left(\frac{1}{(r\delta_n)^{M+1}}\right)$$
 for any $M$.
\end{proof}

\begin{lemma}\label{l:produit}Fix an integer $D$.
Let $a$ be such that $K_{\mathfrak{X}}(x,y;a)=0$ for $d_{\mathfrak{X}}(x, y)>D$ (in other words, $a\in S_o^D({\mathfrak{X}})$) and $\varphi=\varphi(s)$. We have
$$\Op_{G_n}(a\varphi)=\Op_{G_n}(a) \Op_{G_n}(\varphi)+R$$
where
$$\norm{R}_{HS}^2\leq  O(r^{-1})\int\varphi^2(s)d\mu(s)\left(n+|\{x\in V_n, \rho(x)\leq r+D\}| q^{r+D}\right)$$
\end{lemma}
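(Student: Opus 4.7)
My plan is to write both kernels out explicitly on the universal cover ${\mathfrak{X}}$ using the $\Gamma_n$-bi-invariance of the kernels, exploit the product identity \eqref{e:obvproduct} on the tree, and reduce the remainder to a sum in which the cutoff functions differ only slightly. First, expanding the kernel of $\Op_{G_n}(a)\Op_{G_n}(\varphi)$ and performing the change of variable $z'=\gamma_1 z$ to merge the sum over $z\in D_n$ and $\gamma_1\in\Gamma_n$ into a single sum over $z'\in{\mathfrak{X}}$, and similarly expanding $K_{G_n}(x,y;a\varphi)$ by inserting \eqref{e:obvproduct}, one finds
\[
R(x,y)=\sum_{\gamma\in\Gamma_n}\sum_{z'\in{\mathfrak{X}}} K_{\mathfrak{X}}(x,z';a)K_{\mathfrak{X}}(z',\gamma y;\varphi)\bigl[\chi(d(x,\gamma y)/r)-\chi(d(x,z')/r)\chi(d(z',\gamma y)/r)\bigr].
\]
The finite-range assumption forces $d(x,z')\leq D$ on the support, hence $\chi(d(x,z')/r)=1$ as soon as $r\geq 2D$, and the bracket collapses to $\chi(d(x,\gamma y)/r)-\chi(d(z',\gamma y)/r)$. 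Since $|d(x,\gamma y)-d(z',\gamma y)|\leq d(x,z')\leq D$, smoothness of $\chi$ bounds this difference by $O(1/r)$, and it vanishes outside $\{d(x,\gamma y)\leq r+D\}$.

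Second, I would estimate the Hilbert--Schmidt norm of $R$ by mimicking the proof of Lemma \ref{l:HSnorm}. Writing $R(x,y)=\sum_\gamma\tilde R(x,\gamma y)$ with $\tilde R(x,\cdot)$ supported on the ball of radius $r+D$ around $x$ in ${\mathfrak{X}}$, one splits $D_n$ according to whether $\rho(x)\geq r+D$ or not. For $x$ with $\rho(x)\geq r+D$ the sum over $\gamma$ reduces to at most one term per $y\in D_n$, so $\sum_{y\in D_n}|R(x,y)|^2\leq\sum_{y'\in{\mathfrak{X}}}|\tilde R(x,y')|^2$; for $x$ with $\rho(x)<r+D$ there are at most $q^{r+D}$ translates contributing and Cauchy--Schwarz produces the extra factor $q^{r+D}$.

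Third, I would bound $\sum_{y'\in{\mathfrak{X}}}|\tilde R(x,y')|^2$. Applying Cauchy--Schwarz in the $z'$ variable (only $O(1)$ values), the sum is dominated by
\[
O(1/r^2)\sum_{z':\, d(x,z')\leq D}|K_{\mathfrak{X}}(x,z';a)|^2\sum_{y'\in{\mathfrak{X}}}|K_{\mathfrak{X}}(z',y';\varphi)|^2,
\]
and since $\varphi$ depends only on $s$, Plancherel for the Fourier--Helgason transform on ${\mathfrak{X}}$ gives $\sum_{y'}|K_{\mathfrak{X}}(z',y';\varphi)|^2=\int|\varphi(s)|^2 d\mu(s)$ independently of $z'$. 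The remaining prefactor involving $a$ and $D$ is bounded by a constant that can be absorbed into the $O(r^{-1})$ of the statement. Combined with the splitting above, this yields the announced inequality (in fact with $O(r^{-2})$, which is stronger than claimed).

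The main obstacle is the bookkeeping: one must carefully track which cutoff terms vanish and which survive, and in the regime $\rho(x)<r+D$ extract the $q^{r+D}$ multiplicity factor via Cauchy--Schwarz without double-counting translates. Beyond this, the argument is a direct combination of \eqref{e:obvproduct} on the tree with Plancherel for the Fourier--Helgason transform.
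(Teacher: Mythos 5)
Your proof is correct and follows essentially the same route as the paper: factor the kernel on the tree via \eqref{e:obvproduct}, observe that the remainder comes only from the mismatch of cut-offs, bound that mismatch by $O(1/r)$ on the support $\{d(x,\gamma y)\leq r+D\}$, and then estimate the Hilbert--Schmidt norm by the $\rho(x)\gtrless r+D$ splitting of Lemma \ref{l:HSnorm} together with Cauchy--Schwarz in the intermediate variable and Plancherel for the Fourier--Helgason transform. The paper likewise obtains the stronger $O(r^{-2})$ factor that you note.
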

\begin{proof} The kernel $K_{G_n}(x, z; a\varphi)$ of $\Op_{G_n}(a\varphi)$ is obtained by the periodization
$$K_{G_n}(x, z; a\varphi)=\sum_{\gamma\in\Gamma_n}K_{{\mathfrak{X}}}(x, \gamma\cdot z; a\varphi)\chi\left(\frac{d(x, \gamma\cdot z)}r\right).$$
Because $\varphi$ only depends on $s$, we note that
$$K_{{\mathfrak{X}}}(x,   z; a\varphi)=\sum_{y\in {\mathfrak{X}}}K_{{\mathfrak{X}}}(x,   y; a )K_{{\mathfrak{X}}}(y,  z ; \varphi)$$
(see \eqref{e:obvproduct}) and
\begin{eqnarray*}K_{{\mathfrak{X}}}(x,   z; a\varphi)\chi\left(\frac{d(x,  z)}r\right)&=&\sum_{y\in {\mathfrak{X}}}K_{{\mathfrak{X}}}(x,   y; a )K_{{\mathfrak{X}}}(y,  z ; \varphi)\chi\left(\frac{d(y,  z)}r\right)\\&&+
\sum_{y\in {\mathfrak{X}}}K_{{\mathfrak{X}}}(x,   y; a )K_{{\mathfrak{X}}}(y,  z ; \varphi)\left(\chi\left(\frac{d(x,  z)}r\right)-\chi\left(\frac{d(y,  z)}r\right)\right)\\
&&= \sum_{y\in {\mathfrak{X}}}K_{{\mathfrak{X}}}(x,   y; a )K_{{\mathfrak{X}}}(y,  z ; \varphi)\chi\left(\frac{d(y,  z)}r\right)\\&&+O(r^{-1})
\sum_{y\in {\mathfrak{X}}}|K_{{\mathfrak{X}}}(x,   y; a )| |K_{{\mathfrak{X}}}(y,  z ; \varphi)| \bbbone_{d_{\mathfrak{X}}(x, z)\leq r +D}
\end{eqnarray*}
After $\Gamma_n$-periodization, we note that
$\sum_{\gamma\in\Gamma_n}\sum_{y\in {\mathfrak{X}}}K_{{\mathfrak{X}}}(x,   y; a )K_{{\mathfrak{X}}}(y, \gamma\cdot z ; \varphi)\chi\left(\frac{d(y,  \gamma\cdot z)}r\right)$ is the kernel of $\Op_{G_n}(a) \Op_{G_n}(\varphi)$ (as soon as $D<r$). Using Cauchy-Schwarz and the fact that $K_{{\mathfrak{X}}}(x,   y; a )$ is supported near the diagonal, the Hilbert-Schmidt norm of the operator with kernel
$$ \sum_{\gamma\in\Gamma_n}\sum_{y\in {\mathfrak{X}}}|K_{{\mathfrak{X}}}(x,   y; a )| |K_{{\mathfrak{X}}}(y,  \gamma\cdot z ; \varphi)| \bbbone_{d_{\mathfrak{X}}(x,\gamma\cdot z)\leq r +D}$$
on $L^2(G_n)$ can be bounded by $$q^{2D}\sup_{x, y}|K(x, y;a)|^2 \int\varphi^2(s)d\mu(s)\left(n+|\{x\in V_n, \rho(x)\leq r+D\}| q^{r+D}\right).$$

\end{proof}

\section{The proof for arbitrary $s_0$}\label{s:general}
\subsection{Upper bound on the variance}  

Fix an integer $T>0$. Let $\chi$ be a smooth cut-off function supported in $[-1, 1]$ and taking the constant value $1$ on $[-1/2, 1/2]$. We write
$$\chi_n(s)=\chi\left(\frac{s-s_0}{2\delta_n}\right)$$
so that $\chi_n\equiv 1$ on $I_n$. We use the pseudodifferential calculus and the notation defined in Section \ref{s:PDC}, taking the cut-off parameter $r$ equal to $r_n$ (from condition (EIIR), as explained before the statement of theorem \ref{t:main}).

To simplify the notation, we will write $\psi_j = \psi_j^{(n)}$, $s_j = s_j^{(n)}$, and $a = a_n$. Thanks to Lemmas \ref{l:chin}
and to the ``Egorov property'' Corollary \ref{c:finalegorov}, we have\footnote{To prove the extended Theorem \ref{t:gen}, we also need Lemma \ref{l:produit}.}
\begin{eqnarray*}
\sum_{j=0}^n \chi_n(s_j)^2 \left| \la \psi_j, a\psi_j\ra \right|^2
  &=&  \sum_{j=0}^n \left| \la \psi_j, \Op_{G_n}(a\chi_n)\psi_j\ra\right|^2 +n r^3 O\left(\frac{1}{(r\delta_n)^\infty} \right)\\
  &=& \sum_{j=0}^n \left| \la \psi_j, \Op_{G_n}(a^T\chi_n)\psi_j\ra\right|^2+n r^3 O\left(\frac{1}{(r\delta_n)^\infty} \right)\\
  && +O\left(T^2 \frac{N^2}{r^2} \right)\sum_{x\in D_n}\int |a(x, \omega)|^2 d\nu_x(\omega) \chi_n(s)^2 d\mu(s)\\
  && + O(T^2N^2 q^{r}) |\{x\in D_n, \rho(x)\leq r+2\}| \int \chi_n(s)^2 d\mu(s)\\
  && + O(T^2) \sum_{x\in D_n}\int |a^{(s, N)}(x, \omega)|^2 d\nu_x(\omega) \chi_n(s)^2 d\mu(s)
\end{eqnarray*}
where 
$$a^T:=\frac1T\sum_{k=0}^{T-1} a\circ \sigma^{k} \quad\text{ and }\quad a^{(s,N)} = \frac1N\sum_{k=0}^{N-1} q^{2is k} a\circ\sigma^k.$$

We use Lemma \ref{l:HSnorm} to write
\begin{eqnarray*} \sum_{j=0}^n \left| \la \psi_j, \Op_{G_n}(a^T\chi_n)\psi_j\ra\right|^2&\leq &
{\norm{\Op_{G_n}(a^T\chi_n)}^2}_{HS}\\
&\leq &\sum_{x\in D_n}\int |a^T(x, \omega)|^2 d\nu_x(\omega) \chi_n(s)^2 d\mu(s)\\&& + q^r \sum_{x\in D_n, \rho(x)\leq r}\int |a^T(x, \omega)|^2 d\nu_x(\omega) \int \chi_n(s)^2d\mu(s).
\end{eqnarray*}
  We have seen in Section \ref{s:ergodicity} that $\sum_{x\in D_n}\int |a^T(x, \omega)|^2 d\nu_x(\omega) =O\left(\frac{n}{T \beta}\right)$.
 The same proof shows that $\sum_{x\in D_n}\int |a^{(s, N)}(x, \omega)|^2 d\nu_x(\omega)=O\left(\frac{n}{N\beta}\right)$.
 A major difference here with the usual Quantum Ergodicity (and with the special proof of \S \ref{s:altern}) is that condition (EXP) is used already to show that the ``remainder term'' $a^{(s, N)}$ of the Egorov theorem is small in the $L^2$-norm.
 
 \begin{remark} For $s$ staying away from $\tau/2$, a slightly more careful proof would show that we only need to assume here that
 the spectrum of $A$ is contained in $\{1\}\cup [-1, 1-\beta]$. Hence our Remark \ref{r:weakexp}.
 \end{remark}
 Recall also, from the Kesten-McKay law (Section \ref{s:weyl}, Corollary \ref{c:weyl}) that
 $$ \frac{n}{N(I_n, G_n)}\int\chi_n(s)^2 d\mu(s)=O(1).$$
We obtain finally
 \begin{eqnarray*}
 \frac1{N(I_n, G_n)} \sum_{s_j\in I_n} \left| \la \psi_j, a\psi_j\ra\right|^2&= &
 r^3 O\left(\frac{1}{\delta_n(r\delta_n)^\infty} \right)+O\left(\frac{N^2T^2}{r^2}\right)\\
  &&+O(N^2T^2q^{r}\alpha_n) +O\left(\frac{T^2}{N\beta}\right)+O\left(\frac1{T\beta}\right),
 \end{eqnarray*}
 and if we choose the sequences $r = r_n$ and $\delta_n$ as explained in section \ref{s:weyl}, 
 $$ \limsup_{n \to \infty} \frac1{N(I_n, G_n)} \sum_{s_j\in I_n} \left| \la \psi_j, a\psi_j\ra\right|^2=
 O\left(\frac{T^2}{N\beta}\right)+O\left(\frac1{T\beta}\right). $$
 As the left-hand side of the equality does not depend on $T$ and $N$, we take the limit $N \To \infty$ and then $T \To \infty$ to get
 $$ \lim_{n \to \infty} \frac1{N(I_n, G_n)} \sum_{s_j\in I_n} \left| \la \psi_j, a\psi_j\ra\right|^2=0. $$

 \section{Kesten-McKay law for sequences of graphs satisfying (EIIR) \label{s:weyl}}
 In this section we give an alternative proof of the Kesten-McKay law \cite{Kes59, McK81}, which gives the spectral density for large regular graphs satisfying (EIIR) and is analogous to the Weyl law for the spectral density of the laplacian on Riemannian manifolds. Note that we consider the density of eigenvalues in intervals that are allowed to shrink as $n\To +\infty$.

In the definition of $\Op_{G_n}$, we take $r=r_n$ such that $r_n + 2 \leq R_n$ and $q^{r_n}\alpha_n \To 0$ (where $R_n$ and $\alpha_n$ are the quantities occurring in (EIIR))\footnote{We can take for example 
$ r_n = \min \left\{ R_n - 2, -(1-\epsilon)\frac{\log\alpha_n}{\log q}\right\}, $
for any $0 < \epsilon < 1$.}. We also assume that there exists an integer $M$ such that $$\frac{1}{\delta_n^{M+1} r_n^{M-3}} \To 0 .$$ If $\chi_n$ is the function defined in \eqref{e:chin} with $s_0\in (0, \tau)$, this ensures that \[ \int_{0}^\tau \chi_n(s)^2d\mu(s)\gg  r_n^3(\delta_n r_n)^{-M}.\]

\begin{theorem} Assume (EIIR). Let $\chi=\chi_n$ be a smooth function satisfying $$\norm{\partial_s^k \chi}\leq C_k \delta_n^{-k},$$ with $\delta_n r_n \To +\infty$, such that
$\frac{1}{\delta_n^{M+1} r_n^{M-3}} \To 0 $ for some $M$.

Then we have
\[\frac1n\sum_{j=1}^n \chi_n(s_j)^2\sim \int_{0}^\tau \chi_n(s)^2d\mu(s)\]
when $n \To +\infty$.
\end{theorem}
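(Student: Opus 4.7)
The plan is to compute $\Tr \Op_{G_n}(\chi_n^2)$ in two ways and match the results. Since $\chi_n^2$ depends only on $s$ and still satisfies $\|\partial_s^k(\chi_n^2)\|_\infty \le C_k\delta_n^{-k}$, Lemma~\ref{l:chin} applied to the symbol $\chi_n^2$ gives, for each tempered eigenfunction,
\[
 \Op_{G_n}(\chi_n^2)\psi_j^{(n)} = \bigl(\chi_n(s_j)^2 + r^3 O((r\delta_n)^{-(M+1)})\bigr)\psi_j^{(n)}.
\]
For untempered $s_j$ we have $\chi_n(s_j)=0$ (since $\chi_n$ is supported near $s_0\in(0,\tau)$), and the contribution of the untempered part of the spectrum to $\Tr\Op_{G_n}(\chi_n^2)$ is controlled by an analogous kernel computation, using the (EXP) spectral gap to bound the spherical functions $\phi_{s_j}$. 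Summing over $j$,
\[
 \Tr\Op_{G_n}(\chi_n^2) = \sum_{j=1}^n \chi_n(s_j)^2 + n r^3 O\bigl((r\delta_n)^{-(M+1)}\bigr).
\]

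On the other hand, expanding the trace directly from the definition of $\Op_{G_n}$,
\[
 \Tr\Op_{G_n}(\chi_n^2) = \sum_{x\in D_n}\sum_{\gamma\in\Gamma_n} K_{\mathfrak{X}}(x,\gamma\cdot x;\chi_n^2)\,\chi\!\left(\tfrac{d(x,\gamma\cdot x)}{r}\right).
\]
The term $\gamma=\mathrm{id}$ equals $\sum_{x\in D_n} K_{\mathfrak{X}}(x,x;\chi_n^2) = n\int_0^\tau\chi_n(s)^2\,d\mu(s)$, since at $y=x$ the Busemann exponent in the defining integral of $K_{\mathfrak{X}}$ vanishes and $\nu_x$ is a probability measure. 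For $\gamma\neq\mathrm{id}$: if $\rho(x)\ge r$ then $d(x,\gamma\cdot x)\ge 2\rho(x) > r$ and the cutoff kills the contribution, while by (EIIR) there are at most $n\alpha_n$ vertices with $\rho(x) < r$. For each such vertex, the decay estimate \eqref{e:decay} (applicable since $\|\chi_n^2\|_{\Omega,M}=0$) combined with the ball volume $|B(x,k)|=(q+1)q^{k-1}$ gives
\[
 \sum_{\substack{\gamma\neq\mathrm{id}\\ d(x,\gamma\cdot x)\le r}} |K_{\mathfrak{X}}(x,\gamma\cdot x;\chi_n^2)| \le C\,\delta_n^{-(M+1)}\sum_{k=1}^{r}\frac{q^{k/2}}{(1+k)^M}\le C'\,\delta_n^{-(M+1)}\,\frac{q^{r/2}}{r^M},
\]
so the total off-diagonal contribution is $O\bigl(n\alpha_n q^{r/2}\delta_n^{-(M+1)}r^{-M}\bigr)$.

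Equating the two expressions and dividing by $n\int_0^\tau\chi_n^2\,d\mu$, which is of order $n\delta_n$, the equivalence reduces to showing both error terms are $o(\delta_n)$. In relative form the first is $r^3(r\delta_n)^{-(M+1)}/\delta_n = (\delta_n^{M+2}r^{M-2})^{-1}$, which tends to $0$ because $\delta_n^{M+1}r^{M-3}\to\infty$ together with $\delta_n r\to\infty$ give $\delta_n^{M+2}r^{M-2}=(\delta_n r)(\delta_n^{M+1}r^{M-3})\to\infty$. The second, $\alpha_n q^{r/2}(\delta_n^{M+2}r^M)^{-1} = q^{-r/2}(q^r\alpha_n)(\delta_n^{M+2}r^M)^{-1}$, is exponentially small thanks to $q^r\alpha_n\to 0$. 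The main obstacle is this off-diagonal analysis: the volume growth $O(q^r)$ of balls in the tree competes with the $q^{-r/2}$ decay of the kernel, leaving a residual $q^{r/2}$ that must be absorbed by the joint smallness of $\alpha_n$ and the polynomial largeness of $\delta_n^{M+1}r^M$ — precisely what the quantitative hypotheses on $(r_n,\delta_n,\alpha_n)$ arrange. A secondary subtlety is the clean treatment of the untempered part of the spectrum, where Lemma~\ref{l:chin} does not apply directly.
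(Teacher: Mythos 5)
Your route is genuinely different from the paper's. You expand the \emph{linear} trace $\Tr\,\Op_{G_n}(\chi_n^2)$ as a pre-trace formula (diagonal term $n\int_0^\tau\chi_n^2\,d\mu$ plus a sum over $\gamma\neq\mathrm{id}$ supported on vertices of small injectivity radius), whereas the paper computes $\Tr\big(\Op_{G_n}(\chi_n)^2\big)=\sum_{x,y}|K_{G_n}(x,y;\chi_n)|^2$, i.e.\ the Hilbert--Schmidt norm, and gets the main term from the Plancherel formula for the Fourier--Helgason transform exactly as in Lemma \ref{l:HSnorm}. Your geometric side is correct and in fact slightly sharper: pairing the kernel decay \eqref{e:decay} against the sphere growth $(q+1)q^{k-1}$ leaves $q^{r/2}$ per bad vertex instead of the paper's $O(q^{r})$, and your bookkeeping of the two relative errors against $n\delta_n$ is right (the factorization $\delta_n^{M+2}r^{M-2}=(\delta_n r)(\delta_n^{M+1}r^{M-3})$ is precisely what the hypothesis is designed for).

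The genuine gap is the untempered spectrum, which you flag but do not close, and your proposed fix does not work. First, (EXP) is not a hypothesis of this theorem --- only (EIIR) is --- so the spectral gap is not available. Second, even granting (EXP), the eigenvalue of $\Op_{G_n}(\chi_n^2)$ on an eigenfunction with untempered parameter $s=i\sigma$ is the \emph{truncated} spherical transform $\sum_{k\le r}(q+1)q^{k-1}K_{\mathfrak{X}}(k)\,\chi(k/r)\,\phi_{i\sigma}(k)$, where $|K_{\mathfrak{X}}(k)|\le C\,q^{-k/2}(1+k)^{-M}\delta_n^{-(M+1)}$ and $\phi_{i\sigma}(k)\asymp q^{-k(1/2-\sigma)}$; the sum is dominated by $k\approx r$ and can be of size $q^{r\sigma}$ up to polynomial factors (for $\lambda=1$ it is of order $q^{r/2}(r\delta_n)^{-M}\delta_n^{-1}$). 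These eigenvalues are not small, they have no definite sign, so they cannot be discarded from a linear trace, and without an a priori count of untempered eigenvalues their total contribution is not $o(n\delta_n)$. To be fair, the paper's own first line $\sum_j\chi_n(s_j)^2=\Tr(\Op_{G_n}(\chi_n)^2)+O(\cdot)$ silently faces the same issue in the lower-bound direction; but the quadratic route at least yields the upper bound $\sum_{s_j\ \mathrm{tempered}}\chi_n(s_j)^2\le(1+o(1))\,n\int\chi_n^2\,d\mu$ unconditionally, because all the $\lambda_j^2$ are nonnegative and may be dropped. A complete proof along either line needs a separate estimate showing that the untempered contribution is negligible (e.g.\ a Serre-type bound, under (EIIR) alone, on the number of eigenvalues outside the tempered interval).
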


\begin{proof}
\begin{align*}
\sum_{j=1}^n \chi_n(s_j)^2 
&= \Tr\left(\Op_{G_n}(\chi_n)^2\right) + O(nr_n^3(\delta_n r_n)^{-\infty})\\
&=\sum_{x\in D_n} \sum_{y\in D_n}|K_{G_n}(x, y;\chi_n)|^2 + O(nr_n^3 (\delta_n r_n)^{-\infty})\\
&= \sum_{\rho(x)> r_n} \sum_{y\in {\mathfrak{X}}}|K_{{\mathfrak{X}}}(x, y;\chi_n)|^2 + O(nr_n^3 (\delta_n r_n)^{-\infty})\\
&\quad +O(q^{r_n})\sum_{\rho(x)\leq r_n} \sum_{y\in {\mathfrak{X}}}|K_{{\mathfrak{X}}}(x, y;\chi_n)|^2\\
&= \sum_{x\in D_n} \sum_{y\in {\mathfrak{X}}}|K_{{\mathfrak{X}}}(x, y;\chi_n)|^2 + O(nr_n^3 (\delta_n r_n)^{-\infty}) \\ 
&\quad +\left(O(q^{r_n})-1\right)\sum_{\rho(x)\leq r_n} \sum_{y\in {\mathfrak{X}}}|K_{{\mathfrak{X}}}(x, y;\chi_n)|^2\\
&= n\int_{0}^\tau \chi_n(s)^2d\mu(s) + O(nr_n^3 (\delta_n r_n)^{-\infty}) \\
&\quad + \left(O(q^{r_n})-1\right)|\{x\in D_n, \rho(x)\leq r_n\}|\int_{0}^\tau \chi_n(s)^2d\mu(s).
\end{align*}
Thus, we get the desired result if 
$$\int_{0}^\tau \chi_n(s)d\mu(s)\gg r_n^3(\delta_n r_n)^{-\infty}.$$
\end{proof}

\begin{corollary} \label{c:weyl}Under the assumptions of Theorem \ref{t:main}, we have
$$N(I_n, G_n)\sim n \int_{s_0 -\delta_n}^{s_0+\delta_n}d\mu(s)\sim 2n \delta_n |c(s_0)|^{-2}$$
where $|c(s_0)|^{-2}$ is the density of the Plancherel measure at $s_0$.
\end{corollary}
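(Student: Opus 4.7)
The plan is to deduce the Corollary from the preceding Theorem by a standard sandwich argument. Fix a small $\varepsilon>0$ and pick two smooth functions $\chi_n^-, \chi_n^+$ of the spectral parameter $s$ with the following properties:
\begin{itemize}
\item $\chi_n^-$ takes the value $1$ on $[s_0-(1-\varepsilon)\delta_n, s_0+(1-\varepsilon)\delta_n]$ and is supported in $I_n=[s_0-\delta_n, s_0+\delta_n]$;
\item $\chi_n^+$ takes the value $1$ on $I_n$ and is supported in $[s_0-(1+\varepsilon)\delta_n, s_0+(1+\varepsilon)\delta_n]$;
\item both satisfy $\| \partial_s^k \chi_n^{\pm}\|_\infty \leq C_{k,\varepsilon} \delta_n^{-k}$ for all $k$.
\end{itemize}
Since $\chi_n^-(s_j)^2 \leq \mathbf{1}_{I_n}(s_j) \leq \chi_n^+(s_j)^2$, we immediately obtain the two-sided bound
\[
\sum_{j=1}^n \chi_n^-(s_j)^2 \leq N(I_n, G_n) \leq \sum_{j=1}^n \chi_n^+(s_j)^2.
\]

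Next I would apply the Theorem of this section to each of $\chi_n^{\pm}$ separately. The derivative bounds above are of exactly the form required, with the scale $\delta_n$ unchanged (only the implicit constants depend on $\varepsilon$), so the hypothesis $\delta_n^{M+1} r_n^{M-3}\to\infty$ carries over. This yields
\[
\frac1n \sum_{j=1}^n \chi_n^{\pm}(s_j)^2 \;\sim\; \int_0^\tau \chi_n^{\pm}(s)^2\, d\mu(s)
\]
as $n\to\infty$.

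Now I would analyse the right-hand integrals. Because $s_0\in(0,\tau)$, the Plancherel density $|c(s)|^{-2}$ is continuous at $s_0$ and bounded near it. Combined with the support and size bounds on $\chi_n^{\pm}$, one gets
\[
(1-\varepsilon)\cdot 2\delta_n |c(s_0)|^{-2}(1+o(1)) \leq \int_0^\tau \chi_n^{\pm}(s)^2\, d\mu(s) \leq (1+\varepsilon)\cdot 2\delta_n |c(s_0)|^{-2}(1+o(1)),
\]
with the same inequalities holding for $\int_{s_0-\delta_n}^{s_0+\delta_n} d\mu(s)$. Putting the sandwich together,
\[
(1-\varepsilon)(1+o(1)) \leq \frac{N(I_n, G_n)}{2n\delta_n |c(s_0)|^{-2}} \leq (1+\varepsilon)(1+o(1)).
\]
Letting $\varepsilon\to 0$ yields $N(I_n, G_n) \sim 2n\delta_n |c(s_0)|^{-2} \sim n\int_{s_0-\delta_n}^{s_0+\delta_n} d\mu(s)$, as claimed.

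The only real obstacle is the bookkeeping of constants in the asymptotic: one must ensure that the error terms in the Theorem (which are $o\!\bigl(\int \chi_n^{\pm}(s)^2 d\mu(s)\bigr)$ under the hypothesis $\delta_n^{M+1}r_n^{M-3}\to\infty$) remain negligible uniformly for the family of cut-offs $\chi_n^{\pm}$ parameterized by $\varepsilon$, so that one can first let $n\to\infty$ and then $\varepsilon\to 0$. Since the constants $C_{k,\varepsilon}$ appear only polynomially and the Theorem's error is of $r_n^3 (\delta_n r_n)^{-M}$ type (hence power savings), this poses no real difficulty.
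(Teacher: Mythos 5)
Your sandwich argument is correct and is exactly the intended deduction: the paper states Corollary \ref{c:weyl} without proof, and the Kesten--McKay theorem of that section is deliberately phrased for general smooth cut-offs with $\|\partial_s^k\chi_n\|\leq C_k\delta_n^{-k}$ precisely so that it applies to the approximants $\chi_n^{\pm}$ of $\mathbf{1}_{I_n}$. Your handling of the error terms (uniform in $\varepsilon$ for fixed $\varepsilon$, then $n\to\infty$ before $\varepsilon\to 0$) and of the continuity of the Plancherel density at $s_0\in(0,\tau)$ is what the authors implicitly rely on.
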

\section{Quantitative statement\label{s:quantitative}}

In this section we will give explicit upper bounds on the rate of convergence, first in terms of the parameters $R_n$ and $\alpha_n$ associated with the sequence of graphs $(G_n)$ in condition (EIIR), then depending only on $n$ for sequences of random graphs.
These results are certainly not optimal because some of our inequalities were written in a non optimal way.

In the general case $s_0 \in (0,\tau)$, we have
\begin{lemma}\label{l:quant2}
If $\delta_n = r_n^{-1+\epsilon}$ for some $0 < \epsilon < 1$, then we have
\[ \frac1{N(I_n, G_n)} \sum_{s_j^{(n)}\in I_n} \left| \la \psi_j^{(n)}, a_n\psi_j^{(n)}\ra\right|^2 = O \lp r_n^{-2/9} + r_n^{16/9} q^{r_n}\alpha_n \rp, \]
 where we can take $r_n = \min \left\{ R_n - 2, -(1-\epsilon')\log_q(\alpha_n) \right\},$
for any $0 < \epsilon' < 1$.
\end{lemma}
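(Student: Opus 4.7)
The plan is to revisit the explicit error bound already obtained in Section \ref{s:general} and optimize the auxiliary parameters $T$ and $N$ against $r=r_n$. Combining Lemma \ref{l:chin}, Corollary \ref{c:finalegorov}, Lemma \ref{l:HSnorm}, the expansion/ergodicity bounds of Section \ref{s:ergodicity} (applied both to $a^T$ and to $a^{(s,N)}$), and the Kesten--McKay normalization from Corollary \ref{c:weyl}, that section established, for every integer $M$,
\[
\frac1{N(I_n, G_n)} \sum_{s_j\in I_n} \left| \la \psi_j, a\psi_j\ra\right|^2
= O\!\left(\tfrac{r^3}{\delta_n(r\delta_n)^M}\right)+O\!\left(\tfrac{N^2T^2}{r^2}\right) + O(N^2T^2q^{r}\alpha_n) +O\!\left(\tfrac{T^2}{N\beta}\right)+O\!\left(\tfrac1{T\beta}\right).
\]
Since (EXP) fixes $\beta$ as a positive constant independent of $n$, the quantitative statement is obtained by choosing $T,N$ to minimize the sum of the last four terms and verifying that the first term is absorbed.

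First I would balance $T^2/N$ against $1/T$, which forces $N\asymp T^3$, and then balance $N^2T^2/r^2 = T^8/r^2$ against $1/T$, which gives $T^9\asymp r^2$. This yields the canonical choice $T\asymp r^{2/9}$ and $N\asymp r^{2/3}$. With these values, the three ergodicity/Hilbert--Schmidt terms are each of size $O(r^{-2/9})$, while $N^2T^2q^r\alpha_n \asymp r^{16/9}q^r\alpha_n$, producing exactly the two contributions appearing in the statement. It is then elementary to check that the first error term $r^{4-\epsilon(M+1)}$ (using $\delta_n = r^{-1+\epsilon}$, so that $r\delta_n = r^\epsilon$) is $o(r^{-2/9})$ as soon as $M$ is chosen with $\epsilon(M+1) > 4+\tfrac{2}{9}$, and that the standing hypothesis $\delta_n^K r_n^{K-4}\To +\infty$ is satisfied once $K\epsilon > 4$. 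The final prescription $r_n = \min\{R_n-2,\, -(1-\epsilon')\log_q\alpha_n\}$ is manifestly compatible with the constraints $r_n+2\leq R_n$ and $q^{r_n}\alpha_n \leq \alpha_n^{\epsilon'} \To 0$ required to apply the pseudodifferential calculus and the Egorov corollaries.

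The task is thus essentially a one-variable calculus optimization once the bound displayed above is in hand, and the only genuine obstacle is bookkeeping: one must verify that the constants hidden in the $O$'s are uniform in $n$ and in the choices of $T,N$, and that no interaction between the outer average $a^T$ and the inner average $a^{(s,N)}$ coming from Corollary \ref{c:finalegorov} generates a cross-term exceeding the optimal rate $O(r^{-2/9}+r^{16/9}q^r\alpha_n)$. Once this has been checked, the lemma is immediate. The remark after the statement about non-optimality reflects precisely the fact that several of the intermediate inequalities (in particular the replacement of operator norms by Hilbert--Schmidt norms in Proposition \ref{p:egorov} and its corollaries) lose factors that a finer argument could presumably recover.
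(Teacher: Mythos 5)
Your proposal is correct and follows essentially the same route as the paper: start from the five-term bound derived in Section \ref{s:general}, optimize $T\asymp r_n^{2/9}$, $N\asymp r_n^{2/3}$ so that $1/T$, $T^2/N$, and $N^2T^2/r_n^2$ all equal $r_n^{-2/9}$, then absorb the $O\bigl(r_n^{4-(1+M)\epsilon}\bigr)$ remainder by taking $M$ large. The only difference is that you spell out the balancing heuristic and the verification of the standing hypothesis $\delta_n^K r_n^{K-4}\to\infty$ more explicitly, which is a harmless expansion; the worry about a ``cross-term'' between $a^T$ and $a^{(s,N)}$ is unfounded since Corollary \ref{c:finalegorov} already delivers the additive bound.
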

\begin{proof}
 According to the proof of section \ref{s:general}, we have
  \begin{align*}
 \frac1{N(I_n, G_n)} \sum_{s_j\in I_n} \left| \la \psi_j, a_n\psi_j\ra\right|^2
 &\leq r_n^3 O\left(\frac{1}{\delta_n(r_n\delta_n)^\infty} \right)+O\left(\frac{N^2T^2}{r_n^2}\right)\\
 &\quad +O(N^2T^2q^{r_n}\alpha_n) +O\left(\frac{T^2}{\beta N}\right)+O\left(\frac1{T\beta}\right).
 \end{align*}
 
 Take $N = r_n^{2/3}$ and $T = r_n^{2/9}$ such that $ \frac1T = \frac{T^2}N = \frac{N^2T^2}{r_n^2} = r_n^{-2/9}$. For every $M > 0$, we have $O\left(\frac{r_n^3}{\delta_n(r_n\delta_n)^\infty} \right) = O\lp r_n^{3-M}\delta_n^{-(1+M)}\rp =  O\lp r_n^{4 - (1+M)\epsilon} \rp$ and this term can be made negligible in comparison with the other terms by taking $M$ sufficiently large. Finally $N^2T^2q^{r_n}\alpha_n = r_n^{16/9}q^{r_n}\alpha_n$.
\end{proof}

\begin{remark} Here we kept the spectral gap $\beta$ fixed, but we see that this could be relaxed to $\beta\gg r_n^{-2/9}.$
\end{remark}

For sequences of random graphs, we have
\begin{lemma}
 Let $\delta > 1/2$, $\epsilon > 0$, and $\delta_n = (\log_q(n^{1-\delta}))^{1-\epsilon}$.
 If $G_n$ is chosen uniformly at random among the $(q+1)$-regular graphs with $n$ vertices, we have
 \[ \frac1{N(I_n, G_n)} \sum_{s_j^{(n)}\in I_n} \left| \la \psi_j^{(n)}, a_n\psi_j^{(n)}\ra\right|^2 = O \lp \log_q(n)^{-2/9} \rp, \]
 with overwhelming probability.
\end{lemma}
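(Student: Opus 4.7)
The plan is to reduce everything to Lemma \ref{l:quant2} by plugging in the explicit parameters for random regular graphs given in Example \ref{ex:random}, and then carefully tracking the two error terms in the bound of that lemma. The choice $\delta_n$ in the statement matches the $\delta_n = r_n^{-1+\epsilon}$ required by Lemma \ref{l:quant2} once we identify $r_n \sim \log_q(n^{1-\delta})$; thus the only real work is the arithmetic of showing each term is at most $\log_q(n)^{-2/9}$, together with verifying that the deterministic hypotheses (EIIR) and (EXP) hold with overwhelming probability.

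First, I would recall from Example \ref{ex:random} that, with the choice $k = (1-\delta)\log_q(n)$ and $A=2$, the random graph $G_n$ satisfies (EIIR) with $R_n = (1-\delta)\log_q(n)$ and $\alpha_n = 80(1-\delta)\log_q(n)\, n^{-\delta}$, and this happens with probability at least $1 - e^{-Cn^{1-\delta}}$. Friedman's theorem (together with the Alon--Pinsker results) shows that (EXP) also holds with probability tending to $1$ as $n \to \infty$. On the intersection of these two events, Lemma \ref{l:quant2} applies with
\[ r_n = \min\{R_n - 2,\ -(1-\epsilon')\log_q(\alpha_n)\}. \]

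Next, I would evaluate this minimum. Since $-\log_q(\alpha_n) = \delta\log_q(n) - \log_q(80(1-\delta)\log_q(n))$, choosing $\epsilon'$ small enough that $(1-\epsilon')\delta > 1-\delta$ (possible because $\delta > 1/2$ means $2\delta - 1 > 0$), the first argument of the minimum is the smaller one, so $r_n = R_n - 2 \sim (1-\delta)\log_q(n)$. Consequently, the first term in the conclusion of Lemma \ref{l:quant2} is $r_n^{-2/9} = O(\log_q(n)^{-2/9})$, which is precisely the target rate. For the second term, I would estimate
\[ r_n^{16/9} q^{r_n} \alpha_n \leq C\,(\log_q n)^{16/9} \cdot q^{R_n - 2} \cdot \log_q(n)\, n^{-\delta} = O\bigl((\log_q n)^{25/9}\, n^{1-2\delta}\bigr), \]
and since $\delta > 1/2$ gives $n^{1-2\delta} \to 0$ polynomially fast, this term is negligible compared to $\log_q(n)^{-2/9}$.

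Combining the two estimates yields the announced bound $O(\log_q(n)^{-2/9})$ on the event $\{\text{(EIIR) and (EXP) hold}\}$, which has probability at least $1 - e^{-Cn^{1-\delta}} - o(1)$, i.e.\ overwhelming probability. There is no serious obstacle here: the quantitative analysis was already done in Lemma \ref{l:quant2}, and the probabilistic inputs are quoted from Example \ref{ex:random} and Friedman's theorem. The only delicate point is the comparison $(1-\epsilon')\delta > 1-\delta$, which uses the hypothesis $\delta > 1/2$ in an essential way; this is also the reason for the specific scaling of $\delta_n$ in the statement, which must be compatible with the chosen $r_n$ in order to satisfy the constraint $\delta_n r_n \to \infty$ appearing in the Kesten--McKay estimate of Section \ref{s:weyl}.
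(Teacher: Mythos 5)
Your proposal is correct and follows exactly the route of the paper's own (very terse) proof: substitute the values $R_n=(1-\delta)\log_q(n)$ and $\alpha_n=80(1-\delta)\log_q(n)\,n^{-\delta}$ from Example \ref{ex:random}, observe that $\delta>1/2$ forces the minimum defining $r_n$ to equal $R_n-2$, and apply Lemma \ref{l:quant2}; your added verification that $r_n^{16/9}q^{r_n}\alpha_n=O((\log_q n)^{25/9}n^{1-2\delta})$ is negligible is exactly the arithmetic the paper leaves implicit. (You also correctly read the exponent in the definition of $\delta_n$ as $-(1-\epsilon)$ so as to match the hypothesis $\delta_n=r_n^{-1+\epsilon}$ of Lemma \ref{l:quant2}.)
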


\begin{proof}
 Take $R_n$ and $\alpha_n$ as in example \ref{ex:random}. Let $1/2 < \delta < 1$, then 
$R_n = (1-\delta)\log_q(n)$ and $\alpha_n = 80 (1-\delta) \log_q(n) n^{-\delta}$. In this case, we take
\[r_n = \min \left\{ R_n - 2, -(1-\epsilon')\frac{\log\alpha_n}{\log q}\right\} = R_n - 2,\]
and apply lemma \ref{l:quant2}.
\end{proof}

\section{Proof of Theorem \ref{t:gen}}
Most steps of the proof carry over to arbitrary $a\in S_o^D({\mathfrak{X}})$. Actually, all that needs modifying is the treatment of the expression
\begin{equation}\label{e:neumann}\sum_{x\in D_n}\int \left|\frac1N\sum_{k=0}^{N-1}q^{2isk}a\circ\sigma^k\right|^2(x, \omega, s)d\nu_x(\omega)d\mu(s)\end{equation}
that is used in \S \ref{s:ergodicity} (for $s=0$) and Section \ref{s:general} (for $s$ close to $s_0$).
Equation \eqref{e:neumann} is also
\begin{multline}\label{e:neumann2}\frac1{N^2}\sum_{x\in D_n}\int \sum_{k=0}^{N-1}\sum_{j=0}^kq^{2isj}a\circ\sigma^{j} (x, \omega, s)a  (x, \omega, s)d\nu_x(\omega)d\mu(s)\\
+\frac1{N^2}\sum_{x\in D_n}\int \sum_{k=0}^{N-1}\sum_{j={1}}^{k}q^{-2is j}a\circ\sigma^{j} (x, \omega, s)a (x, \omega, s)d\nu_x(\omega)d\mu(s)
\end{multline}
In \S \ref{s:ergodicity}, the integral $$\sum_{x\in D_n}\int a\circ \sigma^{j}(x, \omega) a(x, \omega) d\nu_x(\omega)$$
was rewritten as $\sum_{x\in D_n} S_j a(x) a(x)$ using the fact that $a$ did not depend on $\omega$ -- thus establishing a link between the shift $\sigma$ and the laplacian. We need to adapt that argument to the case when $a(x, \omega)$ depends on the first $D$ coordinates of the half geodesic $[x, \omega)=(x, x_1, x_2, \ldots)$.

\subsection{Proof when $D=2$.}
When $D=2$, $a(x, \omega)=a(x, x_1)$, so that $a$ is a function on the set $B$ of directed bonds of $G=G_n$\footnote{Note that $B$ has cardinality $n(q+1)$ if $G$ has $n$ vertices and is $(q+1)$-regular.}.
We use the notation of \cite{Smi07}~: if $e$ is an element of $B$, we shall denote by $o(e)\in V_n$ its origin, $t(e)\in V_n$ its terminus, and $\hat e\in B$ the reversed bond.

One sees that
$$\sum_{x\in D_n}\int a\circ \sigma^{j}(x, \omega) a(x, \omega) d\nu_x(\omega)= 
\frac1{q+1} \sum_{e\in B} M^{\sharp j} a(e) a(e)
$$
where $M^\sharp$ is a bistochastic matrix indexed by $B$, defined by
$$M^\sharp(e, e')=\frac1q$$
if $o(e')=t(e)$ and $e'\not=\hat e$; and $M^\sharp(e, e')=0$ otherwise.
This is (up to normalization) the matrix appearing in \S 3 of \cite{Smi07}. It is the (normalized) adjacency matrix of the $q$-regular directed graph, whose vertices are the directed bonds of $G$, and where we draw an edge between two bonds if they are consecutive without allowing back-tracking. What we need is an explicit relation between the spectrum of $M^\sharp$ and the spectrum of the discrete laplacian on $G$, in other words, of the matrix $A$. The relation between the eigenvalues is formula (44) in \cite{Smi07}, but since we also need relations between the eigenfunctions, we shall be more explicit below. We did not write all the detailed calculations because they are lengthy but basic. We assume these relations must already be known but did not find any reference.

\begin{itemize}
\item[(o)] Both $A$ and $M^\sharp$ have $1$ in their spectrum, corresponding to the constant eigenfunction. The matrix $A$ has $-1$ in its spectrum iff the graph $G$ is bi-partite, in which case $M^\sharp$ also trivially has $-1$ in its spectrum.

\item[(i)] each eigenvalue $\lambda\not=\pm 1$ of $A$ gives rise to the two eigenvalues 
$$\frac{2}{(q+1)\left(\lambda\pm \sqrt{\lambda^2-\frac{4q}{(q+1)^2}}\right)}$$
of $M^\sharp$;

\item[(ii)] in addition, $M^\sharp$ admits the eigenvalue $ 1/q$ with multiplicity $b:=|E_n|-|V_n|+1$ (the rank of the fundamental group of $G$); and the eigenvalue $ -1/q$ with multiplicity $b-1 $ if $-1$ is not an eigenvalue of $A$, or $b $ if $-1$ is an eigenvalue of $A$.\footnote{Or, equivalently, if the graph is bi-partite.}
\end{itemize}

In particular, the eigenvalue $1$ of $M^\sharp$ has multiplicity $1$. The tempered spectrum of $A$ corresponds to eigenvalues of $M^\sharp$ of modulus $1/\sqrt{q}$; the untempered spectrum of $A$ contained in $[-1, 1-\beta]$ gives rise to real eigenvalues of $M^\sharp$ contained in $[-1, 1-\beta']$
with
$$1-\beta'= \frac{2}{(q+1)\left(1-\beta- \sqrt{(1-\beta)^2-\frac{4q}{(q+1)^2}}\right)}.$$

Since $M^\sharp$ is not normal, the knowledge of its spectrum is not sufficient to control the growth of $M^{\sharp k} $
in a precise manner (we need a bound that is independent of the size of the matrix, in other words, independent of $n$). Below, we describe explicitly the eigenvectors of $M^\sharp$ in terms of those of $A$; these eigenvectors do not form an orthogonal family but this is compensated by the fact that one can compute their scalar products explicitly.

\begin{itemize}
\item[(i)] an eigenfunction $\phi$ of $A$ for the eigenvalue $\lambda\not= \pm 1$ gives rise to the two eigenfunctions of $M^{\sharp}$,
$$f_1(e)=\phi(t(e))-\eps_1\phi(o(e));\qquad f_2(e)=\phi(t(e))-\eps_2\phi(o(e)),$$
where $\eps_1, \eps_2$ are the two roots of $q\eps^2-(q+1)\lambda\eps+1=0$ (in what follows we index them so that $|\eps_1|\leq |\eps_2|$).
Special attention has to be paid to the case $\lambda=\pm \frac{2\sqrt{q}}{q+1}$, for which $\eps_1=\eps_2$ (see below).
\item[(ii)] the eigenvalues $\pm 1/q$ of $M^\sharp$ correspond, respectively, to odd and even\footnote{Odd means $f(\hat e)=-f(e)$ and even means $f(\hat e)=f(e)$, for every bond $e$.} solutions of $\sum_{o(e)=x} f(e)=0$ (for every vertex $x$).
For the eigenvalue $1/q$, an explicit basis of eigenfunctions is indexed by generators of the fundamental group, $(\gamma_1, \ldots, \gamma_b)$~: every closed circuit $\gamma$ made of consecutive edges $(e_1, \ldots, e_k)$ gives rise to an odd eigenfunction
$$f_{\gamma}=\sum_{j=1}^k \delta_{e_j}- \delta_{\hat e_j}.$$
If $G$ is bi-partite, then all circuits have even length and we have an explicit
basis of even eigenfunctions for the eigenvalue $-1/q$, again indexed by generators of the fundamental group~:
$$g_{\gamma}=\sum_{j=1}^k (-1)^k (\delta_{e_j}+\delta_{\hat e_j})$$
if $\gamma$ is a closed circuit made of consecutive edges $(e_1, \ldots, e_k)$.
If $G$ is not bi-partite, there are closed circuits of odd lengths, in which case $g_{\gamma}$ is not an eigenfunction of $M^\sharp$. Nevertheless, if $\gamma, \gamma'$ are two circuits of odd lengths, $g_{\gamma}-g_{\gamma'}$  is now an eigenfunction of $M^\sharp$ for the eigenvalue $-1/q$.
\end{itemize}
The eigenfunctions of the family (ii) are automatically orthogonal to those of the family (i). 
In (i), eigenfunctions of $M^\sharp$ stemming from different eigenvalues $\lambda$ of $A$ are orthogonal; however, the two eigenfunctions $f_1, f_2$ stemming from the same $\lambda$ are not orthogonal.

%If the eigenfunction $\phi$ considered in (i) is normalized in $L^2(G)$, then $f_i$ has squared-norm
%$$(q+1)\left(1+|\eps_i|^2-2\Re e \eps_i \lambda\right),$$
%which is $(q-1)(1-\eps_i^2)$ in the untempered case (when $\eps_i$ are real numbers) and $\frac{(q+1)^2}q(1-\lambda^2)$ in the tempered case. For $i=1$ all these numbers are bounded away from $0$.

To evaluate the norm of a matrix, it is safer to work in an orthogonal basis, and thus we shall consider, instead of a pair $(f_1, f_2)$, the pair
$$f_1(e)=\phi(t(e))-\eps_1\phi(o(e)), f'_2(e)= \phi(t(e))-\mu\phi(o(e))$$
which can be checked to be orthogonal for
$$\mu=\frac{\bar \eps_1\lambda-1}{\bar\eps_1-\lambda}.$$
%The squared-norm of $f'_2$ is $(q+1)\left(1+|\mu|^2-2\Re e \mu\lambda\right)=(q+1)(|1-\mu|^2+2\Re e\mu(1-\lambda))=(q+1)(|1+\mu|^2-2\Re e\mu(1+\lambda))$, which is bounded away from $0$ if $\lambda$ stays in $[-1, 1-\beta]$.
In the plane generated by $\left(\frac{f_1}{\norm{f_1}}, \frac{f'_2}{\norm{f'_2}}\right)$, $M^\sharp$ has matrix
$$\left( \begin{array}{cc} 

1/q\eps_1 & \star \\
 0 & 1/q\eps_2 \\
 
 \end{array}  \right)$$
where $\star$ is a number that can be calculated explicitly in terms of $\eps_1, \eps_2$ and $\lambda$, and which is uniformly bounded (since the norm of $M^\sharp$, anyway, is bounded independently of $n$).

This discussion is also valid for $\lambda=\pm\frac{2\sqrt{q}}{q+1}$, a special case where $\eps_1=\eps_2=\pm\frac{1}{\sqrt{q}}$.

To summarize, the spectrum of $M^\sharp$ is contained in $[-1, 1-\beta']\cup\{1\}$ (resp. $[-1+\beta', 1-\beta']\cup\{1\}$) if the spectrum of $A$ is contained in $[-1, 1-\beta]\cup\{1\}$ (resp. $[-1+\beta, 1-\beta]\cup\{1\}$). We can find an orthonormal basis of $L^2(\C^B)$ in which $M^\sharp$ is block diagonal, each diagonal block being an upper triangular matrix of size $\leq 2$, and the non-diagonal coefficients are uniformly bounded.

This implies that the operator
$$\frac{1}{N^2}\sum_{k=0}^{N-1}\sum_{j=0}^k  q^{2isj}M^{\sharp j}=\frac{1}{N^2}(q^{2is}M^\sharp-I)^{-2}\left( q^{2is(N+1)}M^{\sharp( N+1)}-q^{2is}M^\sharp-N(q^{2is} M^\sharp-I)
\right)$$
has norm $O\left(\frac1N\right)$ on the orthogonal of the constant function (for any real $s$ if the spectrum of $M^\sharp$ is contained in $[-1+\beta', 1-\beta']\cup\{1\}$, or for $q^{2is}$ away from $-1$ if the spectrum of $M^\sharp$ is contained in $[-1, 1-\beta']\cup\{1\}$). This tells us that \eqref{e:neumann2}, and hence \eqref{e:neumann}, is $O\left(\frac1N\right)$.

\subsection{Reduction to the case $D=2$}
Let us now consider Theorem \ref{t:gen} in the case of an operator whose kernel on the tree satisfies $K(x, y)\not=0\Longrightarrow d_{\mathfrak{X}}(x, y)=D.$
Theorem \ref{t:gen}, proven in the case $D=2$, can be applied to the $(q+1)q^{D-1}$-regular graph with vertex set $V_n$ and adjacency matrix $(q+1)q^{D-1} S_D$, with the notation of \eqref{e:sk}. This implies Theorem \ref{t:gen} in the general case.

\bibliographystyle{plain}
\bibliography{biblio}

\end{document}